\theoremstyle{thmstyleone}%
\theoremstyle{thmstyletwo}%
\theoremstyle{thmstylethree}%
\begin{document}

\title[Article Title]{Hierarchical cell identities emerge from animal gene regulatory mechanisms}


\author[1]{\fnm{Anton} \sur{Grishechkin}}

\author[1]{\fnm{Abhirup} \sur{Mukherjee}}

\author*[1]{\fnm{Omer} \sur{Karin}}\email{o.karin@imperial.ac.uk}

\affil*[1]{\orgdiv{Department of Mathematics}, \orgname{Imperial College London}, \orgaddress{\street{Exhibition Road, South Kensington}, \city{London}, \postcode{SW7 2AZ},  \country{UK}}}


\abstract{The hierarchical organisation of cell identity is a fundamental feature of animal development with rich and well-characterized experimental phenomenology, yet the mechanisms driving its emergence remain unknown. The regulation of cell identity genes relies on a distinct mechanism involving higher-order interactions of transcription factors on distant regulatory regions called enhancers. These interactions are mediated by epigenetic regulators that are broadly shared between enhancers. Through the development of a new and predictive mathematical theory on the effects of epigenetic regulator activity on gene network dynamics, we demonstrate that hierarchical identities are essential emergent properties of animal-specific gene regulatory mechanisms. Hierarchical identities arise from the interplay between enhancer competition for epigenetic readers and cooperation through activation of shared transcriptional programs.
We show that epigenetic regulatory mechanisms provide the network with self-similar properties that enable multilineage priming and signal-dependent control of progenitor states. The stabilisation of progenitor states is predicted to be controlled by the balance in activities between epigenetic writers and erasers. Our model quantitatively predicts lineage relationships, reconstructs all known blood progenitor states from terminal states, and explains mechanisms of cell identity dysregulation in cancer and the general differentiation effects of histone deacetylase inhibition. We identify non-specific modulation of enhancer competition as a central regulatory axis, with implications for developmental biology, cancer, and differentiation therapy.
}

\maketitle

\section*{Introduction}

Animals are composed of a wide range of cell types, each with specialised functions that collectively enable the formation of complex systems such as the brain and immune system. During development, cell types emerge through processes of hierarchical differentiation, in which precursor cells progressively adopt distinct identities \cite{waddington1957strategy,briggs2018dynamics}. These processes continue in many adult tissues, including the blood, where stem cells give rise to diverse lineages \cite{maansson2007molecular,ceredig2009models,dzierzak2018blood,weinreb2020lineage}, with cells in these lineages further specialising in response to signals \cite{allman2008peripheral,bluestone2009functional,guilliams2018developmental,ballesteros2020co}. Similarly, neural development generates many distinct neuron types through sequential specification \cite{blankvoort2018marked,soldatov2019spatiotemporal,li2020transcriptional,sharma2020emergence,faure2020single,laddach2023branching,li2024spatiotemporal}.

The hierarchical nature of cell identity is reflected in the underlying dynamics of gene expression and chromatin state. Recent studies revealed that individual cell types exhibit distinct high-dimensional gene expression programs, which are associated with cell-type specific activation of genomic regulatory elements called enhancers \cite{creyghton2010histone,whyte2013master,hnisz2013super,heinz2015selection,saint2016models,blankvoort2018marked}. Along differentiation hierarchies, progenitor cells often co-express at lower levels the expression programs of their target progeny, a phenomenon known as multilineage priming \cite{hu1997multilineage, miyamoto2002myeloid, mercer2011multilineage,brunskill2014single,ohnishi2014cell, kim2014broadly,olsson2016single, zheng2018molecular,briggs2018dynamics,soldatov2019spatiotemporal,martin2021chromatin,singh2022cell} (Figure ~\ref{fig:hierarchical}A). Priming is tightly linked to differentiation potential, with signals that bias toward specific fates modulating the contribution of fate-specific programs in progenitor states. For example, erythropoietin enhances the erythrocyte program in erythrocyte-primed progenitors, while neural crest progenitor potential correlates with relative target program expression \cite{grover2014erythropoietin,herman2018fateid,soldatov2019spatiotemporal,erickson2023transcriptional} (Figure ~\ref{fig:hierarchical}B).

While hierarchical differentiation is often depicted as a tree-like structure, akin to evolutionary phylogeny, this view does not align with our current understanding of cell type specification. For many cell types, fate convergence occurs, where a single cell type can arise through multiple differentiation trajectories \cite{weinreb2020lineage,erickson2023transcriptional,erickson2024unbiased} (Figure ~\ref{fig:hierarchical}C). This is evident in the blood, where many progenitor cell types with overlapping fate potentials have been identified \cite{ceredig2009models}. During regeneration, differentiated cells can undergo de-differentiation that can recapitulate hierarchical differentiation programs \cite{singh2022cell}. Finally, mutations can lead to cellular transformation, sometimes stabilising novel progenitor states, including hybrid identities that blend stem and differentiated expression programs \cite{diffner2013activity,smith2018human} (Figure ~\ref{fig:hierarchical}D).

The ubiquity of hierarchical differentiation raises the question of how it emerges from the underlying molecular regulatory networks. Early models of gene regulatory networks argued that hierarchy may be encoded within the structure of the network itself, which was assumed to be both highly modular and hierarchical  \cite{bolouri2002modeling,betancur2010assembling,peter2011evolution}. However, in many systems, including the blood, gene interaction networks are densely interconnected, without clear hierarchy or modularity \cite{novershtern2011densely,hamey2017reconstructing,trevers2023gene}, and, across cell types, specification is combinatorial, with transcription factors frequently overlapping in expression \cite{hnisz2013super,arendt2016origin,reilly2020unique}. In addition, the underlying molecular mechanisms of animal gene regulation, in which transcription factors have limited specificity compared with their bacterial counterparts, appear to deviate substantially from the strict circuit image proposed in gene regulatory models \cite{newman2020cell}. Despite this complexity, hierarchical differentiation is a robust phenomena with distinct dynamical features. A particularly crucial question concerns the origin and nature of progenitor states - both in normal development and disease conditions.

Here, through the development of a predictive mathematical theory, we propose that hierarchical cell identity is an emergent property of the distinct mechanisms of animal gene regulation. Specifically, we show that hierarchical identity emerges naturally from the interplay between enhancer competition for epigenetic readers and enhancer cooperation in activating overlapping transcriptional programs, with the relative contribution of each process tuned by cells to facilitate hierarchical transitions. 

\begin{figure}[H]
\centering
\includegraphics[width=0.68\linewidth]{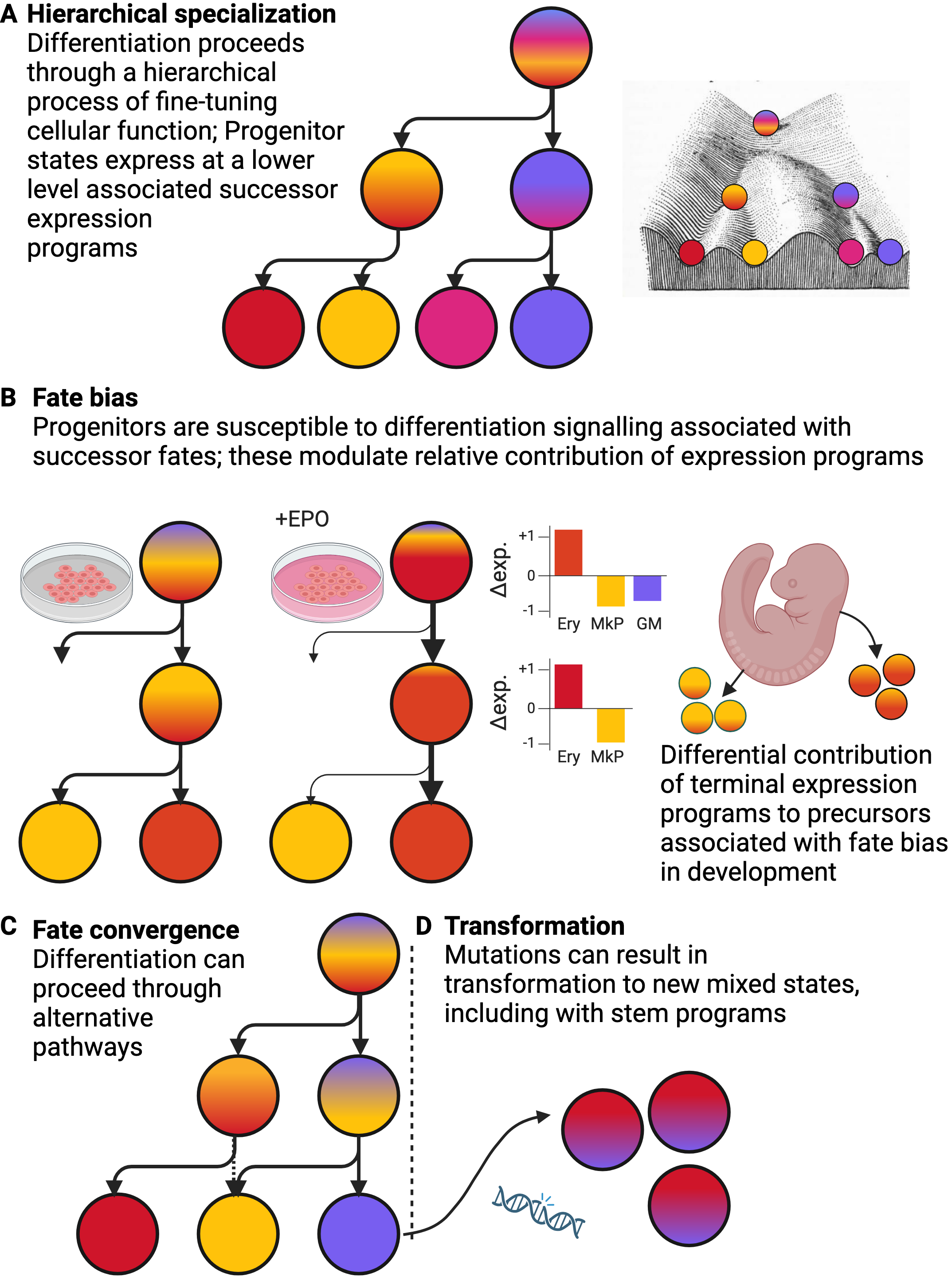}
\caption{\textbf{Features of hierarchical cell identity.} (A) Hierarchical differentiation, based on specific transitions between primed progenitors, is a hallmark of animal development and homeostasis. (B) Progenitors may display fate bias, where an increase in the likelihood of obtaining a target fate is associated with an increase in the relative contribution of the expression program of that fate to the progenitor state. (C) Some target fates may be achieved through alternative differentiation trajectories associated with progenitors with overlapping fate potential. (D) Mutation can result in transformation in which progenitor states are stabilised, and new overlapping progenitor states may be formed. }
\label{fig:hierarchical}
\end{figure}

\subsection*{Model for enhancer-regulated gene expression}

Animal cell types are associated with distinct gene expression patterns, which emerge from a regulatory mechanism based on the differential activation of enhancer elements \cite{creyghton2010histone,whyte2013master,hnisz2013super,narita2021enhancers,karin2024enhancernet}. Enhancers facilitate transcription by recruiting molecular machinery that initiates and enables transcription at distant genetic elements. A given genome may encode for hundreds of thousands of potential enhancer elements, yet, in a given cell type, only a small subset of these will be active and facilitate transcription \cite{heinz2015selection}. 

At the microscopic (single enhancer) level, the activation of enhancer elements involves a sequence of regulatory events unique to metazoans. The process begins with the recruitment of coactivator molecules, including the histone acetyltransferases (HATs) CBP/p300, to specific enhancer sites, by transcription factors \cite{weinert2018time,narita2021enhancers,ferrie2024p300} (Figure~\ref{fig:mechanism}A). These HATs function as epigenetic writers, acetylating histone proteins associated with enhancers. Acetylation is subsequently turned over by epigenetic erasers - histone deacetylases (HDACs), a process that occurs on a timescale of minutes \cite{weinert2018time,narita2021enhancers}. Histone acetylation then enables the recruitment of epigenetic reader molecules, which play a critical role in facilitating transcription. For genes associated with cell-type specific expression, the bromodomain protein Brd4 emerges as a key regulator, essential for recruiting factors that promote transcriptional elongation—a major rate-limiting step in gene expression \cite{chapuy2013discovery,loven2013selective,kanno2014brd4,heinz2015selection,winter2017bet,altendorfer2022brd4}, with other epigenetic readers, including Taf1, also implicated in transcriptional control \cite{narita2021enhancers}. (Figure~\ref{fig:mechanism}B).

At the macroscopic (many enhancer) level, enhancer activities are interconnected through the shared use of transcription factors and cofactor molecules. Although animal transcription factors can recognize and bind specific DNA sequences, their interactions are transient, targeting short and degenerate motifs \cite{ferrie2022structure}. As a result, a single transcription factor may bind to thousands of enhancer elements across the genome. In contrast, transcriptional cofactors do not recognize specific DNA sequences, and their activity is broadly distributed across multiple enhancers. This coupling of enhancers through cofactors occurs at both the epigenetic writing and reading stages. While HDACs are highly abundant, coactivators such as CBP/p300 are often stoichiometrically limiting compared to transcription factors, leading competition between transcription factors over their recruitment \cite{gillespie2020absolute,polansky2020dynamics}. In addition, bromodomain proteins exhibit high mobility, and their rapid redistribution in response to differential enhancer acetylation plays a critical role in modulating enhancer activity \cite{dey2003double,hah2011rapid,step2014anti,brown2014nf,schmidt2015acute,lee2017brd4,mishra2017histone,marie2018hdac,brown2018bet,slaughter2021hdac} (Figure~\ref{fig:mechanism}C). Together, these mechanisms determine gene expression at the whole-cell level.

\begin{figure}
\centering
\includegraphics[width=0.75\linewidth]{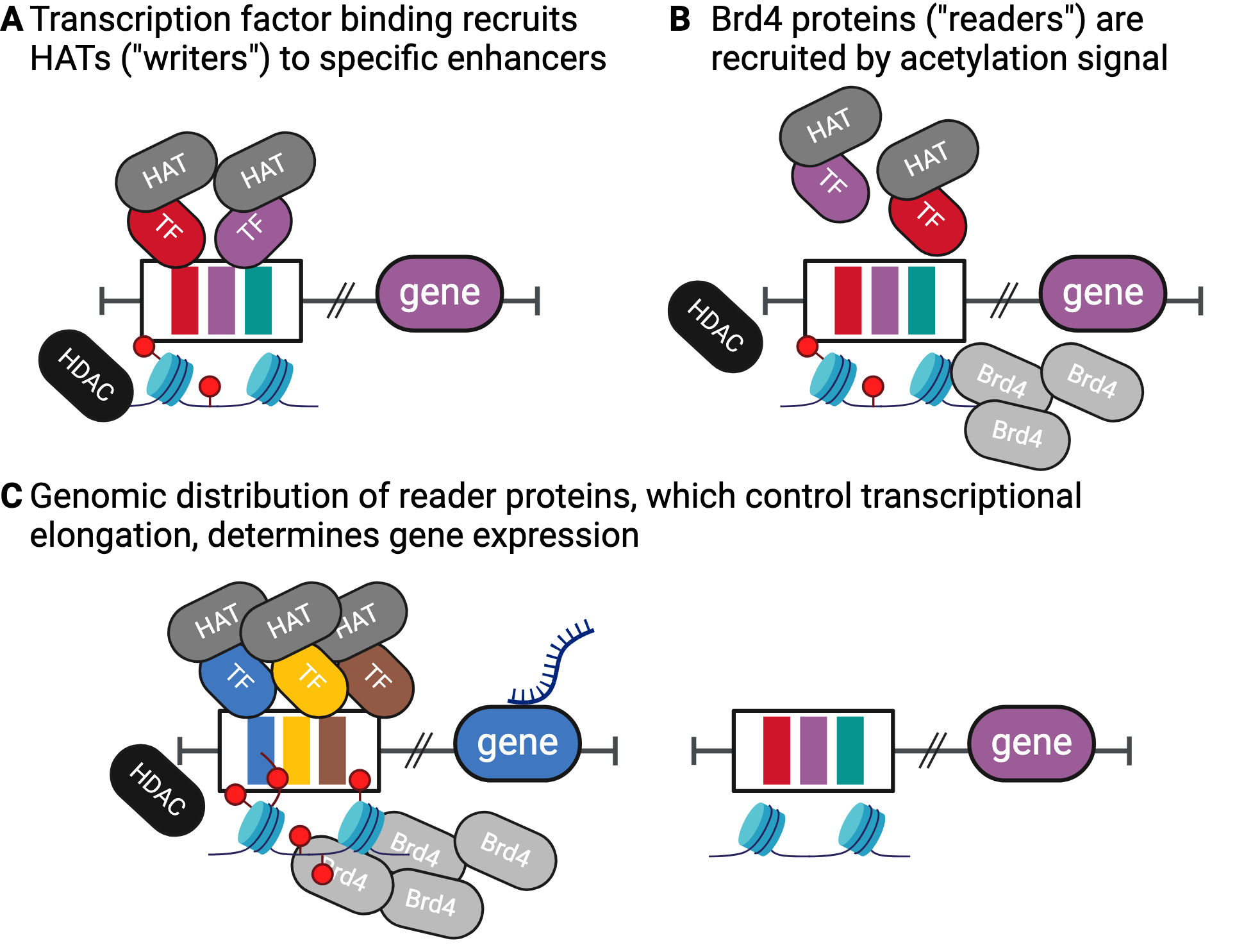}
\caption{\textbf{Mechanism of enhancer-based gene regulation.} (A) Transcription factor binding recruits HAT molecules ("epigenetic writers") to acetylate histone sites on enhancers, which are turned over by HDAC molecules ("epigenetic erasers"). (B) Acetylated histones recruit Brd4 molecules ("epigenetic readers"), (C) Acetylation facilitates competition between enhancers over epigenetic reader recruitment and drives gene expression changes.}
\label{fig:mechanism}
\end{figure}

We first introduce a mathematical framework designed to model gene expression in the context of enhancer coupling, based on the \emph{EnhancerNet} framework introduced in Karin \cite{karin2024enhancernet}. This framework focuses on the combined activities of transcription factors and histone acetylation-associated cofactors, although its principles are general and could be extended to other regulatory mechanisms. Additionally, while our study focuses on experimental results from the analysis of mammalian cells, the underlying mechanisms discussed appear to play a similar role across metazoa \cite{newman2020cell}.
Our analysis considers a set of $K$ enhancers ($i=1,\dots,K$) that regulate the expression of $N$ "core" transcription factors, denoted $x_1,\dots,x_N$, which, in turn, can bind any of these enhancers. These enhancers constitute a subset of all enhancers in the cell, as enhancers controlling genes that are not transcription factors represent outputs, and their exclusion does not affect model dynamics (Appendix). Additionally, we note that genes activated in a broadly cell-type-independent manner ("housekeeping genes") are expressed through mechanisms that are largely independent of enhancer acetylation and are less sensitive to inhibition of p300/CBP or Brd4 \cite{loven2013selective, di2014control,lipinski2020kat3,narita2021enhancers,neumayr2022differential}. We also consider $L$ transcription factors that act as downstream effectors of signalling pathways $y_1,\dots,y_L$. The model is high dimensional, with hundreds of relevant transcription factors, and thousands of relevant enhancer sites.

The mechanism of enhancer histone acetylation involves the recruitment of HATs to specific genomic sites by transcription factors. Transcription factors possess DNA-binding domains that recognize and bind to enhancer DNA sequences, as well as activation domains that interact with transcription factor-binding domains on HATs \cite{ferrie2024p300,inge2024rapid}. This interaction facilitates the recruitment of HATs to the enhancer region, where the chromatin-interacting domain of the HAT catalyzes histone acetylation \cite{narita2023acetylation,kikuchi2023epigenetic,ferrie2024p300}. Histones can be acetylated by HATs on many sites, and brodomain readers respond sensitively to histone hyperacetylation \cite{filippakopoulos2012histone,kikuchi2023epigenetic,narita2023acetylation,weinert2018time}. This is captured by our model (Appendix) whereby the overall histone acetylation on enhancer $i$, denoted, $m_i$, increases with HAT recruitment, which itself depends on binding of transcription factors to the enhancer, 
\begin{equation}
    \dot{m_i} = \kappa_1\left(\sum_{j=1}^{N}{\xi_{i,j}x_j} + \sum_{j=1}^{N}{\upsilon_{i,j}y_j} \right)-\kappa_{-1}m_i
\end{equation}

where $\kappa_1$ is proportional to active HAT molecules, $\kappa_{-1}$ is proportional to active HDAC molecules, $\xi_{i,j}$ denotes the affinity of transcription factor $x_j$ to the enhancer $i$, and $\upsilon_{i,j}$ denotes the affinity of signalling transcription factor $y_j$ to enhancer $i$. Finally, we note that the turnover associated with $m_i$ (determined by $\kappa_{-1}$) is rapid with a timescale of minutes \cite{narita2021enhancers}.

To facilitate transcription, acetylated histones recruit bromodomain proteins, including Brd4, which redistributes on the genome on the timescale of minutes \cite{dey2003double,brown2014nf,schmidt2015acute}. At the timescale of cell identity changes (hours-days), their distribution on relevant enhancer sites is proportional to
\begin{equation}
    p_i = \frac{e^{-f_i}}{\sum_k e^{-f_k}}
\end{equation}
where $f_k$ is the energy associated with Brd4 binding at site $k$. Since Brd4 binding depends sensitively on the degree of histone acetylation \cite{filippakopoulos2012histone,jung2014affinity,slaughter2021hdac}, we consider a first order approximation for the binding energy, $f_k=-\alpha m_k-b_k$, where $b_k$ corresponds to the baseline binding energy of a given site in the absence of acetylation, noting that $b_k$ may be affected by the presence of other chromatin or DNA modifications.  

Finally, the distribution of bromodomain proteins plays a critical role in regulating the expression of enhancer-associated cell identity genes through the mechanism of transcriptional pause-release and the recruitment of Mediator complex \cite{liu2013brd4,bhagwat2016bet,altendorfer2022brd4}. Both genetic deletion and pharmacological inhibition of these proteins lead to dramatic alterations in the expression of cell identity genes and block the expression of key cell identity transcription factors   \cite{di2014control,lee2017brd4,najafova2017brd4,arnold2021brd4,bressin2023high,narita2021enhancers}. As such, we consider the expression dynamics of the core transcription factors as driven by this distribution (Appendix),
\begin{equation}
    \dot{x_j}=\sum_{i=1}^{K}{q_{ij}p_i }-x_j = \sum_{i=1}^{K}{q_{ij}\frac{e^{\alpha m_i+b_i}}{\sum_k e^{\alpha m_k+b_k}} }-x_j 
\end{equation}
where $q_{i,j}$ corresponds to the contribution of enhancer $i$ to the expression of $x_j$, and time is rescaled to unity. Therefore, on the timescale of cell identity changes, gene expression changes of the regulatory network are given by
\begin{equation}
    \begin{split}
        \dot{\textbf{x}} &= \textbf{Q}^T \text{softmax} \left(\beta \Xi \textbf{x} + \beta \textbf{U}\textbf{y}+\textbf{b} \right) - \textbf{x} \\ &= \textbf{Q}^T \text{softmax} \left(\beta \Xi \textbf{x} + \textbf{w} \right) - \textbf{x}  \\&=g(\textbf{x},\beta,\textbf{Q},\Xi,\textbf{w})
    \end{split}
    \label{eq:everything}
\end{equation}
denoting $\beta=\frac{\kappa_1 \alpha}{\kappa_{-1}}$ as the effective "inverse-temperature-like" parameter, $\textbf{Q},\Xi,\textbf{U}$ are the corresponding rate matrices for $q_{i,j},\xi_{i,j},\upsilon_{i,j}$, the vectors $\textbf{x},\textbf{y}$ are the corresponding expression levels of the core and signalling transcription factors $x_j,y_j$, and $\textbf{b}$ is a vector whose entries correspond to $b_i$. We also set $\textbf{w}= \beta \textbf{U}\textbf{y}+\textbf{b}$ as the overall activation of enhancers that is not driven by the core transcription factors. We note that due to competition over HAT proteins we do not expect the dynamics to be sensitive to overall recruitment of bromodomain proteins to enhancers (Appendix). Importantly, while the derivation was based on p300/CBP mediated acetylation and Brd4 recruitment, the dynamics described by Eq.~\ref{eq:everything} are general and capture essential aspects of transcription factor-mediated epigenetic writer recruitment and epigenetic reader competition. 

The dynamics described in Eq.~\ref{eq:everything} involve the activation of enhancers by signalling and core transcription factors, followed by competition among enhancers for the recruitment of epigenetic reader molecules. When signalling pathways activate specific enhancers, they induce hyperacetylation, which drives changes in gene expression by redistributing epigenetic readers, as demonstrated experimentally \cite{brown2014nf,schmidt2015acute,marie2018hdac}. The competitive strength between enhancers can be modulated by altering the activity of epigenetic readers or erasers, which influence the parameter $\beta$. For instance, inhibiting HDAC activity (reducing $\kappa_{-1}$) leads to an overall increase in Brd4 binding to genomic sites \cite{marie2018hdac,slaughter2021hdac}, and, within our modelling framework, is predicted to elevate $\beta$. This enhances the relative competitiveness of different enhancers, potentially driving further changes in gene expression that could, in turn, modulate enhancer activity in a feedback loop.

Eq.~\ref{eq:everything} is general, and it is unclear how it results in cell identity behavior.  In \cite{karin2024enhancernet} we simplified it using a crucial experimental observation - that, across cell types, the transcription factors that control cell identity form dense autoregulatory networks where they co-bind adjacent enhancers \cite{hnisz2013super,saint2016models}, as illustrated in Figure~\ref{fig:crc_coarse}A. These enhancers, which have very similar binding profiles, will be co-activated by the core transcription factors in the same cellular context, and, therefore, we can coarse-grain them into a single "enhancer type". Since these enhancers activate the same transcription factors that co-bind them, after coarse-graining the rows of $\textbf{Q}$ and $\Xi$ will be correlated, and, therefore, Eq.~\ref{eq:everything} can be approximated by a symmetric form:
\begin{equation}
    \dot{\textbf{x}} = \Xi^T \text{softmax}\left(\beta \Xi \textbf{x} + \textbf{w} \right) - \textbf{x}
    \label{eq:enhancernetsym}
\end{equation}
The symmetric dynamics are associated with gradient descent along an energy (potential) function:
\begin{equation}
           \dot{\textbf{x}}=-\nabla V(\textbf{x},\beta, \Xi,\textbf{w})
           \label{eq:energy}
\end{equation}
with $V(\textbf{x},\beta,\Xi,\textbf{w})=-\beta^{-1}\log{Z } + \frac{1}{2} \textbf{x}^T\textbf{x}$, where $Z$ is the normalising constant of $\text{softmax}$ in Eq.~\ref{eq:enhancernetsym}. 

\section*{Results}
  
\subsection*{Emergence of hierarchical cell identity in transcription factor-enhancer interaction networks}

Eq.~\ref{eq:enhancernetsym} recapitulates a key aspect of cell identity specification, namely, at sufficiently large $\beta$, each of the observed cell identity profiles is indeed a stable configuration (attractor state) of the gene regulatory network. As such, the matrix $\Xi$ can be approximated from observed gene expression profiles. Using this initialisation, Eq.~~\ref{eq:enhancernetsym} makes a range of predictions on cell identity dynamics that do not require fitting unobserved parameters. One prediction regards the effect of transcription factor over-expression, where it correctly identifies the reprogramming effect of established recipes \cite{karin2024enhancernet}. The other prediction concerns hierarchical differentiation. A decrease in $\beta$, followed by a slow increase in $\beta$, transitions gene expression dynamics through a bifurcating landscape. When initialised with the target expression patterns of 11 major blood lineages, this process appears to recapitulate the blood progenitor landscape, including alternative differentiation trajectories. These observations suggest that changes in $\beta$ in Eq.~\ref{eq:enhancernetsym} may underlie progenitor states in hierarchical differentiation landscapes.

A regulatory mechanism supporting hierarchical cell identity must achieve two critical objectives: first, it must enable the stable maintenance of "coarse-grained" cellular identities (each linked to multiple "fine-grained" subtypes), and second, it must facilitate transitions between these states. This includes transitions toward "fine-grained" identities (as in differentiation) and reversions to "coarse-grained" identities, such as during shifts from stem cells to multipotent progenitor states or during de-differentiation in tissue repair. This dual capacity is not an inherent feature of dynamical systems, and it is particularly striking when considering the highly interconnected architecture of transcriptional networks, which generate combinatorial cell identities.

To understand how hierarchical regulation emerges from the dynamics of Eq.~\ref{eq:enhancernetsym}, we observe that as $\beta$ decreases, competition between enhancers relaxes, supporting the co-activation of enhancers with more distinct binding patterns that activate overlapping transcriptional programs. The intuition behind our analysis is illustrated in Figure~\ref{fig:crc_coarse}B. We first coarse-grain the enhancers in autoregulatory motifs into enhancer types, as they are always co-activated in the same cellular context. When $\beta$ is sufficiently large, competition between enhancer types is strong, and, therefore, enhancer types will uniquely determine cell identity. At lower levels of $\beta$, different enhancer types may be co-activated, and, by activating overlapping gene expression programs, they may support each other's activity.  Remarkably, we will show that in this case, different enhancer types coarse-grain and act as effective enhancer types that drive an equivalent coarse-grained expression program, resulting in hierarchical cell identity.

\begin{figure}
\centering
\includegraphics[width=0.75\linewidth]{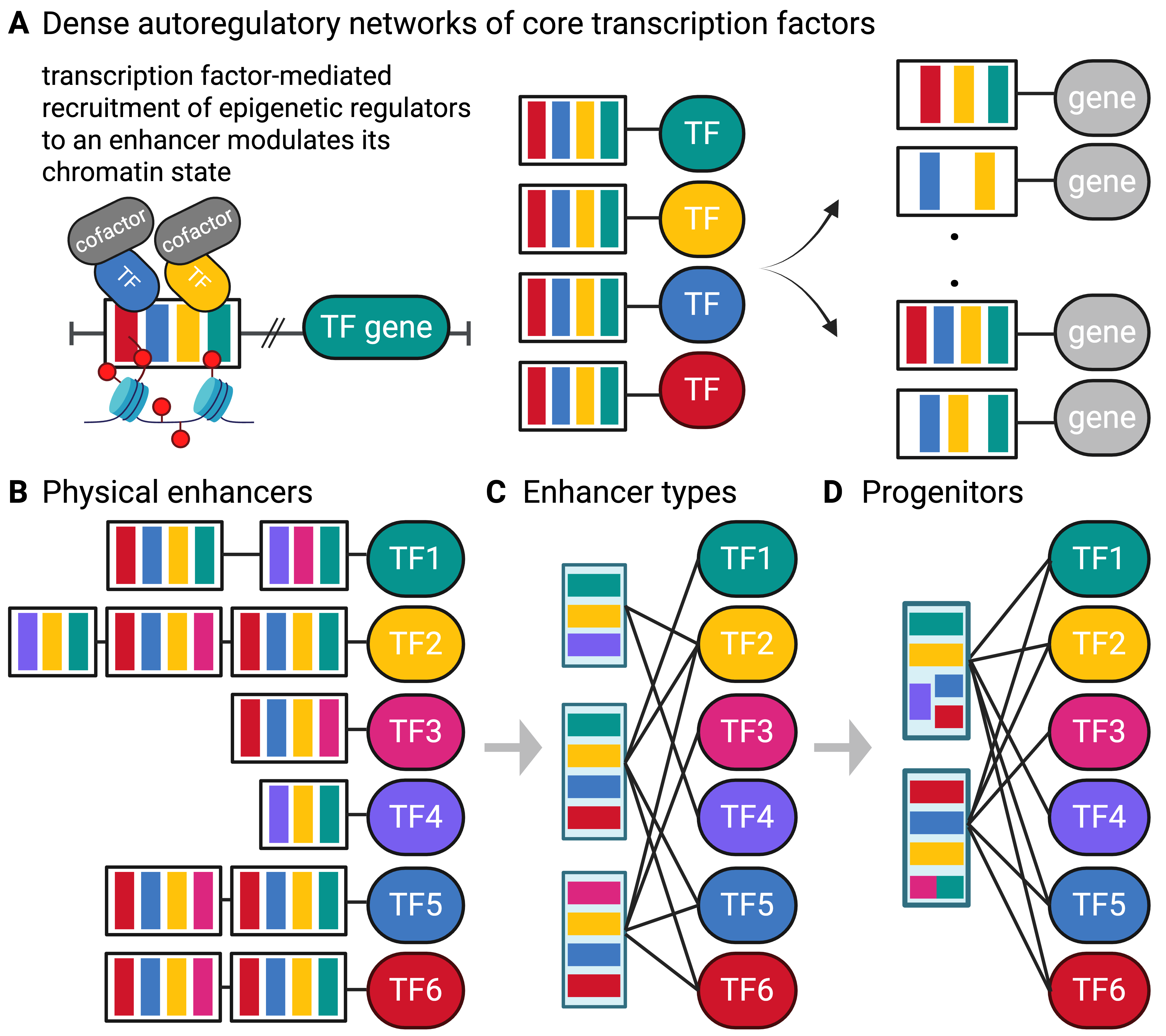}
\caption{\textbf{Scheme of process underlying emergence of hierarchical cell identity. }(A) Autoregulatory motifs in cell identity networks are made of enhancers with similar binding profiles that are co-bound by the genes they activate, (B-D) Coarsening of autoregulatory motifs (B) into enhancer types at high level of enhancer competition (C), and further coarsening into progenitors at lower level of enhancer competition (D). }
\label{fig:crc_coarse}
\end{figure}

To make this quantitatively precise, we show that specific weighted averages of terminal expression programs act as effective attractor programs at intermediate values of $\beta$ through self-similarity of the dynamics of Eq.~\ref{eq:everything}. The analysis hereafter considers patterns (rows of $\Xi$) of magnitude unity; a straightforward generalisation to patterns of unequal magnitude is provided in the Appendix. To see how progenitor patterns emerge from  Eq.~\ref{eq:everything}, consider a subset of indices $\mathcal{L}\subseteq\{1\dots K\}$ and denote the transformation $\textbf{Q}\xrightarrow{\mathcal{L}}\textbf{Q}',\Xi\xrightarrow{\mathcal{L}}\Xi',\textbf{w}\xrightarrow{\mathcal{L}} \textbf{w}'$, where $\textbf{Q}'=\textbf{Q}_{i\notin\mathcal{L}}\cup\{Q^*_{\mathcal{L}}\}, \;\Xi'=\Xi_{i\notin\mathcal{L}}\cup\{\Xi^*_{\mathcal{L}}\},\;\textbf{w}'=\textbf{w}_{i\notin\mathcal{L}}\cup\{w^*_{\mathcal{L}}\}$, with:
\begin{equation}
\label{eq:selfsimilarity}
\begin{split}
    Q^*_{\mathcal{L}}&=\frac{\sum_{i\in\mathcal{L}} Q_i e^{w_i }}{\sum_{i\in\mathcal{L}}{e^{w_i}}}  \\
    \Xi^*_{\mathcal{L}}&= \frac{\sum_{i\in\mathcal{L}} \Xi_i e^{w_i }}{\sum_{i\in\mathcal{L}}{e^{w_i}}} \\
    w^*_{\mathcal{L}}&=\log{\sum_{k\in\mathcal{L}}{e^{w_k}}}
\end{split}
\end{equation}
where $Q_i$ is the $i$-th row of $\textbf{Q}$, $\Xi_i$ is the $i$-th row of $\Xi$, and $w_i$ is the $i$-th element of $\textbf{w}$. The transformation coarse-grains the dynamics as the patterns and weights associated with the transformed dynamics are weighted averages of the original patterns. We refer to $ \Xi^*_{\mathcal{L}}$ as the \emph{progenitor pattern} associated with $\mathcal{L}$. We refer to the transformed dynamics as $g_{\mathcal{L}}(\textbf{x},\beta,\textbf{Q},\Xi,\textbf{w})$ and refer to the transformed potential (if it exists) as $V_{\mathcal{L}}(\textbf{x},\beta,\Xi,\textbf{w})$. Note that if the rows $\Xi_i, i\in\mathcal{L}$ are identical then the transformation leaves the dynamics unchanged; that is, when $\Xi_{i}=\Xi_{k}$ for $i,k\in \mathcal{L}$ then $g_{\mathcal{L}}(\textbf{x},\beta,\textbf{Q},\Xi,\textbf{w})=g(\textbf{x},\beta,\textbf{Q},\Xi,\textbf{w})$. This observation explains the emergence of symmetric dynamics due to the auto-regulatory motif described in the previous section (Figure~\ref{fig:crc_coarse}B,C), and we will therefore assume that the "baseline" dynamics are captured by the symmetric form Eq.~\ref{eq:enhancernetsym}, and note that  the transformation preserves symmetry, that is, if $\Xi_i=Q_i$ for all $i\in\mathcal{L}$ then $ Q^*_{\mathcal{L}}= \Xi^*_{\mathcal{L}}$.  

The coarse-graining  of a subset of expression patterns into a single progenitor pattern is associated with a $\beta$ threshold denoted $\beta_{\mathcal{L}}$. In the symmetric case, where the patterns are isolated and are of equal angle $\mu$ with each other, they destabilise the progenitor pattern $   \Xi^*_{\mathcal{L}}$ at a critical $\beta = 2 |\mathcal{L}| \mu^{-2}$.  More generally, for patterns of unequal angle, the coarse-grained dynamics are similar to the original dynamics up to a correction term, 
 \begin{equation}
     V_{\mathcal{L}}(\textbf{x},\beta,\Xi,\textbf{w}) \approx V(\textbf{x},\beta,\Xi',\textbf{w}') +  O(\beta \mu_{\mathcal{L}}^2)
 \end{equation}
where the correction term is at most $\frac{1}{4} \beta \mu_{\mathcal{L}} ^2$. The progenitor $  \Xi^*_{\mathcal{L}}$ thus has a \emph{persistence length} which depends on the statistical structure of the associated patterns $\Xi_{i\in\mathcal{L}}$ and is of order $ \beta_{\mathcal{L}}\approx \mu_{\mathcal{L}}^{-2}$. Thus, as $\beta$ decreases, subsets of enhancer types activating overlapping transcriptional programs may become effective enhancers of the dynamics (Figure~\ref{fig:crc_coarse}D, Figure~\ref{fig:coarsening}), with the transition occurring at larger $\beta$ for enhancer types with more overlapping target programs.

Importantly, the coarse-graining transformation can be applied to overlapping sets, whose persistence lengths can be distinct. For example, two overlapping pairs of patterns may be similar $\angle ({\Xi_i}, {\Xi_k})=\mu,\angle ({\Xi_i}, {\Xi_l})=\mu$,  while the third pair is less similar $\angle ({\Xi_l}, {\Xi_k})=2\mu$ , and the persistence length of the progenitor of ${\Xi_l,\Xi_k}$ is much shorter compared with the progenitors of $\Xi_i,\Xi_k$ and $\Xi_i,\Xi_l$. Finally, while the persistence length can capture precisely patterns of stabilisation and destabilisation of progenitors in strictly hierarchical sets of patterns with strong separation in  pattern similarities, in other cases more complex stability properties can emerge. However, by Eq.~\ref{eq:selfsimilarity}, a subset of patterns $\mathcal{L}$ become coupled through their progenitor $\Xi_{\mathcal{L}}^*$ for values of $\beta$ smaller than their persistence length.

\begin{figure}[H]
\centering
\includegraphics[width=0.7 \linewidth]{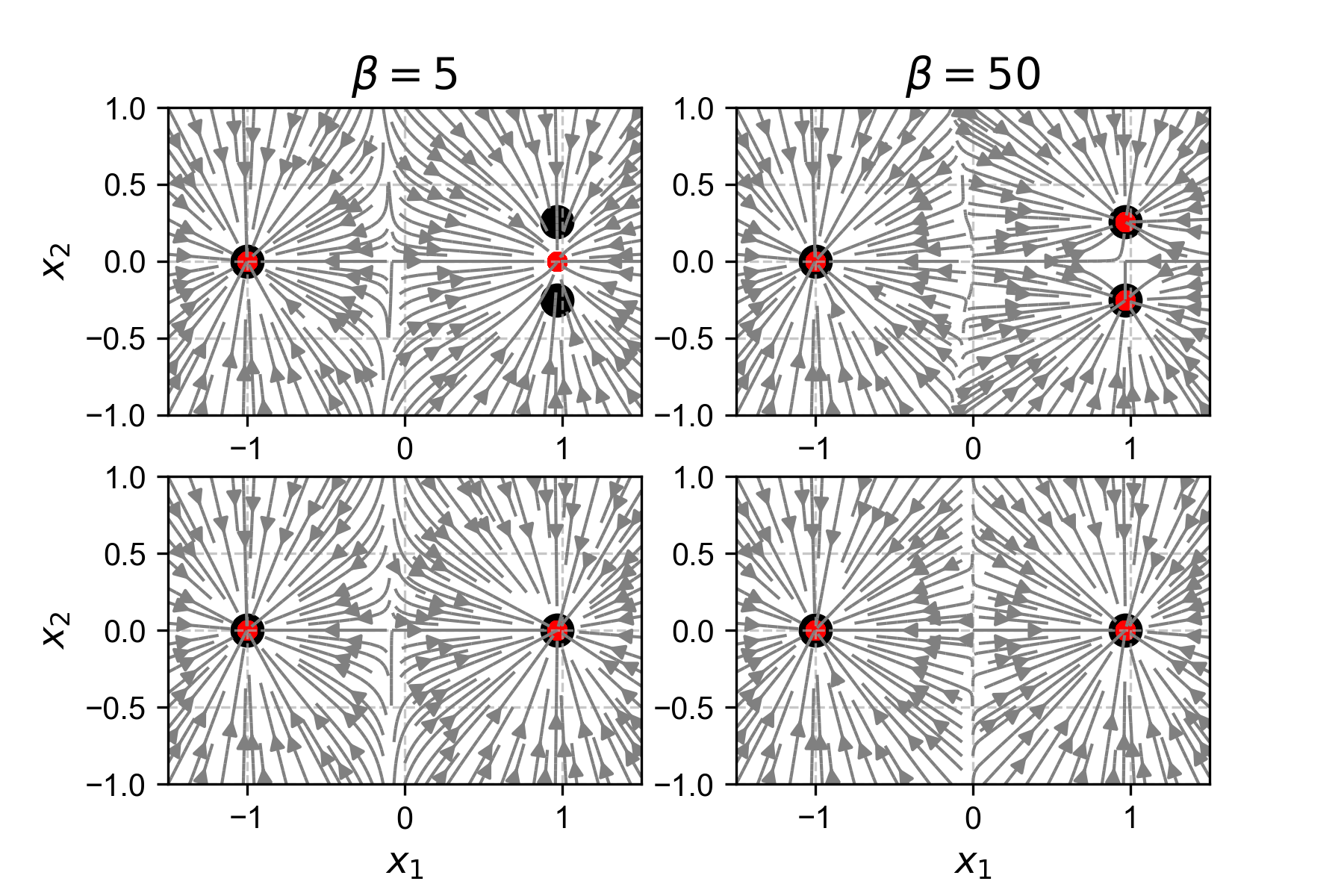}
\caption{ \textbf{Hierarchical coarsening of dynamics.} (A) Stream plots, patterns (black) and stable steady-states (red) for the dynamics given by Eq.~\ref{eq:enhancernetsym} and parameterized with $\textbf{w}=\textbf{0}$ (upper panels) or the dynamics where the two patterns associated with positive $x_1$ are coarse-grained by Eq.~\ref{eq:selfsimilarity} (lower panels). Coarse-graining is accurate for $\beta\ll\beta_{\text{crit}}$ (here $\beta_{\text{crit}} = 15$) but fails for $\beta>\beta_{\text{crit}}$. } 
\label{fig:coarsening}
\end{figure}

\subsection*{Hierarchical control of cell identity transitions}

The analysis so far indicates that different expression patterns, specified by autoregulatory circuits, as well as their weighted combinations, act as effective patterns of the dynamics for different values of $\beta$. The dependence on the maximal angle in the subset suggests that lower values of $\beta$ can support the stable expression of larger subsets of more distinct expression patterns. We now show that the weight vector $\textbf{w}$, which is modulated by signalling, plays a crucial role in determining which expression patterns achieve this stability.

In general, under the dynamics of Eq.~\ref{eq:enhancernetsym}, an increase of $w_i$ (the associated $\textbf{w}$ entry of pattern $i$) results in an increase in the basin of attraction of pattern $i$, at the expense of competing patterns. As an example, consider two stable competing patterns $\Xi_i, \Xi_k$ with an angle $\angle ({\Xi_i}, {\Xi_k})=\mu$. When $w_i$ increases to a sufficient extent, a bifurcation will occur where one of the competing patterns becomes destabilised, resulting in reprogramming. Thus, expression patterns associated with larger $\textbf{w}$ are more likely to be stable cell types.

While the weight vector entries of terminal and progenitor patterns appear mathematically indistinguishable, their biological interpretations are distinct. The weight entries of terminal patterns are directly interpretable within the physical framework of Eq.~\ref{eq:everything}, reflecting a baseline level associated with persistent chromatin and DNA modifications, as well as the binding of signaling transcription factors upstream of the cell identity network. In contrast, the weight entries of progenitor patterns are log-sum-exp transformations of the weights of their constitutive patterns. This distinction has two important implications for our analysis. The first is that progenitor patterns always have larger weights than their constitutive patterns, with progenitors of larger subsets having larger associated $\textbf{w}$ entries. The second is that increasing (or decreasing) the associated $w_i$ of a physical enhancer propagates upwards with both (a) increasing (resp. decreasing) the $\textbf{w}$ entry of its associated progenitor states, and (b) increasing (resp. decreasing) the contribution of its associated expression pattern  $\Xi_i$ to its progenitors. Crucially, this effect is inherently modular, with other progenitor states remaining unperturbed despite potentially sharing overlapping expression programs.

The modulation of $w_i$ thus controls both progenitor identity and the basin of attraction of $\Xi_i$. While the basin of attraction of $\Xi_i$ increases with $w_i$, that is, $\Xi_i$ is attractive from a wider range of states, the identities of progenitors associated with $\Xi_i$ become more similar to $\Xi_i$. Thus, modulation of $w_i$ changes the likelihood that a progenitor of $\Xi_i$ will be within its basin of attraction. Our analysis similarly extends to progenitor states, as an increase in $w_i$ leads to an increase in the basins of attractions of all progenitor patterns, due to an increase in their associated weights, and increases their similarity compared with competing patterns that do not incorporate $\Xi_i$. 

The observation on how $\textbf{w}$ controls progenitor identity and basins of attraction suggests that an annealing mechanism, whereby $\beta$ is decreased, and then slowly increased, can be used to transition between attractor states (Figure~\ref{fig:annealing}). Namely, modulation of $\textbf{w}$, such as by signalling, can control the end states of the annealing process. In the presence of noise or variability in parameters, an increase in $w_i$ can increase the \emph{rate} at which pattern $\Xi_i$ is formed. A feedback controller that adjusts $w_i$ according to the production rate of different patterns can therefore control the fractions of patterns produced.

\begin{figure}[H]
\centering
\includegraphics[width=1 \linewidth]{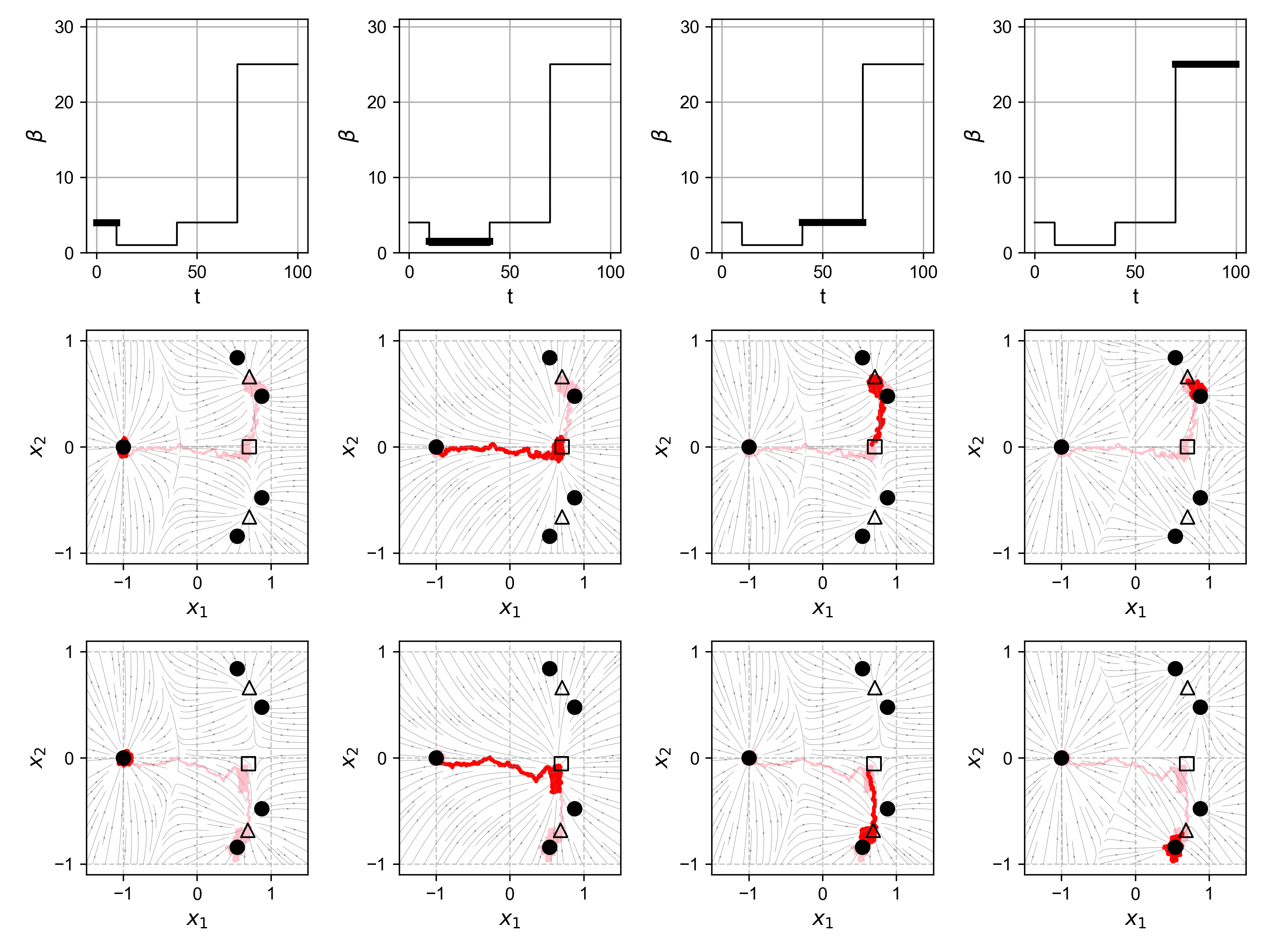}
\caption{ \textbf{Hierarchical coarsening of dynamics allows for control of transitions through progenitor states.} Enhancer transitions occur by a decrease followed by an increase in enhancer competition ($\beta$). Upper panel: $\beta$ input. Middle and lower panels: corresponding simulations of the model with the addition of noise to allow for variation in outcomes (Appendix). Bold red lines denote simulation outcome at the time points corresponding to bold input segments. Patterns (rows of $\Xi$) are marked as black circles, predicted bi-progenitor states are marked as empty triangles, and the predicted di-progenitor state is marked as an empty square. Stream plots of the correspondign $\beta$ value are plotted in the background of each panel. Middle panels: $\textbf{w}=0$, lower panels, $\textbf{w}=\{0,0,0,0.25\}$, with the last entry corresponding to the lowest pattern.} 
\label{fig:annealing}
\end{figure}

\subsection*{Emergence of hierarchical state transitions in hematopoiesis}

We now apply the framework to study the origin of progenitor states in blood formation.  Blood formation, or hematopoiesis, occurs through the differentiation of a stem cell population capable of generating all blood lineages. This process follows a well-defined hierarchical structure, starting with stem cells transitioning into multipotent progenitors, which then differentiate into more restricted cell types, ultimately giving rise to one of 11 major lineages. The potential of blood progenitors and their differentiation trajectories have been extensively characterized using various experimental approaches, including in vitro culture assays, transplantation studies, single-cell gene expression and chromatin analysis, and fate mapping \cite{ceredig2009models,laurenti2018haematopoietic}. Progenitor states can exhibit self-renewal and longevity, suggesting they represent stable attractors within the gene regulatory network. Furthermore, after acquiring their initial differentiated identity, many blood cell types undergo additional specialisation, such as during tissue migration or in response to immune challenges. 

A central question in hematopoiesis concerns the origin and nature of blood progenitor states. These states exhibit low-level expression of transcriptional programs associated with their descendant cell types \cite{hu1997multilineage, miyamoto2002myeloid, mercer2011multilineage, olsson2016single, zheng2018molecular, martin2021chromatin} and are responsive to differentiation signals. Importantly, different progenitors can possess partially overlapping fate potentials, leading to deviations from a strict hierarchical tree structure and enabling alternative differentiation routes to the same terminal fate \cite{ceredig2009models}. While, in theory, a vast number of progenitor states could exist (on the order of $2^{11}=2048$), far fewer have been observed. This discrepancy raises the question of why these specific sets of progenitors and differentiation trajectories are favored.

The model offers three key predictions to address this: (a) differentiation through multilineage priming reflects changes in $\beta$ that facilitate transitions between stem and differentiated states; (b) progenitor states with longer persistence lengths, compared to competing states, are more likely to be observed, as they act as attractor states across a broader range of $\beta$ values during the annealing process; and (c) among competing progenitor states with similar persistence lengths, those that encompass a wider range of terminal states are more likely to be observed due to their larger associated $\textbf{w}$ entries.

To test whether these predictions explain the observed progenitor states in blood formation, we analyzed gene expression data of transcription factors in blood lineages to compute the persistence lengths of all possible progenitor states (Figure~\ref{fig:blood}A). Observed progenitor states, as reviewed in \cite{ceredig2009models, brown2015lineage}, were represented as pie charts with black circumferences, reflecting their composition. Recently identified progenitors, such as the basophil/megakaryocyte/erythrocyte progenitors \cite{wanet2021cadherin, park2024ontogenesis} and the eosinophilic/neutrophilic progenitor \cite{metcalf2002stem, kim2017terminally, kwok2020combinatorial}, were highlighted with red circumferences, while unobserved states were marked as grey dots. Our analysis revealed that observed restricted-potential progenitors exhibit longer persistence lengths compared to their competitors, while at lower $\beta$ values, multipotent progenitor states dominate. This analysis correctly predicts the identity of the progenitor states.

We use blood as a model system to illustrate how signalling can regulate the relative production of terminal lineages by biasing progenitor states, thereby explaining observed effects such as erythropoietin's influence on blood differentiation. Our analysis also demonstrates how the transformation defined by Eq.\ref{eq:selfsimilarity} can predict progenitor expression patterns. Following the approach of Karin \cite{karin2024enhancernet}, we examine the dynamics of annealing for blood lineages in the presence of noise, which introduces variability in the annealing outcomes. Additionally, we define $\textbf{w}$ through proportional negative feedback, simulating the body's regulatory mechanisms that balance cellular production \cite{stanoev2022robust, simons2024tuning}. This results in a balanced differentiation landscape, where all fates are proportionally represented through a bifurcating process. Increasing $w_i$ enhances the production rate of lineage $i$ by amplifying the contribution of its expression pattern to progenitor states. 

To illustrate this, we increased the $\textbf{w}$ entry corresponding to the erythrocyte fate (Figure~\ref{fig:blood}C,D). This adjustment both increased the erythrocyte production rate and increased the relative contribution of the erythrocyte expression profile to its associated progenitor states, effectively recapitulating the known effects of erythropoietin on red blood cell production \cite{grover2014erythropoietin, eisele2022erythropoietin} (Figure~\ref{fig:blood}D). Such progenitor-biasing mechanisms may be a common feature in developing tissues \cite{soldatov2019spatiotemporal, moreau2021single, nusser2022developmental}. Finally, the identity of progenitor states and their persistence lengths are accurately captured by the coarse-graining transformation (Figure~\ref{fig:blood}E).

\begin{figure}[H]
\centering
\includegraphics[width=\linewidth]{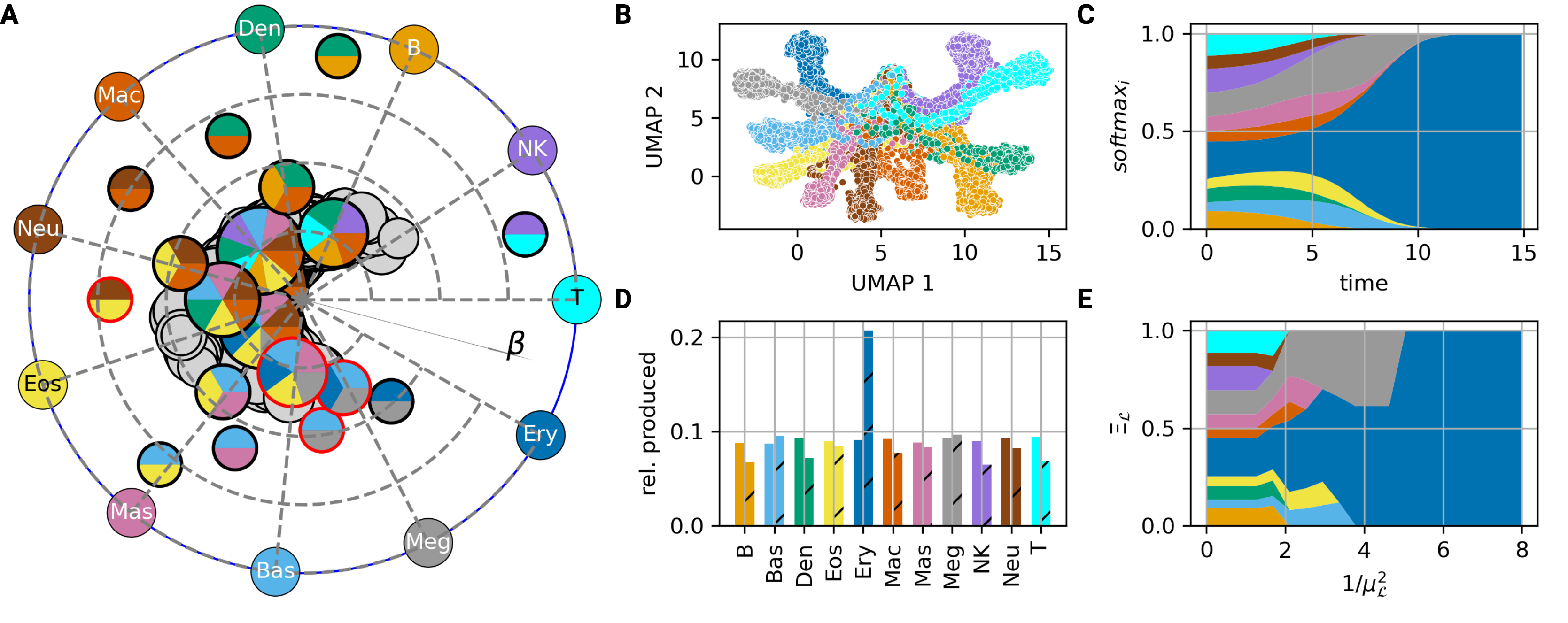}
\caption{\textbf{Hierarchical regulation of cell identity in blood formation.} (A) Progenitor states were plotted at a radius proportional to their corresponding persistence length, set by $1/\mu_{\mathcal{L}}^2$. (B-D) Simulations of annealing on hematopoietic lineages, with $\textbf{w}$ set by a proportional negative feedback controller with an error proportional to the imbalance of blood cell production. In all simulations the noisy version of Eq.~\ref{eq:enhancernetsym} was used (Appendix).  (B) Annealing trajectories were plotted using Uniform Manifold Approximation and Projection (UMAP) and colored by terminal cell type. (C) average contribution of different blood lineage expression programs to $\textbf{x}(t)$ during annealing that results in erythrocytes, given by $\text{softmax}\left(\beta \Xi \textbf{x} + \textbf{w} \right)$. (D) The relative production of different blood cell types, with the parameters inferred in (B-C) (blank) and with the $\textbf{w}$ entry of erythrocytes increased by 10\% (hatched), demonstrating how signalling can increase the relative production of different cell types in the annealing procedure.
(E) Heuristic prediction of progenitor states during an annealing process that results in erythrocyte differentiation. At each persistence length the largest progenitor state that is associated with an erythrocyte lineage is selected and plotted according to the relative contributions of each blood lineage to its expression profile, given by the transformation Eq.~\ref{eq:selfsimilarity}. The matrix $\Xi$ was initialised from the averaged mouse gene expression profiles of 11 major blood lineages from Haemopedia \cite{choi2019haemopedia}, initialised as in Karin \cite{karin2024enhancernet}. Simulation parameters are provided in the Appendix.}
\label{fig:blood}
\end{figure}

\subsection*{Pathways to dysregulation in cancer and implications for differentiation therapy}

The hierarchical cell identity model provides a framework for analyzing and predicting the dysregulation of cell identity in cancer. A hallmark of many cancers is the profound disruption of cellular identity, often driven by failures in differentiation processes \cite{hanahan2022hallmarks}. This phenomenon is particularly evident in the blood system, where numerous blood cancers arise from the inability of progenitor states to undergo proper maturation \cite{bispo2020epidemiology}.

Our analysis shows that, at a given level of $\beta$, progenitor states represent stable cell types with long-term persistence. Differentiation requires an increase in $\beta$ that results in appropriate cell fate bifurcations. Consequently, failures in differentiation can result from two distinct forms of dysregulation: either $\beta$ is maintained at an inappropriately low level, or the critical threshold of $\beta$ required for differentiation is elevated. Both mechanisms appear to play significant roles in cancer development.

The model predicts that inhibiting HDACs will increase $\beta$, while inhibiting HATs will decrease it. These predictions are consistent with experimental observations \cite{kang2004loss,iyer2004p300,tou2004regulation,pasqualucci2011inactivating,narita2021enhancers,likasitwatanakul2024chemical}. Broad inhibition of HDACs directly induces differentiation in cancers characterized by maturation failures, as well as in multipotent progenitors, resulting in cell cycle arrest and the upregulation of lineage-specific gene expression \cite{marks2000histone,gottlicher2001valproic,hsieh2004histone,san2019hdac,egolf2019lsd1,anastas2019re,xie2021targeting,likasitwatanakul2024chemical}. Within our modelling framework, HDAC inhibitors are predicted to decrease $\kappa_{-1}$ and therefore increase $\beta$, resulting in differentiation.

In addition to the dysregulation of epigenetic processes, many cancers—particularly blood cancers—are driven by mutations that disrupt the activity of transcription factors governing cell identity \cite{sive2014transcriptional}. These mutations often involve deletions or regulatory changes caused by chromosomal translocations. Consistent with the predictions of the model, transcription factor alterations linked to differentiation failures frequently occur in factors that are strongly and differentially expressed between cell fates. Notable examples include Pu.1 and Gata2, which are strongly and differentially expressed in myeloid lineages \cite{rosenbauer2004acute,hahn2011heritable,will2015minimal}, and Pax5, which is uniquely expressed in B cells \cite{mikkola2002reversion,cobaleda2007conversion}. Similarly, certain transcription factors are highly expressed in stem cells compared to differentiated lineages, and their dysregulation can lead to the aberrant stabilisation of a mixed stem-differentiated state. For instance, Lmo2 and Tal1, which are strongly expressed in hematopoietic stem cells but downregulated in lymphocytes, can drive lymphoblastic leukemias when mis-expressed \cite{mccormack2010lmo2,cleveland2013lmo2,sanda2017tal1}. Erg, another key transcription factor, is typically downregulated during stem cell differentiation, but its upregulation promotes a mixed stem/differentiated state in both myeloid \cite{carmichael2012hematopoietic,diffner2013activity} and lymphoblastic leukemias \cite{carmichael2012hematopoietic,thoms2011erg}.

Our analysis suggests that a critical mechanism underlying transformation following transcription factor dysregulation is a change in the persistence length of associated progenitor states, which follows from a larger overlap between their transcriptional profiles following mutation. This perspective provides a straightforward explanation for the effectiveness of HDAC inhibitors in differentiation therapy for blood cancers \cite{marks2000histone,gottlicher2001valproic,bots2014differentiation,san2019hdac,wang2020role}, as they may help restore normal differentiation by increasing $\beta$ further to facilitate differentiation.

\section*{Discussion}

In this study we developed a predictive mathematical framework to explain the mechanistic basis of progenitor states and hierarchical differentiation landscapes in animal gene regulatory networks. We demonstrate that hierarchical cell identity is an essential outcome of the epigenetic write/erase/read mechanisms that control animal gene expression. Our model provides specific predictions at multiple levels, including on the perturbation of epigenetic regulators, transcription factors, and enhancers. We identify cell types as emergent properties reflecting the balance of enhancer cooperation and competition and demonstrate how progenitor states can facilitate sensitive state transitions through progenitor biasing.

From a mathematical perspective, our analysis relates animal gene regulation through enhancer activation to models of associative memories in neural networks. Cell identity behavior in animals exhibits many qualitative similarities to associative memory networks \cite{lang2014epigenetic,boukacem2024waddington,lang2014epigenetic,pusuluri2017cellular,karin2024enhancernet}. Different cell types in an animal have unique gene expression profiles that remain stable throughout its life. These profiles correspond to the 'memory patterns' of the gene regulatory network. Dynamic retrieval of these expression patterns is crucial during differentiation processes, such as during development or in the self-renewal of adult tissues from stem populations. Our analysis identifies the gene regulatory networks that control cell identity with models known as Dense Associative Memories, which allow for the storage and retrieval of many memory patterns through higher order interactions \cite{krotov2016dense,demircigil2017model,krotov2020large,ramsauer2020hopfield,lucibello2024exponential}, providing essential functional benefits for cell identity regulation \cite{karin2024enhancernet}. Through this mathematical analogy, Eq.~\ref{eq:enhancernetsym} is relevant to a wide range of applications in neuroscience and machine learning \cite{ramsauer2020hopfield,ambrogioni2023search,spens2024generative}. Our discovery that the attractor landscape at intermediate $\beta$ is dominated by self-similar dynamics may thus have broad implications, including for understanding phenomena such as learning generalisation in neural networks \cite{pham2024memorization}.

Unlike classical gene regulatory models that presuppose hierarchical or modular network architectures \cite{bolouri2002modeling,betancur2010assembling,peter2011evolution}, our framework reveals hierarchy as an emergent property arising from fundamental enhancer dynamics. By modeling the interplay of enhancer cooperation and competition, we unify diverse phenomena—from combinatorial cell fate specification and multilineage priming to progenitor state plasticity and perturbation responses—without imposing any a priori hierarchical structure on the regulatory network. Our analysis also demonstrates that enhancer competition enables a highly modular control of differentiation, even in the absence of modularity at the level of the transcriptional network. Signalling pathways that activate a specific transcriptional program through its enhancers will selectively perturb the associated progenitor states without impacting other states. This finding is unexpected, given that transcription factors are widely shared across different cell types, resulting in densely interconnected regulatory networks at the transcription factor level. Modularity emerges only when regulation is examined through the lens of enhancer-transcription factor interactions. Our findings highlight that incorporating enhancer structure—whether implicitly through symmetry considerations, as in this study, or explicitly—is crucial for accurately modeling the regulation of cell identity.

Within our modelling framework, competing fates are resolved by global modulation of enhancer competition. From a functional point of view, this modulation is essential to allow for sensitive transitions between cellular states, such as stem and differentiated states, in a signalling-sensitive manner. While, in principle, many biological mechanisms may underlie enhancer competition, a key candidate mechanisms is differential recruitment of bromodomain proteins, including Brd4, by differential enhancer acetylation. Bromodomain proteins are specifically associated with the regulation of cell identity genes, and, as such, their modulation allows for perturbation of the cell identity network without dysregulating more ubiquitously expressed housekeeping genes. Lower $\beta$ states are thus predicted to be associated with broader enhancer-associated transcription due to broader sharing of transcriptional regulators between enhancers. 

Our model primarily examines the role of HDACs and HATs in the global modulation of enhancer competition and correctly predicts their effects on differentiation. However, additional pathways for this response are also possible. For instance, Myc proteins are important regulators of differentiation across various cell types, transitioning from high to low expression levels across many differentiation hierarchies. They exhibit anti-differentiation effects that may be independent of their pro-proliferation functions \cite{ryan1997cell,leon2009inhibition,bahr2018myc}. How do Myc proteins mediate this effect? Myc proteins has been demonstrated to bind promiscuously across regulatory sites, amplifying transcription \cite{sabo2014genome} and recruiting and sequestering coactivators such as p300/CBP \cite{faiola2005dual}. Within our modelling framework, by recruiting coactivators at competing regulatory sites, Myc can drive a reduction in $\beta$, thereby inhibiting differentiation.

The dynamics of differentiation in our model exhibit both similarities and differences compared to Waddington's classic epigenetic landscape picture of cellular differentiation \cite{waddington1957strategy,nimmo2015primed,boukacem2024waddington}. As in the Waddingtonian picture, differentiation in our model proceeds in a hierarchical manner through a series of bifurcations, from multipotent to ever more restricted progenitors. The modelling framework presented here makes the concept of potency quantitatively precise, in that we show that a differentiation signal for fate $i$, which modulates $w_i$, will perturb all progenitors associated with this fate towards it, while other progenitors will not be affected. Our analysis thus unifies the experimental observations on hierarchical differentiation, multilineage priming, and differentiation potency, into a single mathematical framework. In contrast to the classical Waddingtonian picture, however, our hierarchical picture does not entail tree-like differentiation landscape. Rather, analysis suggests that differentiation landscapes may be much more flexible, with multiple possible trajectories to the same target fate. Moreover, the trajectories taken may themselves change according to signalling which can lead to changes in the basins of attraction of the progenitor states. These flexible and non-tree like dynamics are in line with experimental knowledge on complex differentiation dynamics in systems such as blood formation and the neural crest \cite{ceredig2009models,erickson2023transcriptional}. While our analysis proposes flexible differential landscapes, these landscapes are nevertheless constrained by the statistical structure of the target states and can be quantitatively predicted through the persistence length measure.


\begin{thebibliography}{139}
\ifx \bisbn   \undefined \def \bisbn  #1{ISBN #1}\fi
\ifx \binits  \undefined \def \binits#1{#1}\fi
\ifx \bauthor  \undefined \def \bauthor#1{#1}\fi
\ifx \batitle  \undefined \def \batitle#1{#1}\fi
\ifx \bjtitle  \undefined \def \bjtitle#1{#1}\fi
\ifx \bvolume  \undefined \def \bvolume#1{\textbf{#1}}\fi
\ifx \byear  \undefined \def \byear#1{#1}\fi
\ifx \bissue  \undefined \def \bissue#1{#1}\fi
\ifx \bfpage  \undefined \def \bfpage#1{#1}\fi
\ifx \blpage  \undefined \def \blpage #1{#1}\fi
\ifx \burl  \undefined \def \burl#1{\textsf{#1}}\fi
\ifx \doiurl  \undefined \def \doiurl#1{\url{https://doi.org/#1}}\fi
\ifx \betal  \undefined \def \betal{\textit{et al.}}\fi
\ifx \binstitute  \undefined \def \binstitute#1{#1}\fi
\ifx \binstitutionaled  \undefined \def \binstitutionaled#1{#1}\fi
\ifx \bctitle  \undefined \def \bctitle#1{#1}\fi
\ifx \beditor  \undefined \def \beditor#1{#1}\fi
\ifx \bpublisher  \undefined \def \bpublisher#1{#1}\fi
\ifx \bbtitle  \undefined \def \bbtitle#1{#1}\fi
\ifx \bedition  \undefined \def \bedition#1{#1}\fi
\ifx \bseriesno  \undefined \def \bseriesno#1{#1}\fi
\ifx \blocation  \undefined \def \blocation#1{#1}\fi
\ifx \bsertitle  \undefined \def \bsertitle#1{#1}\fi
\ifx \bsnm \undefined \def \bsnm#1{#1}\fi
\ifx \bsuffix \undefined \def \bsuffix#1{#1}\fi
\ifx \bparticle \undefined \def \bparticle#1{#1}\fi
\ifx \barticle \undefined \def \barticle#1{#1}\fi
\bibcommenthead
\ifx \bconfdate \undefined \def \bconfdate #1{#1}\fi
\ifx \botherref \undefined \def \botherref #1{#1}\fi
\ifx \url \undefined \def \url#1{\textsf{#1}}\fi
\ifx \bchapter \undefined \def \bchapter#1{#1}\fi
\ifx \bbook \undefined \def \bbook#1{#1}\fi
\ifx \bcomment \undefined \def \bcomment#1{#1}\fi
\ifx \oauthor \undefined \def \oauthor#1{#1}\fi
\ifx \citeauthoryear \undefined \def \citeauthoryear#1{#1}\fi
\ifx \endbibitem  \undefined \def \endbibitem {}\fi
\ifx \bconflocation  \undefined \def \bconflocation#1{#1}\fi
\ifx \arxivurl  \undefined \def \arxivurl#1{\textsf{#1}}\fi
\csname PreBibitemsHook\endcsname

\bibitem[\protect\citeauthoryear{Waddington}{1957}]{waddington1957strategy}
\begin{botherref}
\oauthor{\bsnm{Waddington}, \binits{C.}}:
The strategy of the genes: a discussion of some aspecs of theoretical biology
(1957)
\end{botherref}
\endbibitem

\bibitem[\protect\citeauthoryear{Briggs et~al.}{2018}]{briggs2018dynamics}
\begin{barticle}
\bauthor{\bsnm{Briggs}, \binits{J.A.}},
\bauthor{\bsnm{Weinreb}, \binits{C.}},
\bauthor{\bsnm{Wagner}, \binits{D.E.}},
\bauthor{\bsnm{Megason}, \binits{S.}},
\bauthor{\bsnm{Peshkin}, \binits{L.}},
\bauthor{\bsnm{Kirschner}, \binits{M.W.}},
\bauthor{\bsnm{Klein}, \binits{A.M.}}:
\batitle{The dynamics of gene expression in vertebrate embryogenesis at single-cell resolution}.
\bjtitle{Science}
\bvolume{360}(\bissue{6392}),
\bfpage{5780}
(\byear{2018})
\end{barticle}
\endbibitem

\bibitem[\protect\citeauthoryear{M{\aa}nsson et~al.}{2007}]{maansson2007molecular}
\begin{barticle}
\bauthor{\bsnm{M{\aa}nsson}, \binits{R.}},
\bauthor{\bsnm{Hultquist}, \binits{A.}},
\bauthor{\bsnm{Luc}, \binits{S.}},
\bauthor{\bsnm{Yang}, \binits{L.}},
\bauthor{\bsnm{Anderson}, \binits{K.}},
\bauthor{\bsnm{Kharazi}, \binits{S.}},
\bauthor{\bsnm{Al-Hashmi}, \binits{S.}},
\bauthor{\bsnm{Liuba}, \binits{K.}},
\bauthor{\bsnm{Thor{\'e}n}, \binits{L.}},
\bauthor{\bsnm{Adolfsson}, \binits{J.}}, \betal:
\batitle{Molecular evidence for hierarchical transcriptional lineage priming in fetal and adult stem cells and multipotent progenitors}.
\bjtitle{Immunity}
\bvolume{26}(\bissue{4}),
\bfpage{407}--\blpage{419}
(\byear{2007})
\end{barticle}
\endbibitem

\bibitem[\protect\citeauthoryear{Ceredig et~al.}{2009}]{ceredig2009models}
\begin{barticle}
\bauthor{\bsnm{Ceredig}, \binits{R.}},
\bauthor{\bsnm{Rolink}, \binits{A.G.}},
\bauthor{\bsnm{Brown}, \binits{G.}}:
\batitle{Models of haematopoiesis: seeing the wood for the trees}.
\bjtitle{Nature Reviews Immunology}
\bvolume{9}(\bissue{4}),
\bfpage{293}--\blpage{300}
(\byear{2009})
\end{barticle}
\endbibitem

\bibitem[\protect\citeauthoryear{Dzierzak and Bigas}{2018}]{dzierzak2018blood}
\begin{barticle}
\bauthor{\bsnm{Dzierzak}, \binits{E.}},
\bauthor{\bsnm{Bigas}, \binits{A.}}:
\batitle{Blood development: hematopoietic stem cell dependence and independence}.
\bjtitle{Cell stem cell}
\bvolume{22}(\bissue{5}),
\bfpage{639}--\blpage{651}
(\byear{2018})
\end{barticle}
\endbibitem

\bibitem[\protect\citeauthoryear{Weinreb et~al.}{2020}]{weinreb2020lineage}
\begin{barticle}
\bauthor{\bsnm{Weinreb}, \binits{C.}},
\bauthor{\bsnm{Rodriguez-Fraticelli}, \binits{A.}},
\bauthor{\bsnm{Camargo}, \binits{F.D.}},
\bauthor{\bsnm{Klein}, \binits{A.M.}}:
\batitle{Lineage tracing on transcriptional landscapes links state to fate during differentiation}.
\bjtitle{Science}
\bvolume{367}(\bissue{6479}),
\bfpage{3381}
(\byear{2020})
\end{barticle}
\endbibitem

\bibitem[\protect\citeauthoryear{Allman and Pillai}{2008}]{allman2008peripheral}
\begin{barticle}
\bauthor{\bsnm{Allman}, \binits{D.}},
\bauthor{\bsnm{Pillai}, \binits{S.}}:
\batitle{Peripheral b cell subsets}.
\bjtitle{Current opinion in immunology}
\bvolume{20}(\bissue{2}),
\bfpage{149}--\blpage{157}
(\byear{2008})
\end{barticle}
\endbibitem

\bibitem[\protect\citeauthoryear{Bluestone et~al.}{2009}]{bluestone2009functional}
\begin{barticle}
\bauthor{\bsnm{Bluestone}, \binits{J.A.}},
\bauthor{\bsnm{Mackay}, \binits{C.R.}},
\bauthor{\bsnm{O'shea}, \binits{J.J.}},
\bauthor{\bsnm{Stockinger}, \binits{B.}}:
\batitle{The functional plasticity of t cell subsets}.
\bjtitle{Nature Reviews Immunology}
\bvolume{9}(\bissue{11}),
\bfpage{811}--\blpage{816}
(\byear{2009})
\end{barticle}
\endbibitem

\bibitem[\protect\citeauthoryear{Guilliams et~al.}{2018}]{guilliams2018developmental}
\begin{barticle}
\bauthor{\bsnm{Guilliams}, \binits{M.}},
\bauthor{\bsnm{Mildner}, \binits{A.}},
\bauthor{\bsnm{Yona}, \binits{S.}}:
\batitle{Developmental and functional heterogeneity of monocytes}.
\bjtitle{Immunity}
\bvolume{49}(\bissue{4}),
\bfpage{595}--\blpage{613}
(\byear{2018})
\end{barticle}
\endbibitem

\bibitem[\protect\citeauthoryear{Ballesteros et~al.}{2020}]{ballesteros2020co}
\begin{barticle}
\bauthor{\bsnm{Ballesteros}, \binits{I.}},
\bauthor{\bsnm{Rubio-Ponce}, \binits{A.}},
\bauthor{\bsnm{Genua}, \binits{M.}},
\bauthor{\bsnm{Lusito}, \binits{E.}},
\bauthor{\bsnm{Kwok}, \binits{I.}},
\bauthor{\bsnm{Fern{\'a}ndez-Calvo}, \binits{G.}},
\bauthor{\bsnm{Khoyratty}, \binits{T.E.}},
\bauthor{\bsnm{Grinsven}, \binits{E.}},
\bauthor{\bsnm{Gonz{\'a}lez-Hern{\'a}ndez}, \binits{S.}},
\bauthor{\bsnm{Nicol{\'a}s-{\'A}vila}, \binits{J.{\'A}.}}, \betal:
\batitle{Co-option of neutrophil fates by tissue environments}.
\bjtitle{Cell}
\bvolume{183}(\bissue{5}),
\bfpage{1282}--\blpage{1297}
(\byear{2020})
\end{barticle}
\endbibitem

\bibitem[\protect\citeauthoryear{Blankvoort et~al.}{2018}]{blankvoort2018marked}
\begin{barticle}
\bauthor{\bsnm{Blankvoort}, \binits{S.}},
\bauthor{\bsnm{Witter}, \binits{M.P.}},
\bauthor{\bsnm{Noonan}, \binits{J.}},
\bauthor{\bsnm{Cotney}, \binits{J.}},
\bauthor{\bsnm{Kentros}, \binits{C.}}:
\batitle{Marked diversity of unique cortical enhancers enables neuron-specific tools by enhancer-driven gene expression}.
\bjtitle{Current Biology}
\bvolume{28}(\bissue{13}),
\bfpage{2103}--\blpage{2114}
(\byear{2018})
\end{barticle}
\endbibitem

\bibitem[\protect\citeauthoryear{Soldatov et~al.}{2019}]{soldatov2019spatiotemporal}
\begin{barticle}
\bauthor{\bsnm{Soldatov}, \binits{R.}},
\bauthor{\bsnm{Kaucka}, \binits{M.}},
\bauthor{\bsnm{Kastriti}, \binits{M.E.}},
\bauthor{\bsnm{Petersen}, \binits{J.}},
\bauthor{\bsnm{Chontorotzea}, \binits{T.}},
\bauthor{\bsnm{Englmaier}, \binits{L.}},
\bauthor{\bsnm{Akkuratova}, \binits{N.}},
\bauthor{\bsnm{Yang}, \binits{Y.}},
\bauthor{\bsnm{H{\"a}ring}, \binits{M.}},
\bauthor{\bsnm{Dyachuk}, \binits{V.}}, \betal:
\batitle{Spatiotemporal structure of cell fate decisions in murine neural crest}.
\bjtitle{Science}
\bvolume{364}(\bissue{6444}),
\bfpage{9536}
(\byear{2019})
\end{barticle}
\endbibitem

\bibitem[\protect\citeauthoryear{Li et~al.}{2020}]{li2020transcriptional}
\begin{barticle}
\bauthor{\bsnm{Li}, \binits{Z.}},
\bauthor{\bsnm{Tyler}, \binits{W.A.}},
\bauthor{\bsnm{Zeldich}, \binits{E.}},
\bauthor{\bsnm{Santpere~Bar{\'o}}, \binits{G.}},
\bauthor{\bsnm{Okamoto}, \binits{M.}},
\bauthor{\bsnm{Gao}, \binits{T.}},
\bauthor{\bsnm{Li}, \binits{M.}},
\bauthor{\bsnm{Sestan}, \binits{N.}},
\bauthor{\bsnm{Haydar}, \binits{T.F.}}:
\batitle{Transcriptional priming as a conserved mechanism of lineage diversification in the developing mouse and human neocortex}.
\bjtitle{Science advances}
\bvolume{6}(\bissue{45}),
\bfpage{2068}
(\byear{2020})
\end{barticle}
\endbibitem

\bibitem[\protect\citeauthoryear{Sharma et~al.}{2020}]{sharma2020emergence}
\begin{barticle}
\bauthor{\bsnm{Sharma}, \binits{N.}},
\bauthor{\bsnm{Flaherty}, \binits{K.}},
\bauthor{\bsnm{Lezgiyeva}, \binits{K.}},
\bauthor{\bsnm{Wagner}, \binits{D.E.}},
\bauthor{\bsnm{Klein}, \binits{A.M.}},
\bauthor{\bsnm{Ginty}, \binits{D.D.}}:
\batitle{The emergence of transcriptional identity in somatosensory neurons}.
\bjtitle{Nature}
\bvolume{577}(\bissue{7790}),
\bfpage{392}--\blpage{398}
(\byear{2020})
\end{barticle}
\endbibitem

\bibitem[\protect\citeauthoryear{Faure et~al.}{2020}]{faure2020single}
\begin{barticle}
\bauthor{\bsnm{Faure}, \binits{L.}},
\bauthor{\bsnm{Wang}, \binits{Y.}},
\bauthor{\bsnm{Kastriti}, \binits{M.E.}},
\bauthor{\bsnm{Fontanet}, \binits{P.}},
\bauthor{\bsnm{Cheung}, \binits{K.K.}},
\bauthor{\bsnm{Petitpr{\'e}}, \binits{C.}},
\bauthor{\bsnm{Wu}, \binits{H.}},
\bauthor{\bsnm{Sun}, \binits{L.L.}},
\bauthor{\bsnm{Runge}, \binits{K.}},
\bauthor{\bsnm{Croci}, \binits{L.}}, \betal:
\batitle{Single cell rna sequencing identifies early diversity of sensory neurons forming via bi-potential intermediates}.
\bjtitle{Nature communications}
\bvolume{11}(\bissue{1}),
\bfpage{4175}
(\byear{2020})
\end{barticle}
\endbibitem

\bibitem[\protect\citeauthoryear{Laddach et~al.}{2023}]{laddach2023branching}
\begin{barticle}
\bauthor{\bsnm{Laddach}, \binits{A.}},
\bauthor{\bsnm{Chng}, \binits{S.H.}},
\bauthor{\bsnm{Lasrado}, \binits{R.}},
\bauthor{\bsnm{Progatzky}, \binits{F.}},
\bauthor{\bsnm{Shapiro}, \binits{M.}},
\bauthor{\bsnm{Erickson}, \binits{A.}},
\bauthor{\bsnm{Sampedro~Castaneda}, \binits{M.}},
\bauthor{\bsnm{Artemov}, \binits{A.V.}},
\bauthor{\bsnm{Bon-Frauches}, \binits{A.C.}},
\bauthor{\bsnm{Amaniti}, \binits{E.-M.}}, \betal:
\batitle{A branching model of lineage differentiation underpinning the neurogenic potential of enteric glia}.
\bjtitle{Nature Communications}
\bvolume{14}(\bissue{1}),
\bfpage{5904}
(\byear{2023})
\end{barticle}
\endbibitem

\bibitem[\protect\citeauthoryear{Li et~al.}{2024}]{li2024spatiotemporal}
\begin{botherref}
\oauthor{\bsnm{Li}, \binits{C.}},
\oauthor{\bsnm{Gehring}, \binits{J.}},
\oauthor{\bsnm{Bronner}, \binits{M.E.}}:
Spatiotemporal dynamics of the developing zebrafish enteric nervous system at the whole-organ level.
Developmental Cell
(2024)
\end{botherref}
\endbibitem

\bibitem[\protect\citeauthoryear{Creyghton et~al.}{2010}]{creyghton2010histone}
\begin{barticle}
\bauthor{\bsnm{Creyghton}, \binits{M.P.}},
\bauthor{\bsnm{Cheng}, \binits{A.W.}},
\bauthor{\bsnm{Welstead}, \binits{G.G.}},
\bauthor{\bsnm{Kooistra}, \binits{T.}},
\bauthor{\bsnm{Carey}, \binits{B.W.}},
\bauthor{\bsnm{Steine}, \binits{E.J.}},
\bauthor{\bsnm{Hanna}, \binits{J.}},
\bauthor{\bsnm{Lodato}, \binits{M.A.}},
\bauthor{\bsnm{Frampton}, \binits{G.M.}},
\bauthor{\bsnm{Sharp}, \binits{P.A.}}, \betal:
\batitle{Histone h3k27ac separates active from poised enhancers and predicts developmental state}.
\bjtitle{Proceedings of the National Academy of Sciences}
\bvolume{107}(\bissue{50}),
\bfpage{21931}--\blpage{21936}
(\byear{2010})
\end{barticle}
\endbibitem

\bibitem[\protect\citeauthoryear{Whyte et~al.}{2013}]{whyte2013master}
\begin{barticle}
\bauthor{\bsnm{Whyte}, \binits{W.A.}},
\bauthor{\bsnm{Orlando}, \binits{D.A.}},
\bauthor{\bsnm{Hnisz}, \binits{D.}},
\bauthor{\bsnm{Abraham}, \binits{B.J.}},
\bauthor{\bsnm{Lin}, \binits{C.Y.}},
\bauthor{\bsnm{Kagey}, \binits{M.H.}},
\bauthor{\bsnm{Rahl}, \binits{P.B.}},
\bauthor{\bsnm{Lee}, \binits{T.I.}},
\bauthor{\bsnm{Young}, \binits{R.A.}}:
\batitle{Master transcription factors and mediator establish super-enhancers at key cell identity genes}.
\bjtitle{Cell}
\bvolume{153}(\bissue{2}),
\bfpage{307}--\blpage{319}
(\byear{2013})
\end{barticle}
\endbibitem

\bibitem[\protect\citeauthoryear{Hnisz et~al.}{2013}]{hnisz2013super}
\begin{barticle}
\bauthor{\bsnm{Hnisz}, \binits{D.}},
\bauthor{\bsnm{Abraham}, \binits{B.J.}},
\bauthor{\bsnm{Lee}, \binits{T.I.}},
\bauthor{\bsnm{Lau}, \binits{A.}},
\bauthor{\bsnm{Saint-Andr{\'e}}, \binits{V.}},
\bauthor{\bsnm{Sigova}, \binits{A.A.}},
\bauthor{\bsnm{Hoke}, \binits{H.A.}},
\bauthor{\bsnm{Young}, \binits{R.A.}}:
\batitle{Super-enhancers in the control of cell identity and disease}.
\bjtitle{Cell}
\bvolume{155}(\bissue{4}),
\bfpage{934}--\blpage{947}
(\byear{2013})
\end{barticle}
\endbibitem

\bibitem[\protect\citeauthoryear{Heinz et~al.}{2015}]{heinz2015selection}
\begin{barticle}
\bauthor{\bsnm{Heinz}, \binits{S.}},
\bauthor{\bsnm{Romanoski}, \binits{C.E.}},
\bauthor{\bsnm{Benner}, \binits{C.}},
\bauthor{\bsnm{Glass}, \binits{C.K.}}:
\batitle{The selection and function of cell type-specific enhancers}.
\bjtitle{Nature reviews Molecular cell biology}
\bvolume{16}(\bissue{3}),
\bfpage{144}--\blpage{154}
(\byear{2015})
\end{barticle}
\endbibitem

\bibitem[\protect\citeauthoryear{Saint-Andr{\'e} et~al.}{2016}]{saint2016models}
\begin{barticle}
\bauthor{\bsnm{Saint-Andr{\'e}}, \binits{V.}},
\bauthor{\bsnm{Federation}, \binits{A.J.}},
\bauthor{\bsnm{Lin}, \binits{C.Y.}},
\bauthor{\bsnm{Abraham}, \binits{B.J.}},
\bauthor{\bsnm{Reddy}, \binits{J.}},
\bauthor{\bsnm{Lee}, \binits{T.I.}},
\bauthor{\bsnm{Bradner}, \binits{J.E.}},
\bauthor{\bsnm{Young}, \binits{R.A.}}:
\batitle{Models of human core transcriptional regulatory circuitries}.
\bjtitle{Genome research}
\bvolume{26}(\bissue{3}),
\bfpage{385}--\blpage{396}
(\byear{2016})
\end{barticle}
\endbibitem

\bibitem[\protect\citeauthoryear{Hu et~al.}{1997}]{hu1997multilineage}
\begin{barticle}
\bauthor{\bsnm{Hu}, \binits{M.}},
\bauthor{\bsnm{Krause}, \binits{D.}},
\bauthor{\bsnm{Greaves}, \binits{M.}},
\bauthor{\bsnm{Sharkis}, \binits{S.}},
\bauthor{\bsnm{Dexter}, \binits{M.}},
\bauthor{\bsnm{Heyworth}, \binits{C.}},
\bauthor{\bsnm{Enver}, \binits{T.}}:
\batitle{Multilineage gene expression precedes commitment in the hemopoietic system.}
\bjtitle{Genes \& development}
\bvolume{11}(\bissue{6}),
\bfpage{774}--\blpage{785}
(\byear{1997})
\end{barticle}
\endbibitem

\bibitem[\protect\citeauthoryear{Miyamoto et~al.}{2002}]{miyamoto2002myeloid}
\begin{barticle}
\bauthor{\bsnm{Miyamoto}, \binits{T.}},
\bauthor{\bsnm{Iwasaki}, \binits{H.}},
\bauthor{\bsnm{Reizis}, \binits{B.}},
\bauthor{\bsnm{Ye}, \binits{M.}},
\bauthor{\bsnm{Graf}, \binits{T.}},
\bauthor{\bsnm{Weissman}, \binits{I.L.}},
\bauthor{\bsnm{Akashi}, \binits{K.}}:
\batitle{Myeloid or lymphoid promiscuity as a critical step in hematopoietic lineage commitment}.
\bjtitle{Developmental cell}
\bvolume{3}(\bissue{1}),
\bfpage{137}--\blpage{147}
(\byear{2002})
\end{barticle}
\endbibitem

\bibitem[\protect\citeauthoryear{Mercer et~al.}{2011}]{mercer2011multilineage}
\begin{barticle}
\bauthor{\bsnm{Mercer}, \binits{E.M.}},
\bauthor{\bsnm{Lin}, \binits{Y.C.}},
\bauthor{\bsnm{Benner}, \binits{C.}},
\bauthor{\bsnm{Jhunjhunwala}, \binits{S.}},
\bauthor{\bsnm{Dutkowski}, \binits{J.}},
\bauthor{\bsnm{Flores}, \binits{M.}},
\bauthor{\bsnm{Sigvardsson}, \binits{M.}},
\bauthor{\bsnm{Ideker}, \binits{T.}},
\bauthor{\bsnm{Glass}, \binits{C.K.}},
\bauthor{\bsnm{Murre}, \binits{C.}}:
\batitle{Multilineage priming of enhancer repertoires precedes commitment to the b and myeloid cell lineages in hematopoietic progenitors}.
\bjtitle{Immunity}
\bvolume{35}(\bissue{3}),
\bfpage{413}--\blpage{425}
(\byear{2011})
\end{barticle}
\endbibitem

\bibitem[\protect\citeauthoryear{Brunskill et~al.}{2014}]{brunskill2014single}
\begin{barticle}
\bauthor{\bsnm{Brunskill}, \binits{E.W.}},
\bauthor{\bsnm{Park}, \binits{J.-S.}},
\bauthor{\bsnm{Chung}, \binits{E.}},
\bauthor{\bsnm{Chen}, \binits{F.}},
\bauthor{\bsnm{Magella}, \binits{B.}},
\bauthor{\bsnm{Potter}, \binits{S.S.}}:
\batitle{Single cell dissection of early kidney development: multilineage priming}.
\bjtitle{Development}
\bvolume{141}(\bissue{15}),
\bfpage{3093}--\blpage{3101}
(\byear{2014})
\end{barticle}
\endbibitem

\bibitem[\protect\citeauthoryear{Ohnishi et~al.}{2014}]{ohnishi2014cell}
\begin{barticle}
\bauthor{\bsnm{Ohnishi}, \binits{Y.}},
\bauthor{\bsnm{Huber}, \binits{W.}},
\bauthor{\bsnm{Tsumura}, \binits{A.}},
\bauthor{\bsnm{Kang}, \binits{M.}},
\bauthor{\bsnm{Xenopoulos}, \binits{P.}},
\bauthor{\bsnm{Kurimoto}, \binits{K.}},
\bauthor{\bsnm{Ole{\'s}}, \binits{A.K.}},
\bauthor{\bsnm{Ara{\'u}zo-Bravo}, \binits{M.J.}},
\bauthor{\bsnm{Saitou}, \binits{M.}},
\bauthor{\bsnm{Hadjantonakis}, \binits{A.-K.}}, \betal:
\batitle{Cell-to-cell expression variability followed by signal reinforcement progressively segregates early mouse lineages}.
\bjtitle{Nature cell biology}
\bvolume{16}(\bissue{1}),
\bfpage{27}--\blpage{37}
(\byear{2014})
\end{barticle}
\endbibitem

\bibitem[\protect\citeauthoryear{Kim et~al.}{2014}]{kim2014broadly}
\begin{barticle}
\bauthor{\bsnm{Kim}, \binits{T.-H.}},
\bauthor{\bsnm{Li}, \binits{F.}},
\bauthor{\bsnm{Ferreiro-Neira}, \binits{I.}},
\bauthor{\bsnm{Ho}, \binits{L.-L.}},
\bauthor{\bsnm{Luyten}, \binits{A.}},
\bauthor{\bsnm{Nalapareddy}, \binits{K.}},
\bauthor{\bsnm{Long}, \binits{H.}},
\bauthor{\bsnm{Verzi}, \binits{M.}},
\bauthor{\bsnm{Shivdasani}, \binits{R.A.}}:
\batitle{Broadly permissive intestinal chromatin underlies lateral inhibition and cell plasticity}.
\bjtitle{Nature}
\bvolume{506}(\bissue{7489}),
\bfpage{511}--\blpage{515}
(\byear{2014})
\end{barticle}
\endbibitem

\bibitem[\protect\citeauthoryear{Olsson et~al.}{2016}]{olsson2016single}
\begin{barticle}
\bauthor{\bsnm{Olsson}, \binits{A.}},
\bauthor{\bsnm{Venkatasubramanian}, \binits{M.}},
\bauthor{\bsnm{Chaudhri}, \binits{V.K.}},
\bauthor{\bsnm{Aronow}, \binits{B.J.}},
\bauthor{\bsnm{Salomonis}, \binits{N.}},
\bauthor{\bsnm{Singh}, \binits{H.}},
\bauthor{\bsnm{Grimes}, \binits{H.L.}}:
\batitle{Single-cell analysis of mixed-lineage states leading to a binary cell fate choice}.
\bjtitle{Nature}
\bvolume{537}(\bissue{7622}),
\bfpage{698}--\blpage{702}
(\byear{2016})
\end{barticle}
\endbibitem

\bibitem[\protect\citeauthoryear{Zheng et~al.}{2018}]{zheng2018molecular}
\begin{barticle}
\bauthor{\bsnm{Zheng}, \binits{S.}},
\bauthor{\bsnm{Papalexi}, \binits{E.}},
\bauthor{\bsnm{Butler}, \binits{A.}},
\bauthor{\bsnm{Stephenson}, \binits{W.}},
\bauthor{\bsnm{Satija}, \binits{R.}}:
\batitle{Molecular transitions in early progenitors during human cord blood hematopoiesis}.
\bjtitle{Molecular systems biology}
\bvolume{14}(\bissue{3}),
\bfpage{8041}
(\byear{2018})
\end{barticle}
\endbibitem

\bibitem[\protect\citeauthoryear{Martin et~al.}{2021}]{martin2021chromatin}
\begin{barticle}
\bauthor{\bsnm{Martin}, \binits{E.W.}},
\bauthor{\bsnm{Krietsch}, \binits{J.}},
\bauthor{\bsnm{Reggiardo}, \binits{R.E.}},
\bauthor{\bsnm{Sousae}, \binits{R.}},
\bauthor{\bsnm{Kim}, \binits{D.H.}},
\bauthor{\bsnm{Forsberg}, \binits{E.C.}}:
\batitle{Chromatin accessibility maps provide evidence of multilineage gene priming in hematopoietic stem cells}.
\bjtitle{Epigenetics \& Chromatin}
\bvolume{14}(\bissue{1}),
\bfpage{1}--\blpage{15}
(\byear{2021})
\end{barticle}
\endbibitem

\bibitem[\protect\citeauthoryear{Singh et~al.}{2022}]{singh2022cell}
\begin{barticle}
\bauthor{\bsnm{Singh}, \binits{P.N.}},
\bauthor{\bsnm{Madha}, \binits{S.}},
\bauthor{\bsnm{Leiter}, \binits{A.B.}},
\bauthor{\bsnm{Shivdasani}, \binits{R.A.}}:
\batitle{Cell and chromatin transitions in intestinal stem cell regeneration}.
\bjtitle{Genes \& Development}
\bvolume{36}(\bissue{11-12}),
\bfpage{684}--\blpage{698}
(\byear{2022})
\end{barticle}
\endbibitem

\bibitem[\protect\citeauthoryear{Grover et~al.}{2014}]{grover2014erythropoietin}
\begin{barticle}
\bauthor{\bsnm{Grover}, \binits{A.}},
\bauthor{\bsnm{Mancini}, \binits{E.}},
\bauthor{\bsnm{Moore}, \binits{S.}},
\bauthor{\bsnm{Mead}, \binits{A.J.}},
\bauthor{\bsnm{Atkinson}, \binits{D.}},
\bauthor{\bsnm{Rasmussen}, \binits{K.D.}},
\bauthor{\bsnm{O’Carroll}, \binits{D.}},
\bauthor{\bsnm{Jacobsen}, \binits{S.E.W.}},
\bauthor{\bsnm{Nerlov}, \binits{C.}}:
\batitle{Erythropoietin guides multipotent hematopoietic progenitor cells toward an erythroid fate}.
\bjtitle{Journal of Experimental Medicine}
\bvolume{211}(\bissue{2}),
\bfpage{181}--\blpage{188}
(\byear{2014})
\end{barticle}
\endbibitem

\bibitem[\protect\citeauthoryear{Herman et~al.}{2018}]{herman2018fateid}
\begin{barticle}
\bauthor{\bsnm{Herman}, \binits{J.S.}},
\bauthor{\bsnm{Sagar}, \binits{n.}},
\bauthor{\bsnm{Gruen}, \binits{D.}}:
\batitle{Fateid infers cell fate bias in multipotent progenitors from single-cell rna-seq data}.
\bjtitle{Nature methods}
\bvolume{15}(\bissue{5}),
\bfpage{379}--\blpage{386}
(\byear{2018})
\end{barticle}
\endbibitem

\bibitem[\protect\citeauthoryear{Erickson et~al.}{2023}]{erickson2023transcriptional}
\begin{bchapter}
\bauthor{\bsnm{Erickson}, \binits{A.G.}},
\bauthor{\bsnm{Kameneva}, \binits{P.}},
\bauthor{\bsnm{Adameyko}, \binits{I.}}:
\bctitle{The transcriptional portraits of the neural crest at the individual cell level}.
In: \bbtitle{Seminars in Cell \& Developmental Biology},
vol. \bseriesno{138},
pp. \bfpage{68}--\blpage{80}
(\byear{2023}).
\bcomment{Elsevier}
\end{bchapter}
\endbibitem

\bibitem[\protect\citeauthoryear{Erickson et~al.}{2024}]{erickson2024unbiased}
\begin{botherref}
\oauthor{\bsnm{Erickson}, \binits{A.G.}},
\oauthor{\bsnm{Isaev}, \binits{S.}},
\oauthor{\bsnm{Artemov}, \binits{A.}},
\oauthor{\bsnm{He}, \binits{J.}},
\oauthor{\bsnm{Semsch}, \binits{B.}},
\oauthor{\bsnm{Murtazina}, \binits{A.}},
\oauthor{\bsnm{Sun}, \binits{J.}},
\oauthor{\bsnm{Mangold}, \binits{K.}},
\oauthor{\bsnm{Chalou}, \binits{A.}},
\oauthor{\bsnm{Frisen}, \binits{J.}}, et al.:
Unbiased profiling of multipotency landscapes reveals spatial modulators of clonal fate biases.
bioRxiv,
2024--11
(2024)
\end{botherref}
\endbibitem

\bibitem[\protect\citeauthoryear{Diffner et~al.}{2013}]{diffner2013activity}
\begin{barticle}
\bauthor{\bsnm{Diffner}, \binits{E.}},
\bauthor{\bsnm{Beck}, \binits{D.}},
\bauthor{\bsnm{Gudgin}, \binits{E.}},
\bauthor{\bsnm{Thoms}, \binits{J.A.}},
\bauthor{\bsnm{Knezevic}, \binits{K.}},
\bauthor{\bsnm{Pridans}, \binits{C.}},
\bauthor{\bsnm{Foster}, \binits{S.}},
\bauthor{\bsnm{Goode}, \binits{D.}},
\bauthor{\bsnm{Lim}, \binits{W.K.}},
\bauthor{\bsnm{Boelen}, \binits{L.}}, \betal:
\batitle{Activity of a heptad of transcription factors is associated with stem cell programs and clinical outcome in acute myeloid leukemia}.
\bjtitle{Blood, The Journal of the American Society of Hematology}
\bvolume{121}(\bissue{12}),
\bfpage{2289}--\blpage{2300}
(\byear{2013})
\end{barticle}
\endbibitem

\bibitem[\protect\citeauthoryear{Smith et~al.}{2018}]{smith2018human}
\begin{barticle}
\bauthor{\bsnm{Smith}, \binits{B.A.}},
\bauthor{\bsnm{Balanis}, \binits{N.G.}},
\bauthor{\bsnm{Nanjundiah}, \binits{A.}},
\bauthor{\bsnm{Sheu}, \binits{K.M.}},
\bauthor{\bsnm{Tsai}, \binits{B.L.}},
\bauthor{\bsnm{Zhang}, \binits{Q.}},
\bauthor{\bsnm{Park}, \binits{J.W.}},
\bauthor{\bsnm{Thompson}, \binits{M.}},
\bauthor{\bsnm{Huang}, \binits{J.}},
\bauthor{\bsnm{Witte}, \binits{O.N.}}, \betal:
\batitle{A human adult stem cell signature marks aggressive variants across epithelial cancers}.
\bjtitle{Cell reports}
\bvolume{24}(\bissue{12}),
\bfpage{3353}--\blpage{3366}
(\byear{2018})
\end{barticle}
\endbibitem

\bibitem[\protect\citeauthoryear{Bolouri and Davidson}{2002}]{bolouri2002modeling}
\begin{barticle}
\bauthor{\bsnm{Bolouri}, \binits{H.}},
\bauthor{\bsnm{Davidson}, \binits{E.H.}}:
\batitle{Modeling transcriptional regulatory networks}.
\bjtitle{BioEssays}
\bvolume{24}(\bissue{12}),
\bfpage{1118}--\blpage{1129}
(\byear{2002})
\end{barticle}
\endbibitem

\bibitem[\protect\citeauthoryear{Betancur et~al.}{2010}]{betancur2010assembling}
\begin{barticle}
\bauthor{\bsnm{Betancur}, \binits{P.}},
\bauthor{\bsnm{Bronner-Fraser}, \binits{M.}},
\bauthor{\bsnm{Sauka-Spengler}, \binits{T.}}:
\batitle{Assembling neural crest regulatory circuits into a gene regulatory network}.
\bjtitle{Annual review of cell and developmental biology}
\bvolume{26}(\bissue{1}),
\bfpage{581}--\blpage{603}
(\byear{2010})
\end{barticle}
\endbibitem

\bibitem[\protect\citeauthoryear{Peter and Davidson}{2011}]{peter2011evolution}
\begin{barticle}
\bauthor{\bsnm{Peter}, \binits{I.S.}},
\bauthor{\bsnm{Davidson}, \binits{E.H.}}:
\batitle{Evolution of gene regulatory networks controlling body plan development}.
\bjtitle{Cell}
\bvolume{144}(\bissue{6}),
\bfpage{970}--\blpage{985}
(\byear{2011})
\end{barticle}
\endbibitem

\bibitem[\protect\citeauthoryear{Novershtern et~al.}{2011}]{novershtern2011densely}
\begin{barticle}
\bauthor{\bsnm{Novershtern}, \binits{N.}},
\bauthor{\bsnm{Subramanian}, \binits{A.}},
\bauthor{\bsnm{Lawton}, \binits{L.N.}},
\bauthor{\bsnm{Mak}, \binits{R.H.}},
\bauthor{\bsnm{Haining}, \binits{W.N.}},
\bauthor{\bsnm{McConkey}, \binits{M.E.}},
\bauthor{\bsnm{Habib}, \binits{N.}},
\bauthor{\bsnm{Yosef}, \binits{N.}},
\bauthor{\bsnm{Chang}, \binits{C.Y.}},
\bauthor{\bsnm{Shay}, \binits{T.}}, \betal:
\batitle{Densely interconnected transcriptional circuits control cell states in human hematopoiesis}.
\bjtitle{Cell}
\bvolume{144}(\bissue{2}),
\bfpage{296}--\blpage{309}
(\byear{2011})
\end{barticle}
\endbibitem

\bibitem[\protect\citeauthoryear{Hamey et~al.}{2017}]{hamey2017reconstructing}
\begin{barticle}
\bauthor{\bsnm{Hamey}, \binits{F.K.}},
\bauthor{\bsnm{Nestorowa}, \binits{S.}},
\bauthor{\bsnm{Kinston}, \binits{S.J.}},
\bauthor{\bsnm{Kent}, \binits{D.G.}},
\bauthor{\bsnm{Wilson}, \binits{N.K.}},
\bauthor{\bsnm{G{\"o}ttgens}, \binits{B.}}:
\batitle{Reconstructing blood stem cell regulatory network models from single-cell molecular profiles}.
\bjtitle{Proceedings of the National Academy of Sciences}
\bvolume{114}(\bissue{23}),
\bfpage{5822}--\blpage{5829}
(\byear{2017})
\end{barticle}
\endbibitem

\bibitem[\protect\citeauthoryear{Trevers et~al.}{2023}]{trevers2023gene}
\begin{barticle}
\bauthor{\bsnm{Trevers}, \binits{K.E.}},
\bauthor{\bsnm{Lu}, \binits{H.-C.}},
\bauthor{\bsnm{Yang}, \binits{Y.}},
\bauthor{\bsnm{Thiery}, \binits{A.P.}},
\bauthor{\bsnm{Strobl}, \binits{A.C.}},
\bauthor{\bsnm{Anderson}, \binits{C.}},
\bauthor{\bsnm{P{\'a}link{\'a}{\v{s}}ov{\'a}}, \binits{B.}},
\bauthor{\bsnm{Oliveira}, \binits{N.M.}},
\bauthor{\bsnm{Almeida}, \binits{I.M.}},
\bauthor{\bsnm{Khan}, \binits{M.A.}}, \betal:
\batitle{A gene regulatory network for neural induction}.
\bjtitle{Elife}
\bvolume{12},
\bfpage{73189}
(\byear{2023})
\end{barticle}
\endbibitem

\bibitem[\protect\citeauthoryear{Arendt et~al.}{2016}]{arendt2016origin}
\begin{barticle}
\bauthor{\bsnm{Arendt}, \binits{D.}},
\bauthor{\bsnm{Musser}, \binits{J.M.}},
\bauthor{\bsnm{Baker}, \binits{C.V.}},
\bauthor{\bsnm{Bergman}, \binits{A.}},
\bauthor{\bsnm{Cepko}, \binits{C.}},
\bauthor{\bsnm{Erwin}, \binits{D.H.}},
\bauthor{\bsnm{Pavlicev}, \binits{M.}},
\bauthor{\bsnm{Schlosser}, \binits{G.}},
\bauthor{\bsnm{Widder}, \binits{S.}},
\bauthor{\bsnm{Laubichler}, \binits{M.D.}}, \betal:
\batitle{The origin and evolution of cell types}.
\bjtitle{Nature Reviews Genetics}
\bvolume{17}(\bissue{12}),
\bfpage{744}--\blpage{757}
(\byear{2016})
\end{barticle}
\endbibitem

\bibitem[\protect\citeauthoryear{Reilly et~al.}{2020}]{reilly2020unique}
\begin{barticle}
\bauthor{\bsnm{Reilly}, \binits{M.B.}},
\bauthor{\bsnm{Cros}, \binits{C.}},
\bauthor{\bsnm{Varol}, \binits{E.}},
\bauthor{\bsnm{Yemini}, \binits{E.}},
\bauthor{\bsnm{Hobert}, \binits{O.}}:
\batitle{Unique homeobox codes delineate all the neuron classes of c. elegans}.
\bjtitle{Nature}
\bvolume{584}(\bissue{7822}),
\bfpage{595}--\blpage{601}
(\byear{2020})
\end{barticle}
\endbibitem

\bibitem[\protect\citeauthoryear{Newman}{2020}]{newman2020cell}
\begin{barticle}
\bauthor{\bsnm{Newman}, \binits{S.A.}}:
\batitle{Cell differentiation: What have we learned in 50 years?}
\bjtitle{Journal of theoretical biology}
\bvolume{485},
\bfpage{110031}
(\byear{2020})
\end{barticle}
\endbibitem

\bibitem[\protect\citeauthoryear{Narita et~al.}{2021}]{narita2021enhancers}
\begin{barticle}
\bauthor{\bsnm{Narita}, \binits{T.}},
\bauthor{\bsnm{Ito}, \binits{S.}},
\bauthor{\bsnm{Higashijima}, \binits{Y.}},
\bauthor{\bsnm{Chu}, \binits{W.K.}},
\bauthor{\bsnm{Neumann}, \binits{K.}},
\bauthor{\bsnm{Walter}, \binits{J.}},
\bauthor{\bsnm{Satpathy}, \binits{S.}},
\bauthor{\bsnm{Liebner}, \binits{T.}},
\bauthor{\bsnm{Hamilton}, \binits{W.B.}},
\bauthor{\bsnm{Maskey}, \binits{E.}}, \betal:
\batitle{Enhancers are activated by p300/cbp activity-dependent pic assembly, rnapii recruitment, and pause release}.
\bjtitle{Molecular Cell}
\bvolume{81}(\bissue{10}),
\bfpage{2166}--\blpage{2182}
(\byear{2021})
\end{barticle}
\endbibitem

\bibitem[\protect\citeauthoryear{Karin}{2024}]{karin2024enhancernet}
\begin{barticle}
\bauthor{\bsnm{Karin}, \binits{O.}}:
\batitle{Enhancernet: A predictive model of cell identity dynamics through enhancer selection}.
\bjtitle{Development}
\bvolume{151}(\bissue{19}),
\bfpage{202997}
(\byear{2024})
\end{barticle}
\endbibitem

\bibitem[\protect\citeauthoryear{Weinert et~al.}{2018}]{weinert2018time}
\begin{barticle}
\bauthor{\bsnm{Weinert}, \binits{B.T.}},
\bauthor{\bsnm{Narita}, \binits{T.}},
\bauthor{\bsnm{Satpathy}, \binits{S.}},
\bauthor{\bsnm{Srinivasan}, \binits{B.}},
\bauthor{\bsnm{Hansen}, \binits{B.K.}},
\bauthor{\bsnm{Sch{\"o}lz}, \binits{C.}},
\bauthor{\bsnm{Hamilton}, \binits{W.B.}},
\bauthor{\bsnm{Zucconi}, \binits{B.E.}},
\bauthor{\bsnm{Wang}, \binits{W.W.}},
\bauthor{\bsnm{Liu}, \binits{W.R.}}, \betal:
\batitle{Time-resolved analysis reveals rapid dynamics and broad scope of the cbp/p300 acetylome}.
\bjtitle{Cell}
\bvolume{174}(\bissue{1}),
\bfpage{231}--\blpage{244}
(\byear{2018})
\end{barticle}
\endbibitem

\bibitem[\protect\citeauthoryear{Ferrie et~al.}{2024}]{ferrie2024p300}
\begin{barticle}
\bauthor{\bsnm{Ferrie}, \binits{J.J.}},
\bauthor{\bsnm{Karr}, \binits{J.P.}},
\bauthor{\bsnm{Graham}, \binits{T.G.}},
\bauthor{\bsnm{Dailey}, \binits{G.M.}},
\bauthor{\bsnm{Zhang}, \binits{G.}},
\bauthor{\bsnm{Tjian}, \binits{R.}},
\bauthor{\bsnm{Darzacq}, \binits{X.}}:
\batitle{p300 is an obligate integrator of combinatorial transcription factor inputs}.
\bjtitle{Molecular Cell}
\bvolume{84}(\bissue{2}),
\bfpage{234}--\blpage{243}
(\byear{2024})
\end{barticle}
\endbibitem

\bibitem[\protect\citeauthoryear{Chapuy et~al.}{2013}]{chapuy2013discovery}
\begin{barticle}
\bauthor{\bsnm{Chapuy}, \binits{B.}},
\bauthor{\bsnm{McKeown}, \binits{M.R.}},
\bauthor{\bsnm{Lin}, \binits{C.Y.}},
\bauthor{\bsnm{Monti}, \binits{S.}},
\bauthor{\bsnm{Roemer}, \binits{M.G.}},
\bauthor{\bsnm{Qi}, \binits{J.}},
\bauthor{\bsnm{Rahl}, \binits{P.B.}},
\bauthor{\bsnm{Sun}, \binits{H.H.}},
\bauthor{\bsnm{Yeda}, \binits{K.T.}},
\bauthor{\bsnm{Doench}, \binits{J.G.}}, \betal:
\batitle{Discovery and characterization of super-enhancer-associated dependencies in diffuse large b cell lymphoma}.
\bjtitle{Cancer cell}
\bvolume{24}(\bissue{6}),
\bfpage{777}--\blpage{790}
(\byear{2013})
\end{barticle}
\endbibitem

\bibitem[\protect\citeauthoryear{Lov{\'e}n et~al.}{2013}]{loven2013selective}
\begin{barticle}
\bauthor{\bsnm{Lov{\'e}n}, \binits{J.}},
\bauthor{\bsnm{Hoke}, \binits{H.A.}},
\bauthor{\bsnm{Lin}, \binits{C.Y.}},
\bauthor{\bsnm{Lau}, \binits{A.}},
\bauthor{\bsnm{Orlando}, \binits{D.A.}},
\bauthor{\bsnm{Vakoc}, \binits{C.R.}},
\bauthor{\bsnm{Bradner}, \binits{J.E.}},
\bauthor{\bsnm{Lee}, \binits{T.I.}},
\bauthor{\bsnm{Young}, \binits{R.A.}}:
\batitle{Selective inhibition of tumor oncogenes by disruption of super-enhancers}.
\bjtitle{Cell}
\bvolume{153}(\bissue{2}),
\bfpage{320}--\blpage{334}
(\byear{2013})
\end{barticle}
\endbibitem

\bibitem[\protect\citeauthoryear{Kanno et~al.}{2014}]{kanno2014brd4}
\begin{barticle}
\bauthor{\bsnm{Kanno}, \binits{T.}},
\bauthor{\bsnm{Kanno}, \binits{Y.}},
\bauthor{\bsnm{LeRoy}, \binits{G.}},
\bauthor{\bsnm{Campos}, \binits{E.}},
\bauthor{\bsnm{Sun}, \binits{H.-W.}},
\bauthor{\bsnm{Brooks}, \binits{S.R.}},
\bauthor{\bsnm{Vahedi}, \binits{G.}},
\bauthor{\bsnm{Heightman}, \binits{T.D.}},
\bauthor{\bsnm{Garcia}, \binits{B.A.}},
\bauthor{\bsnm{Reinberg}, \binits{D.}}, \betal:
\batitle{Brd4 assists elongation of both coding and enhancer rnas by interacting with acetylated histones}.
\bjtitle{Nature structural \& molecular biology}
\bvolume{21}(\bissue{12}),
\bfpage{1047}--\blpage{1057}
(\byear{2014})
\end{barticle}
\endbibitem

\bibitem[\protect\citeauthoryear{Winter et~al.}{2017}]{winter2017bet}
\begin{barticle}
\bauthor{\bsnm{Winter}, \binits{G.E.}},
\bauthor{\bsnm{Mayer}, \binits{A.}},
\bauthor{\bsnm{Buckley}, \binits{D.L.}},
\bauthor{\bsnm{Erb}, \binits{M.A.}},
\bauthor{\bsnm{Roderick}, \binits{J.E.}},
\bauthor{\bsnm{Vittori}, \binits{S.}},
\bauthor{\bsnm{Reyes}, \binits{J.M.}},
\bauthor{\bsnm{Iulio}, \binits{J.}},
\bauthor{\bsnm{Souza}, \binits{A.}},
\bauthor{\bsnm{Ott}, \binits{C.J.}}, \betal:
\batitle{Bet bromodomain proteins function as master transcription elongation factors independent of cdk9 recruitment}.
\bjtitle{Molecular cell}
\bvolume{67}(\bissue{1}),
\bfpage{5}--\blpage{18}
(\byear{2017})
\end{barticle}
\endbibitem

\bibitem[\protect\citeauthoryear{Altendorfer et~al.}{2022}]{altendorfer2022brd4}
\begin{barticle}
\bauthor{\bsnm{Altendorfer}, \binits{E.}},
\bauthor{\bsnm{Mochalova}, \binits{Y.}},
\bauthor{\bsnm{Mayer}, \binits{A.}}:
\batitle{Brd4: a general regulator of transcription elongation}.
\bjtitle{Transcription}
\bvolume{13}(\bissue{1-3}),
\bfpage{70}--\blpage{81}
(\byear{2022})
\end{barticle}
\endbibitem

\bibitem[\protect\citeauthoryear{Ferrie et~al.}{2022}]{ferrie2022structure}
\begin{barticle}
\bauthor{\bsnm{Ferrie}, \binits{J.J.}},
\bauthor{\bsnm{Karr}, \binits{J.P.}},
\bauthor{\bsnm{Tjian}, \binits{R.}},
\bauthor{\bsnm{Darzacq}, \binits{X.}}:
\batitle{“structure”-function relationships in eukaryotic transcription factors: the role of intrinsically disordered regions in gene regulation}.
\bjtitle{Molecular Cell}
\bvolume{82}(\bissue{21}),
\bfpage{3970}--\blpage{3984}
(\byear{2022})
\end{barticle}
\endbibitem

\bibitem[\protect\citeauthoryear{Gillespie et~al.}{2020}]{gillespie2020absolute}
\begin{barticle}
\bauthor{\bsnm{Gillespie}, \binits{M.A.}},
\bauthor{\bsnm{Palii}, \binits{C.G.}},
\bauthor{\bsnm{Sanchez-Taltavull}, \binits{D.}},
\bauthor{\bsnm{Shannon}, \binits{P.}},
\bauthor{\bsnm{Longabaugh}, \binits{W.J.}},
\bauthor{\bsnm{Downes}, \binits{D.J.}},
\bauthor{\bsnm{Sivaraman}, \binits{K.}},
\bauthor{\bsnm{Espinoza}, \binits{H.M.}},
\bauthor{\bsnm{Hughes}, \binits{J.R.}},
\bauthor{\bsnm{Price}, \binits{N.D.}}, \betal:
\batitle{Absolute quantification of transcription factors reveals principles of gene regulation in erythropoiesis}.
\bjtitle{Molecular cell}
\bvolume{78}(\bissue{5}),
\bfpage{960}--\blpage{974}
(\byear{2020})
\end{barticle}
\endbibitem

\bibitem[\protect\citeauthoryear{Polansky and Schwab}{2020}]{polansky2020dynamics}
\begin{barticle}
\bauthor{\bsnm{Polansky}, \binits{H.}},
\bauthor{\bsnm{Schwab}, \binits{H.}}:
\batitle{Dynamics of sequestering the limiting p300/cbp, viral cis-regulatory elements, and disease}.
\bjtitle{Journal of Biosciences}
\bvolume{45},
\bfpage{1}--\blpage{5}
(\byear{2020})
\end{barticle}
\endbibitem

\bibitem[\protect\citeauthoryear{Dey et~al.}{2003}]{dey2003double}
\begin{barticle}
\bauthor{\bsnm{Dey}, \binits{A.}},
\bauthor{\bsnm{Chitsaz}, \binits{F.}},
\bauthor{\bsnm{Abbasi}, \binits{A.}},
\bauthor{\bsnm{Misteli}, \binits{T.}},
\bauthor{\bsnm{Ozato}, \binits{K.}}:
\batitle{The double bromodomain protein brd4 binds to acetylated chromatin during interphase and mitosis}.
\bjtitle{Proceedings of the National Academy of Sciences}
\bvolume{100}(\bissue{15}),
\bfpage{8758}--\blpage{8763}
(\byear{2003})
\end{barticle}
\endbibitem

\bibitem[\protect\citeauthoryear{Hah et~al.}{2011}]{hah2011rapid}
\begin{barticle}
\bauthor{\bsnm{Hah}, \binits{N.}},
\bauthor{\bsnm{Danko}, \binits{C.G.}},
\bauthor{\bsnm{Core}, \binits{L.}},
\bauthor{\bsnm{Waterfall}, \binits{J.J.}},
\bauthor{\bsnm{Siepel}, \binits{A.}},
\bauthor{\bsnm{Lis}, \binits{J.T.}},
\bauthor{\bsnm{Kraus}, \binits{W.L.}}:
\batitle{A rapid, extensive, and transient transcriptional response to estrogen signaling in breast cancer cells}.
\bjtitle{Cell}
\bvolume{145}(\bissue{4}),
\bfpage{622}--\blpage{634}
(\byear{2011})
\end{barticle}
\endbibitem

\bibitem[\protect\citeauthoryear{Step et~al.}{2014}]{step2014anti}
\begin{barticle}
\bauthor{\bsnm{Step}, \binits{S.E.}},
\bauthor{\bsnm{Lim}, \binits{H.-W.}},
\bauthor{\bsnm{Marinis}, \binits{J.M.}},
\bauthor{\bsnm{Prokesch}, \binits{A.}},
\bauthor{\bsnm{Steger}, \binits{D.J.}},
\bauthor{\bsnm{You}, \binits{S.-H.}},
\bauthor{\bsnm{Won}, \binits{K.-J.}},
\bauthor{\bsnm{Lazar}, \binits{M.A.}}:
\batitle{Anti-diabetic rosiglitazone remodels the adipocyte transcriptome by redistributing transcription to ppar$\gamma$-driven enhancers}.
\bjtitle{Genes \& development}
\bvolume{28}(\bissue{9}),
\bfpage{1018}--\blpage{1028}
(\byear{2014})
\end{barticle}
\endbibitem

\bibitem[\protect\citeauthoryear{Brown et~al.}{2014}]{brown2014nf}
\begin{barticle}
\bauthor{\bsnm{Brown}, \binits{J.D.}},
\bauthor{\bsnm{Lin}, \binits{C.Y.}},
\bauthor{\bsnm{Duan}, \binits{Q.}},
\bauthor{\bsnm{Griffin}, \binits{G.}},
\bauthor{\bsnm{Federation}, \binits{A.J.}},
\bauthor{\bsnm{Paranal}, \binits{R.M.}},
\bauthor{\bsnm{Bair}, \binits{S.}},
\bauthor{\bsnm{Newton}, \binits{G.}},
\bauthor{\bsnm{Lichtman}, \binits{A.H.}},
\bauthor{\bsnm{Kung}, \binits{A.L.}}, \betal:
\batitle{Nf-$\kappa$b directs dynamic super enhancer formation in inflammation and atherogenesis}.
\bjtitle{Molecular cell}
\bvolume{56}(\bissue{2}),
\bfpage{219}--\blpage{231}
(\byear{2014})
\end{barticle}
\endbibitem

\bibitem[\protect\citeauthoryear{Schmidt et~al.}{2015}]{schmidt2015acute}
\begin{barticle}
\bauthor{\bsnm{Schmidt}, \binits{S.F.}},
\bauthor{\bsnm{Larsen}, \binits{B.D.}},
\bauthor{\bsnm{Loft}, \binits{A.}},
\bauthor{\bsnm{Nielsen}, \binits{R.}},
\bauthor{\bsnm{Madsen}, \binits{J.G.S.}},
\bauthor{\bsnm{Mandrup}, \binits{S.}}:
\batitle{Acute tnf-induced repression of cell identity genes is mediated by nf$\kappa$b-directed redistribution of cofactors from super-enhancers}.
\bjtitle{Genome research}
\bvolume{25}(\bissue{9}),
\bfpage{1281}--\blpage{1294}
(\byear{2015})
\end{barticle}
\endbibitem

\bibitem[\protect\citeauthoryear{Lee et~al.}{2017}]{lee2017brd4}
\begin{barticle}
\bauthor{\bsnm{Lee}, \binits{J.-E.}},
\bauthor{\bsnm{Park}, \binits{Y.-K.}},
\bauthor{\bsnm{Park}, \binits{S.}},
\bauthor{\bsnm{Jang}, \binits{Y.}},
\bauthor{\bsnm{Waring}, \binits{N.}},
\bauthor{\bsnm{Dey}, \binits{A.}},
\bauthor{\bsnm{Ozato}, \binits{K.}},
\bauthor{\bsnm{Lai}, \binits{B.}},
\bauthor{\bsnm{Peng}, \binits{W.}},
\bauthor{\bsnm{Ge}, \binits{K.}}:
\batitle{Brd4 binds to active enhancers to control cell identity gene induction in adipogenesis and myogenesis}.
\bjtitle{Nature communications}
\bvolume{8}(\bissue{1}),
\bfpage{2217}
(\byear{2017})
\end{barticle}
\endbibitem

\bibitem[\protect\citeauthoryear{Mishra et~al.}{2017}]{mishra2017histone}
\begin{barticle}
\bauthor{\bsnm{Mishra}, \binits{V.K.}},
\bauthor{\bsnm{Wegwitz}, \binits{F.}},
\bauthor{\bsnm{Kosinsky}, \binits{R.L.}},
\bauthor{\bsnm{Sen}, \binits{M.}},
\bauthor{\bsnm{Baumgartner}, \binits{R.}},
\bauthor{\bsnm{Wulff}, \binits{T.}},
\bauthor{\bsnm{Siveke}, \binits{J.T.}},
\bauthor{\bsnm{Schildhaus}, \binits{H.-U.}},
\bauthor{\bsnm{Najafova}, \binits{Z.}},
\bauthor{\bsnm{Kari}, \binits{V.}}, \betal:
\batitle{Histone deacetylase class-i inhibition promotes epithelial gene expression in pancreatic cancer cells in a brd4-and myc-dependent manner}.
\bjtitle{Nucleic acids research}
\bvolume{45}(\bissue{11}),
\bfpage{6334}--\blpage{6349}
(\byear{2017})
\end{barticle}
\endbibitem

\bibitem[\protect\citeauthoryear{Mari{\'e} et~al.}{2018}]{marie2018hdac}
\begin{barticle}
\bauthor{\bsnm{Mari{\'e}}, \binits{I.J.}},
\bauthor{\bsnm{Chang}, \binits{H.-M.}},
\bauthor{\bsnm{Levy}, \binits{D.E.}}:
\batitle{Hdac stimulates gene expression through brd4 availability in response to ifn and in interferonopathies}.
\bjtitle{Journal of Experimental Medicine}
\bvolume{215}(\bissue{12}),
\bfpage{3194}--\blpage{3212}
(\byear{2018})
\end{barticle}
\endbibitem

\bibitem[\protect\citeauthoryear{Brown et~al.}{2018}]{brown2018bet}
\begin{barticle}
\bauthor{\bsnm{Brown}, \binits{J.D.}},
\bauthor{\bsnm{Feldman}, \binits{Z.B.}},
\bauthor{\bsnm{Doherty}, \binits{S.P.}},
\bauthor{\bsnm{Reyes}, \binits{J.M.}},
\bauthor{\bsnm{Rahl}, \binits{P.B.}},
\bauthor{\bsnm{Lin}, \binits{C.Y.}},
\bauthor{\bsnm{Sheng}, \binits{Q.}},
\bauthor{\bsnm{Duan}, \binits{Q.}},
\bauthor{\bsnm{Federation}, \binits{A.J.}},
\bauthor{\bsnm{Kung}, \binits{A.L.}}, \betal:
\batitle{Bet bromodomain proteins regulate enhancer function during adipogenesis}.
\bjtitle{Proceedings of the National Academy of Sciences}
\bvolume{115}(\bissue{9}),
\bfpage{2144}--\blpage{2149}
(\byear{2018})
\end{barticle}
\endbibitem

\bibitem[\protect\citeauthoryear{Slaughter et~al.}{2021}]{slaughter2021hdac}
\begin{botherref}
\oauthor{\bsnm{Slaughter}, \binits{M.J.}},
\oauthor{\bsnm{Shanle}, \binits{E.K.}},
\oauthor{\bsnm{Khan}, \binits{A.}},
\oauthor{\bsnm{Chua}, \binits{K.F.}},
\oauthor{\bsnm{Hong}, \binits{T.}},
\oauthor{\bsnm{Boxer}, \binits{L.D.}},
\oauthor{\bsnm{Allis}, \binits{C.D.}},
\oauthor{\bsnm{Josefowicz}, \binits{S.Z.}},
\oauthor{\bsnm{Garcia}, \binits{B.A.}},
\oauthor{\bsnm{Rothbart}, \binits{S.B.}}, et al.:
Hdac inhibition results in widespread alteration of the histone acetylation landscape and brd4 targeting to gene bodies.
Cell reports
\textbf{34}(3)
(2021)
\end{botherref}
\endbibitem

\bibitem[\protect\citeauthoryear{Di~Micco et~al.}{2014}]{di2014control}
\begin{barticle}
\bauthor{\bsnm{Di~Micco}, \binits{R.}},
\bauthor{\bsnm{Fontanals-Cirera}, \binits{B.}},
\bauthor{\bsnm{Low}, \binits{V.}},
\bauthor{\bsnm{Ntziachristos}, \binits{P.}},
\bauthor{\bsnm{Yuen}, \binits{S.K.}},
\bauthor{\bsnm{Lovell}, \binits{C.D.}},
\bauthor{\bsnm{Dolgalev}, \binits{I.}},
\bauthor{\bsnm{Yonekubo}, \binits{Y.}},
\bauthor{\bsnm{Zhang}, \binits{G.}},
\bauthor{\bsnm{Rusinova}, \binits{E.}}, \betal:
\batitle{Control of embryonic stem cell identity by brd4-dependent transcriptional elongation of super-enhancer-associated pluripotency genes}.
\bjtitle{Cell reports}
\bvolume{9}(\bissue{1}),
\bfpage{234}--\blpage{247}
(\byear{2014})
\end{barticle}
\endbibitem

\bibitem[\protect\citeauthoryear{Lipinski et~al.}{2020}]{lipinski2020kat3}
\begin{barticle}
\bauthor{\bsnm{Lipinski}, \binits{M.}},
\bauthor{\bsnm{Mu{\~n}oz-Viana}, \binits{R.}},
\bauthor{\bsnm{Del~Blanco}, \binits{B.}},
\bauthor{\bsnm{Marquez-Galera}, \binits{A.}},
\bauthor{\bsnm{Medrano-Relinque}, \binits{J.}},
\bauthor{\bsnm{Caram{\'e}s}, \binits{J.M.}},
\bauthor{\bsnm{Szczepankiewicz}, \binits{A.A.}},
\bauthor{\bsnm{Fernandez-Albert}, \binits{J.}},
\bauthor{\bsnm{Navarr{\'o}n}, \binits{C.M.}},
\bauthor{\bsnm{Olivares}, \binits{R.}}, \betal:
\batitle{Kat3-dependent acetylation of cell type-specific genes maintains neuronal identity in the adult mouse brain}.
\bjtitle{Nature communications}
\bvolume{11}(\bissue{1}),
\bfpage{2588}
(\byear{2020})
\end{barticle}
\endbibitem

\bibitem[\protect\citeauthoryear{Neumayr et~al.}{2022}]{neumayr2022differential}
\begin{barticle}
\bauthor{\bsnm{Neumayr}, \binits{C.}},
\bauthor{\bsnm{Haberle}, \binits{V.}},
\bauthor{\bsnm{Serebreni}, \binits{L.}},
\bauthor{\bsnm{Karner}, \binits{K.}},
\bauthor{\bsnm{Hendy}, \binits{O.}},
\bauthor{\bsnm{Boija}, \binits{A.}},
\bauthor{\bsnm{Henninger}, \binits{J.E.}},
\bauthor{\bsnm{Li}, \binits{C.H.}},
\bauthor{\bsnm{Stejskal}, \binits{K.}},
\bauthor{\bsnm{Lin}, \binits{G.}}, \betal:
\batitle{Differential cofactor dependencies define distinct types of human enhancers}.
\bjtitle{Nature}
\bvolume{606}(\bissue{7913}),
\bfpage{406}--\blpage{413}
(\byear{2022})
\end{barticle}
\endbibitem

\bibitem[\protect\citeauthoryear{Inge et~al.}{2024}]{inge2024rapid}
\begin{barticle}
\bauthor{\bsnm{Inge}, \binits{M.M.}},
\bauthor{\bsnm{Miller}, \binits{R.}},
\bauthor{\bsnm{Hook}, \binits{H.}},
\bauthor{\bsnm{Bray}, \binits{D.}},
\bauthor{\bsnm{Keenan}, \binits{J.L.}},
\bauthor{\bsnm{Zhao}, \binits{R.}},
\bauthor{\bsnm{Gilmore}, \binits{T.D.}},
\bauthor{\bsnm{Siggers}, \binits{T.}}:
\batitle{Rapid profiling of transcription factor--cofactor interaction networks reveals principles of epigenetic regulation}.
\bjtitle{Nucleic Acids Research}
\bvolume{52}(\bissue{17}),
\bfpage{10276}--\blpage{10296}
(\byear{2024})
\end{barticle}
\endbibitem

\bibitem[\protect\citeauthoryear{Narita et~al.}{2023}]{narita2023acetylation}
\begin{barticle}
\bauthor{\bsnm{Narita}, \binits{T.}},
\bauthor{\bsnm{Higashijima}, \binits{Y.}},
\bauthor{\bsnm{Kilic}, \binits{S.}},
\bauthor{\bsnm{Liebner}, \binits{T.}},
\bauthor{\bsnm{Walter}, \binits{J.}},
\bauthor{\bsnm{Choudhary}, \binits{C.}}:
\batitle{Acetylation of histone h2b marks active enhancers and predicts cbp/p300 target genes}.
\bjtitle{Nature Genetics}
\bvolume{55}(\bissue{4}),
\bfpage{679}--\blpage{692}
(\byear{2023})
\end{barticle}
\endbibitem

\bibitem[\protect\citeauthoryear{Kikuchi et~al.}{2023}]{kikuchi2023epigenetic}
\begin{barticle}
\bauthor{\bsnm{Kikuchi}, \binits{M.}},
\bauthor{\bsnm{Morita}, \binits{S.}},
\bauthor{\bsnm{Wakamori}, \binits{M.}},
\bauthor{\bsnm{Sato}, \binits{S.}},
\bauthor{\bsnm{Uchikubo-Kamo}, \binits{T.}},
\bauthor{\bsnm{Suzuki}, \binits{T.}},
\bauthor{\bsnm{Dohmae}, \binits{N.}},
\bauthor{\bsnm{Shirouzu}, \binits{M.}},
\bauthor{\bsnm{Umehara}, \binits{T.}}:
\batitle{Epigenetic mechanisms to propagate histone acetylation by p300/cbp}.
\bjtitle{Nature Communications}
\bvolume{14}(\bissue{1}),
\bfpage{4103}
(\byear{2023})
\end{barticle}
\endbibitem

\bibitem[\protect\citeauthoryear{Filippakopoulos et~al.}{2012}]{filippakopoulos2012histone}
\begin{barticle}
\bauthor{\bsnm{Filippakopoulos}, \binits{P.}},
\bauthor{\bsnm{Picaud}, \binits{S.}},
\bauthor{\bsnm{Mangos}, \binits{M.}},
\bauthor{\bsnm{Keates}, \binits{T.}},
\bauthor{\bsnm{Lambert}, \binits{J.-P.}},
\bauthor{\bsnm{Barsyte-Lovejoy}, \binits{D.}},
\bauthor{\bsnm{Felletar}, \binits{I.}},
\bauthor{\bsnm{Volkmer}, \binits{R.}},
\bauthor{\bsnm{M{\"u}ller}, \binits{S.}},
\bauthor{\bsnm{Pawson}, \binits{T.}}, \betal:
\batitle{Histone recognition and large-scale structural analysis of the human bromodomain family}.
\bjtitle{Cell}
\bvolume{149}(\bissue{1}),
\bfpage{214}--\blpage{231}
(\byear{2012})
\end{barticle}
\endbibitem

\bibitem[\protect\citeauthoryear{Jung et~al.}{2014}]{jung2014affinity}
\begin{barticle}
\bauthor{\bsnm{Jung}, \binits{M.}},
\bauthor{\bsnm{Philpott}, \binits{M.}},
\bauthor{\bsnm{M{\"u}ller}, \binits{S.}},
\bauthor{\bsnm{Schulze}, \binits{J.}},
\bauthor{\bsnm{Badock}, \binits{V.}},
\bauthor{\bsnm{Ebersp{\"a}cher}, \binits{U.}},
\bauthor{\bsnm{Moosmayer}, \binits{D.}},
\bauthor{\bsnm{Bader}, \binits{B.}},
\bauthor{\bsnm{Schmees}, \binits{N.}},
\bauthor{\bsnm{Fern{\'a}ndez-Montalv{\'a}n}, \binits{A.}}, \betal:
\batitle{Affinity map of bromodomain protein 4 (brd4) interactions with the histone h4 tail and the small molecule inhibitor jq1}.
\bjtitle{Journal of biological chemistry}
\bvolume{289}(\bissue{13}),
\bfpage{9304}--\blpage{9319}
(\byear{2014})
\end{barticle}
\endbibitem

\bibitem[\protect\citeauthoryear{Liu et~al.}{2013}]{liu2013brd4}
\begin{barticle}
\bauthor{\bsnm{Liu}, \binits{W.}},
\bauthor{\bsnm{Ma}, \binits{Q.}},
\bauthor{\bsnm{Wong}, \binits{K.}},
\bauthor{\bsnm{Li}, \binits{W.}},
\bauthor{\bsnm{Ohgi}, \binits{K.}},
\bauthor{\bsnm{Zhang}, \binits{J.}},
\bauthor{\bsnm{Aggarwal}, \binits{A.K.}},
\bauthor{\bsnm{Rosenfeld}, \binits{M.G.}}:
\batitle{Brd4 and jmjd6-associated anti-pause enhancers in regulation of transcriptional pause release}.
\bjtitle{Cell}
\bvolume{155}(\bissue{7}),
\bfpage{1581}--\blpage{1595}
(\byear{2013})
\end{barticle}
\endbibitem

\bibitem[\protect\citeauthoryear{Bhagwat et~al.}{2016}]{bhagwat2016bet}
\begin{barticle}
\bauthor{\bsnm{Bhagwat}, \binits{A.S.}},
\bauthor{\bsnm{Roe}, \binits{J.-S.}},
\bauthor{\bsnm{Mok}, \binits{B.Y.}},
\bauthor{\bsnm{Hohmann}, \binits{A.F.}},
\bauthor{\bsnm{Shi}, \binits{J.}},
\bauthor{\bsnm{Vakoc}, \binits{C.R.}}:
\batitle{Bet bromodomain inhibition releases the mediator complex from select cis-regulatory elements}.
\bjtitle{Cell reports}
\bvolume{15}(\bissue{3}),
\bfpage{519}--\blpage{530}
(\byear{2016})
\end{barticle}
\endbibitem

\bibitem[\protect\citeauthoryear{Najafova et~al.}{2017}]{najafova2017brd4}
\begin{barticle}
\bauthor{\bsnm{Najafova}, \binits{Z.}},
\bauthor{\bsnm{Tirado-Magallanes}, \binits{R.}},
\bauthor{\bsnm{Subramaniam}, \binits{M.}},
\bauthor{\bsnm{Hossan}, \binits{T.}},
\bauthor{\bsnm{Schmidt}, \binits{G.}},
\bauthor{\bsnm{Nagarajan}, \binits{S.}},
\bauthor{\bsnm{Baumgart}, \binits{S.J.}},
\bauthor{\bsnm{Mishra}, \binits{V.K.}},
\bauthor{\bsnm{Bedi}, \binits{U.}},
\bauthor{\bsnm{Hesse}, \binits{E.}}, \betal:
\batitle{Brd4 localization to lineage-specific enhancers is associated with a distinct transcription factor repertoire}.
\bjtitle{Nucleic acids research}
\bvolume{45}(\bissue{1}),
\bfpage{127}--\blpage{141}
(\byear{2017})
\end{barticle}
\endbibitem

\bibitem[\protect\citeauthoryear{Arnold et~al.}{2021}]{arnold2021brd4}
\begin{barticle}
\bauthor{\bsnm{Arnold}, \binits{M.}},
\bauthor{\bsnm{Bressin}, \binits{A.}},
\bauthor{\bsnm{Jasnovidova}, \binits{O.}},
\bauthor{\bsnm{Meierhofer}, \binits{D.}},
\bauthor{\bsnm{Mayer}, \binits{A.}}:
\batitle{A brd4-mediated elongation control point primes transcribing rna polymerase ii for 3'-processing and termination}.
\bjtitle{Molecular Cell}
\bvolume{81}(\bissue{17}),
\bfpage{3589}--\blpage{3603}
(\byear{2021})
\end{barticle}
\endbibitem

\bibitem[\protect\citeauthoryear{Bressin et~al.}{2023}]{bressin2023high}
\begin{barticle}
\bauthor{\bsnm{Bressin}, \binits{A.}},
\bauthor{\bsnm{Jasnovidova}, \binits{O.}},
\bauthor{\bsnm{Arnold}, \binits{M.}},
\bauthor{\bsnm{Altendorfer}, \binits{E.}},
\bauthor{\bsnm{Trajkovski}, \binits{F.}},
\bauthor{\bsnm{Kratz}, \binits{T.A.}},
\bauthor{\bsnm{Handzlik}, \binits{J.E.}},
\bauthor{\bsnm{Hnisz}, \binits{D.}},
\bauthor{\bsnm{Mayer}, \binits{A.}}:
\batitle{High-sensitive nascent transcript sequencing reveals brd4-specific control of widespread enhancer and target gene transcription}.
\bjtitle{Nature Communications}
\bvolume{14}(\bissue{1}),
\bfpage{4971}
(\byear{2023})
\end{barticle}
\endbibitem

\bibitem[\protect\citeauthoryear{Laurenti and G{\"o}ttgens}{2018}]{laurenti2018haematopoietic}
\begin{barticle}
\bauthor{\bsnm{Laurenti}, \binits{E.}},
\bauthor{\bsnm{G{\"o}ttgens}, \binits{B.}}:
\batitle{From haematopoietic stem cells to complex differentiation landscapes}.
\bjtitle{Nature}
\bvolume{553}(\bissue{7689}),
\bfpage{418}--\blpage{426}
(\byear{2018})
\end{barticle}
\endbibitem

\bibitem[\protect\citeauthoryear{Brown and Sanchez-Garcia}{2015}]{brown2015lineage}
\begin{barticle}
\bauthor{\bsnm{Brown}, \binits{G.}},
\bauthor{\bsnm{Sanchez-Garcia}, \binits{I.}}:
\batitle{Is lineage decision-making restricted during tumoral reprograming of haematopoietic stem cells?}
\bjtitle{Oncotarget}
\bvolume{6}(\bissue{41}),
\bfpage{43326}
(\byear{2015})
\end{barticle}
\endbibitem

\bibitem[\protect\citeauthoryear{Wanet et~al.}{2021}]{wanet2021cadherin}
\begin{barticle}
\bauthor{\bsnm{Wanet}, \binits{A.}},
\bauthor{\bsnm{Bassal}, \binits{M.A.}},
\bauthor{\bsnm{Patel}, \binits{S.B.}},
\bauthor{\bsnm{Marchi}, \binits{F.}},
\bauthor{\bsnm{Mariani}, \binits{S.A.}},
\bauthor{\bsnm{Ahmed}, \binits{N.}},
\bauthor{\bsnm{Zhang}, \binits{H.}},
\bauthor{\bsnm{Borchiellini}, \binits{M.}},
\bauthor{\bsnm{Chen}, \binits{S.}},
\bauthor{\bsnm{Zhang}, \binits{J.}}, \betal:
\batitle{E-cadherin is regulated by gata-2 and marks the early commitment of mouse hematopoietic progenitors to the basophil and mast cell fates}.
\bjtitle{Science immunology}
\bvolume{6}(\bissue{56}),
\bfpage{0178}
(\byear{2021})
\end{barticle}
\endbibitem

\bibitem[\protect\citeauthoryear{Park and Kang}{2024}]{park2024ontogenesis}
\begin{barticle}
\bauthor{\bsnm{Park}, \binits{J.}},
\bauthor{\bsnm{Kang}, \binits{S.-J.}}:
\batitle{The ontogenesis and heterogeneity of basophils}.
\bjtitle{Discovery Immunology}
\bvolume{3}(\bissue{1}),
\bfpage{003}
(\byear{2024})
\end{barticle}
\endbibitem

\bibitem[\protect\citeauthoryear{Metcalf et~al.}{2002}]{metcalf2002stem}
\begin{barticle}
\bauthor{\bsnm{Metcalf}, \binits{D.}},
\bauthor{\bsnm{Mifsud}, \binits{S.}},
\bauthor{\bsnm{Di~Rago}, \binits{L.}}:
\batitle{Stem cell factor can stimulate the formation of eosinophils by two types of murine eosinophil progenitor cells}.
\bjtitle{Stem Cells}
\bvolume{20}(\bissue{5}),
\bfpage{460}--\blpage{469}
(\byear{2002})
\end{barticle}
\endbibitem

\bibitem[\protect\citeauthoryear{Kim et~al.}{2017}]{kim2017terminally}
\begin{barticle}
\bauthor{\bsnm{Kim}, \binits{K.}},
\bauthor{\bsnm{Hwang}, \binits{S.M.}},
\bauthor{\bsnm{Kim}, \binits{S.M.}},
\bauthor{\bsnm{Park}, \binits{S.W.}},
\bauthor{\bsnm{Jung}, \binits{Y.}},
\bauthor{\bsnm{Chung}, \binits{I.Y.}}:
\batitle{Terminally differentiating eosinophils express neutrophil primary granule proteins as well as eosinophil-specific granule proteins in a temporal manner}.
\bjtitle{Immune Network}
\bvolume{17}(\bissue{6}),
\bfpage{410}--\blpage{423}
(\byear{2017})
\end{barticle}
\endbibitem

\bibitem[\protect\citeauthoryear{Kwok et~al.}{2020}]{kwok2020combinatorial}
\begin{barticle}
\bauthor{\bsnm{Kwok}, \binits{I.}},
\bauthor{\bsnm{Becht}, \binits{E.}},
\bauthor{\bsnm{Xia}, \binits{Y.}},
\bauthor{\bsnm{Ng}, \binits{M.}},
\bauthor{\bsnm{Teh}, \binits{Y.C.}},
\bauthor{\bsnm{Tan}, \binits{L.}},
\bauthor{\bsnm{Evrard}, \binits{M.}},
\bauthor{\bsnm{Li}, \binits{J.L.}},
\bauthor{\bsnm{Tran}, \binits{H.T.}},
\bauthor{\bsnm{Tan}, \binits{Y.}}, \betal:
\batitle{Combinatorial single-cell analyses of granulocyte-monocyte progenitor heterogeneity reveals an early uni-potent neutrophil progenitor}.
\bjtitle{Immunity}
\bvolume{53}(\bissue{2}),
\bfpage{303}--\blpage{318}
(\byear{2020})
\end{barticle}
\endbibitem

\bibitem[\protect\citeauthoryear{Stanoev and Koseska}{2022}]{stanoev2022robust}
\begin{barticle}
\bauthor{\bsnm{Stanoev}, \binits{A.}},
\bauthor{\bsnm{Koseska}, \binits{A.}}:
\batitle{Robust cell identity specifications through transitions in the collective state of growing developmental systems}.
\bjtitle{Current Opinion in Systems Biology}
\bvolume{31},
\bfpage{100437}
(\byear{2022})
\end{barticle}
\endbibitem

\bibitem[\protect\citeauthoryear{Simons and Karin}{2024}]{simons2024tuning}
\begin{barticle}
\bauthor{\bsnm{Simons}, \binits{B.D.}},
\bauthor{\bsnm{Karin}, \binits{O.}}:
\batitle{Tuning of plasma cell lifespan by competition explains the longevity and heterogeneity of antibody persistence}.
\bjtitle{Immunity}
\bvolume{57}(\bissue{3}),
\bfpage{600}--\blpage{611}
(\byear{2024})
\end{barticle}
\endbibitem

\bibitem[\protect\citeauthoryear{Eisele et~al.}{2022}]{eisele2022erythropoietin}
\begin{barticle}
\bauthor{\bsnm{Eisele}, \binits{A.S.}},
\bauthor{\bsnm{Cosgrove}, \binits{J.}},
\bauthor{\bsnm{Magniez}, \binits{A.}},
\bauthor{\bsnm{Tubeuf}, \binits{E.}},
\bauthor{\bsnm{Bento}, \binits{S.T.}},
\bauthor{\bsnm{Conrad}, \binits{C.}},
\bauthor{\bsnm{Cayrac}, \binits{F.}},
\bauthor{\bsnm{Tak}, \binits{T.}},
\bauthor{\bsnm{Lyne}, \binits{A.-M.}},
\bauthor{\bsnm{Urbanus}, \binits{J.}}, \betal:
\batitle{Erythropoietin directly remodels the clonal composition of murine hematopoietic multipotent progenitor cells}.
\bjtitle{Elife}
\bvolume{11},
\bfpage{66922}
(\byear{2022})
\end{barticle}
\endbibitem

\bibitem[\protect\citeauthoryear{Moreau et~al.}{2021}]{moreau2021single}
\begin{barticle}
\bauthor{\bsnm{Moreau}, \binits{M.X.}},
\bauthor{\bsnm{Saillour}, \binits{Y.}},
\bauthor{\bsnm{Cwetsch}, \binits{A.W.}},
\bauthor{\bsnm{Pierani}, \binits{A.}},
\bauthor{\bsnm{Causeret}, \binits{F.}}:
\batitle{Single-cell transcriptomics of the early developing mouse cerebral cortex disentangle the spatial and temporal components of neuronal fate acquisition}.
\bjtitle{Development}
\bvolume{148}(\bissue{14}),
\bfpage{197962}
(\byear{2021})
\end{barticle}
\endbibitem

\bibitem[\protect\citeauthoryear{Nusser et~al.}{2022}]{nusser2022developmental}
\begin{barticle}
\bauthor{\bsnm{Nusser}, \binits{A.}},
\bauthor{\bsnm{Sagar}},
\bauthor{\bsnm{Swann}, \binits{J.B.}},
\bauthor{\bsnm{Krauth}, \binits{B.}},
\bauthor{\bsnm{Diekhoff}, \binits{D.}},
\bauthor{\bsnm{Calderon}, \binits{L.}},
\bauthor{\bsnm{Happe}, \binits{C.}},
\bauthor{\bsnm{Gruen}, \binits{D.}},
\bauthor{\bsnm{Boehm}, \binits{T.}}:
\batitle{Developmental dynamics of two bipotent thymic epithelial progenitor types}.
\bjtitle{Nature}
\bvolume{606}(\bissue{7912}),
\bfpage{165}--\blpage{171}
(\byear{2022})
\end{barticle}
\endbibitem

\bibitem[\protect\citeauthoryear{Choi et~al.}{2019}]{choi2019haemopedia}
\begin{barticle}
\bauthor{\bsnm{Choi}, \binits{J.}},
\bauthor{\bsnm{Baldwin}, \binits{T.M.}},
\bauthor{\bsnm{Wong}, \binits{M.}},
\bauthor{\bsnm{Bolden}, \binits{J.E.}},
\bauthor{\bsnm{Fairfax}, \binits{K.A.}},
\bauthor{\bsnm{Lucas}, \binits{E.C.}},
\bauthor{\bsnm{Cole}, \binits{R.}},
\bauthor{\bsnm{Biben}, \binits{C.}},
\bauthor{\bsnm{Morgan}, \binits{C.}},
\bauthor{\bsnm{Ramsay}, \binits{K.A.}}, \betal:
\batitle{Haemopedia rna-seq: a database of gene expression during haematopoiesis in mice and humans}.
\bjtitle{Nucleic acids research}
\bvolume{47}(\bissue{D1}),
\bfpage{780}--\blpage{785}
(\byear{2019})
\end{barticle}
\endbibitem

\bibitem[\protect\citeauthoryear{Hanahan}{2022}]{hanahan2022hallmarks}
\begin{barticle}
\bauthor{\bsnm{Hanahan}, \binits{D.}}:
\batitle{Hallmarks of cancer: new dimensions}.
\bjtitle{Cancer discovery}
\bvolume{12}(\bissue{1}),
\bfpage{31}--\blpage{46}
(\byear{2022})
\end{barticle}
\endbibitem

\bibitem[\protect\citeauthoryear{Bispo et~al.}{2020}]{bispo2020epidemiology}
\begin{barticle}
\bauthor{\bsnm{Bispo}, \binits{J.A.B.}},
\bauthor{\bsnm{Pinheiro}, \binits{P.S.}},
\bauthor{\bsnm{Kobetz}, \binits{E.K.}}:
\batitle{Epidemiology and etiology of leukemia and lymphoma}.
\bjtitle{Cold Spring Harbor perspectives in medicine}
\bvolume{10}(\bissue{6}),
\bfpage{034819}
(\byear{2020})
\end{barticle}
\endbibitem

\bibitem[\protect\citeauthoryear{Kang-Decker et~al.}{2004}]{kang2004loss}
\begin{barticle}
\bauthor{\bsnm{Kang-Decker}, \binits{N.}},
\bauthor{\bsnm{Tong}, \binits{C.}},
\bauthor{\bsnm{Boussouar}, \binits{F.}},
\bauthor{\bsnm{Baker}, \binits{D.J.}},
\bauthor{\bsnm{Xu}, \binits{W.}},
\bauthor{\bsnm{Leontovich}, \binits{A.A.}},
\bauthor{\bsnm{Taylor}, \binits{W.R.}},
\bauthor{\bsnm{Brindle}, \binits{P.K.}},
\bauthor{\bsnm{Van~Deursen}, \binits{J.M.}}:
\batitle{Loss of cbp causes t cell lymphomagenesis in synergy with p27kip1 insufficiency}.
\bjtitle{Cancer cell}
\bvolume{5}(\bissue{2}),
\bfpage{177}--\blpage{189}
(\byear{2004})
\end{barticle}
\endbibitem

\bibitem[\protect\citeauthoryear{Iyer et~al.}{2004}]{iyer2004p300}
\begin{barticle}
\bauthor{\bsnm{Iyer}, \binits{N.G.}},
\bauthor{\bsnm{{\"O}zdag}, \binits{H.}},
\bauthor{\bsnm{Caldas}, \binits{C.}}:
\batitle{p300/cbp and cancer}.
\bjtitle{Oncogene}
\bvolume{23}(\bissue{24}),
\bfpage{4225}--\blpage{4231}
(\byear{2004})
\end{barticle}
\endbibitem

\bibitem[\protect\citeauthoryear{Tou et~al.}{2004}]{tou2004regulation}
\begin{barticle}
\bauthor{\bsnm{Tou}, \binits{L.}},
\bauthor{\bsnm{Liu}, \binits{Q.}},
\bauthor{\bsnm{Shivdasani}, \binits{R.A.}}:
\batitle{Regulation of mammalian epithelial differentiation and intestine development by class i histone deacetylases}.
\bjtitle{Molecular and Cellular Biology}
\bvolume{24}(\bissue{8}),
\bfpage{3132}--\blpage{3139}
(\byear{2004})
\end{barticle}
\endbibitem

\bibitem[\protect\citeauthoryear{Pasqualucci et~al.}{2011}]{pasqualucci2011inactivating}
\begin{barticle}
\bauthor{\bsnm{Pasqualucci}, \binits{L.}},
\bauthor{\bsnm{Dominguez-Sola}, \binits{D.}},
\bauthor{\bsnm{Chiarenza}, \binits{A.}},
\bauthor{\bsnm{Fabbri}, \binits{G.}},
\bauthor{\bsnm{Grunn}, \binits{A.}},
\bauthor{\bsnm{Trifonov}, \binits{V.}},
\bauthor{\bsnm{Kasper}, \binits{L.H.}},
\bauthor{\bsnm{Lerach}, \binits{S.}},
\bauthor{\bsnm{Tang}, \binits{H.}},
\bauthor{\bsnm{Ma}, \binits{J.}}, \betal:
\batitle{Inactivating mutations of acetyltransferase genes in b-cell lymphoma}.
\bjtitle{Nature}
\bvolume{471}(\bissue{7337}),
\bfpage{189}--\blpage{195}
(\byear{2011})
\end{barticle}
\endbibitem

\bibitem[\protect\citeauthoryear{Likasitwatanakul et~al.}{2024}]{likasitwatanakul2024chemical}
\begin{botherref}
\oauthor{\bsnm{Likasitwatanakul}, \binits{P.}},
\oauthor{\bsnm{Li}, \binits{Z.}},
\oauthor{\bsnm{Doan}, \binits{P.}},
\oauthor{\bsnm{Spisak}, \binits{S.}},
\oauthor{\bsnm{Akhouri}, \binits{K.R.}},
\oauthor{\bsnm{Liu}, \binits{Q.}},
\oauthor{\bsnm{Liow}, \binits{P.}},
\oauthor{\bsnm{Lee}, \binits{S.E.}},
\oauthor{\bsnm{Chen}, \binits{D.}},
\oauthor{\bsnm{Bala}, \binits{P.}}, et al.:
Chemical perturbations impacting histone acetylation govern colorectal cancer differentiation.
bioRxiv,
2024--12
(2024)
\end{botherref}
\endbibitem

\bibitem[\protect\citeauthoryear{Marks et~al.}{2000}]{marks2000histone}
\begin{barticle}
\bauthor{\bsnm{Marks}, \binits{P.A.}},
\bauthor{\bsnm{Richon}, \binits{V.M.}},
\bauthor{\bsnm{Rifkind}, \binits{R.A.}}:
\batitle{Histone deacetylase inhibitors: inducers of differentiation or apoptosis of transformed cells}.
\bjtitle{Journal of the National Cancer Institute}
\bvolume{92}(\bissue{15}),
\bfpage{1210}--\blpage{1216}
(\byear{2000})
\end{barticle}
\endbibitem

\bibitem[\protect\citeauthoryear{G{\"o}ttlicher et~al.}{2001}]{gottlicher2001valproic}
\begin{botherref}
\oauthor{\bsnm{G{\"o}ttlicher}, \binits{M.}},
\oauthor{\bsnm{Minucci}, \binits{S.}},
\oauthor{\bsnm{Zhu}, \binits{P.}},
\oauthor{\bsnm{Kr{\"a}mer}, \binits{O.H.}},
\oauthor{\bsnm{Schimpf}, \binits{A.}},
\oauthor{\bsnm{Giavara}, \binits{S.}},
\oauthor{\bsnm{Sleeman}, \binits{J.P.}},
\oauthor{\bsnm{Coco}, \binits{F.L.}},
\oauthor{\bsnm{Nervi}, \binits{C.}},
\oauthor{\bsnm{Pelicci}, \binits{P.G.}}, et al.:
Valproic acid defines a novel class of hdac inhibitors inducing differentiation of transformed cells.
The EMBO journal
(2001)
\end{botherref}
\endbibitem

\bibitem[\protect\citeauthoryear{Hsieh et~al.}{2004}]{hsieh2004histone}
\begin{barticle}
\bauthor{\bsnm{Hsieh}, \binits{J.}},
\bauthor{\bsnm{Nakashima}, \binits{K.}},
\bauthor{\bsnm{Kuwabara}, \binits{T.}},
\bauthor{\bsnm{Mejia}, \binits{E.}},
\bauthor{\bsnm{Gage}, \binits{F.H.}}:
\batitle{Histone deacetylase inhibition-mediated neuronal differentiation of multipotent adult neural progenitor cells}.
\bjtitle{Proceedings of the National Academy of Sciences}
\bvolume{101}(\bissue{47}),
\bfpage{16659}--\blpage{16664}
(\byear{2004})
\end{barticle}
\endbibitem

\bibitem[\protect\citeauthoryear{San Jos{\'e}-En{\'e}riz et~al.}{2019}]{san2019hdac}
\begin{barticle}
\bauthor{\bsnm{San~Jos{\'e}-En{\'e}riz}, \binits{E.}},
\bauthor{\bsnm{Gimenez-Camino}, \binits{N.}},
\bauthor{\bsnm{Agirre}, \binits{X.}},
\bauthor{\bsnm{Prosper}, \binits{F.}}:
\batitle{Hdac inhibitors in acute myeloid leukemia}.
\bjtitle{Cancers}
\bvolume{11}(\bissue{11}),
\bfpage{1794}
(\byear{2019})
\end{barticle}
\endbibitem

\bibitem[\protect\citeauthoryear{Egolf et~al.}{2019}]{egolf2019lsd1}
\begin{barticle}
\bauthor{\bsnm{Egolf}, \binits{S.}},
\bauthor{\bsnm{Aubert}, \binits{Y.}},
\bauthor{\bsnm{Doepner}, \binits{M.}},
\bauthor{\bsnm{Anderson}, \binits{A.}},
\bauthor{\bsnm{Maldonado-Lopez}, \binits{A.}},
\bauthor{\bsnm{Pacella}, \binits{G.}},
\bauthor{\bsnm{Lee}, \binits{J.}},
\bauthor{\bsnm{Ko}, \binits{E.K.}},
\bauthor{\bsnm{Zou}, \binits{J.}},
\bauthor{\bsnm{Lan}, \binits{Y.}}, \betal:
\batitle{Lsd1 inhibition promotes epithelial differentiation through derepression of fate-determining transcription factors}.
\bjtitle{Cell reports}
\bvolume{28}(\bissue{8}),
\bfpage{1981}--\blpage{1992}
(\byear{2019})
\end{barticle}
\endbibitem

\bibitem[\protect\citeauthoryear{Anastas et~al.}{2019}]{anastas2019re}
\begin{barticle}
\bauthor{\bsnm{Anastas}, \binits{J.N.}},
\bauthor{\bsnm{Zee}, \binits{B.M.}},
\bauthor{\bsnm{Kalin}, \binits{J.H.}},
\bauthor{\bsnm{Kim}, \binits{M.}},
\bauthor{\bsnm{Guo}, \binits{R.}},
\bauthor{\bsnm{Alexandrescu}, \binits{S.}},
\bauthor{\bsnm{Blanco}, \binits{M.A.}},
\bauthor{\bsnm{Giera}, \binits{S.}},
\bauthor{\bsnm{Gillespie}, \binits{S.M.}},
\bauthor{\bsnm{Das}, \binits{J.}}, \betal:
\batitle{Re-programing chromatin with a bifunctional lsd1/hdac inhibitor induces therapeutic differentiation in dipg}.
\bjtitle{Cancer cell}
\bvolume{36}(\bissue{5}),
\bfpage{528}--\blpage{544}
(\byear{2019})
\end{barticle}
\endbibitem

\bibitem[\protect\citeauthoryear{Xie et~al.}{2021}]{xie2021targeting}
\begin{barticle}
\bauthor{\bsnm{Xie}, \binits{J.}},
\bauthor{\bsnm{Wang}, \binits{Z.}},
\bauthor{\bsnm{Fan}, \binits{W.}},
\bauthor{\bsnm{Liu}, \binits{Y.}},
\bauthor{\bsnm{Liu}, \binits{F.}},
\bauthor{\bsnm{Wan}, \binits{X.}},
\bauthor{\bsnm{Liu}, \binits{M.}},
\bauthor{\bsnm{Wang}, \binits{X.}},
\bauthor{\bsnm{Zeng}, \binits{D.}},
\bauthor{\bsnm{Wang}, \binits{Y.}}, \betal:
\batitle{Targeting cancer cell plasticity by hdac inhibition to reverse ebv-induced dedifferentiation in nasopharyngeal carcinoma}.
\bjtitle{Signal Transduction and Targeted Therapy}
\bvolume{6}(\bissue{1}),
\bfpage{333}
(\byear{2021})
\end{barticle}
\endbibitem

\bibitem[\protect\citeauthoryear{Sive and G{\"o}ttgens}{2014}]{sive2014transcriptional}
\begin{barticle}
\bauthor{\bsnm{Sive}, \binits{J.I.}},
\bauthor{\bsnm{G{\"o}ttgens}, \binits{B.}}:
\batitle{Transcriptional network control of normal and leukaemic haematopoiesis}.
\bjtitle{Experimental cell research}
\bvolume{329}(\bissue{2}),
\bfpage{255}--\blpage{264}
(\byear{2014})
\end{barticle}
\endbibitem

\bibitem[\protect\citeauthoryear{Rosenbauer et~al.}{2004}]{rosenbauer2004acute}
\begin{barticle}
\bauthor{\bsnm{Rosenbauer}, \binits{F.}},
\bauthor{\bsnm{Wagner}, \binits{K.}},
\bauthor{\bsnm{Kutok}, \binits{J.L.}},
\bauthor{\bsnm{Iwasaki}, \binits{H.}},
\bauthor{\bsnm{Le~Beau}, \binits{M.M.}},
\bauthor{\bsnm{Okuno}, \binits{Y.}},
\bauthor{\bsnm{Akashi}, \binits{K.}},
\bauthor{\bsnm{Fiering}, \binits{S.}},
\bauthor{\bsnm{Tenen}, \binits{D.G.}}:
\batitle{Acute myeloid leukemia induced by graded reduction of a lineage-specific transcription factor, pu. 1}.
\bjtitle{Nature genetics}
\bvolume{36}(\bissue{6}),
\bfpage{624}--\blpage{630}
(\byear{2004})
\end{barticle}
\endbibitem

\bibitem[\protect\citeauthoryear{Hahn et~al.}{2011}]{hahn2011heritable}
\begin{barticle}
\bauthor{\bsnm{Hahn}, \binits{C.N.}},
\bauthor{\bsnm{Chong}, \binits{C.-E.}},
\bauthor{\bsnm{Carmichael}, \binits{C.L.}},
\bauthor{\bsnm{Wilkins}, \binits{E.J.}},
\bauthor{\bsnm{Brautigan}, \binits{P.J.}},
\bauthor{\bsnm{Li}, \binits{X.-C.}},
\bauthor{\bsnm{Babic}, \binits{M.}},
\bauthor{\bsnm{Lin}, \binits{M.}},
\bauthor{\bsnm{Carmagnac}, \binits{A.}},
\bauthor{\bsnm{Lee}, \binits{Y.K.}}, \betal:
\batitle{Heritable gata2 mutations associated with familial myelodysplastic syndrome and acute myeloid leukemia}.
\bjtitle{Nature genetics}
\bvolume{43}(\bissue{10}),
\bfpage{1012}--\blpage{1017}
(\byear{2011})
\end{barticle}
\endbibitem

\bibitem[\protect\citeauthoryear{Will et~al.}{2015}]{will2015minimal}
\begin{barticle}
\bauthor{\bsnm{Will}, \binits{B.}},
\bauthor{\bsnm{Vogler}, \binits{T.O.}},
\bauthor{\bsnm{Narayanagari}, \binits{S.}},
\bauthor{\bsnm{Bartholdy}, \binits{B.}},
\bauthor{\bsnm{Todorova}, \binits{T.I.}},
\bauthor{\bsnm{Silva~Ferreira}, \binits{M.}},
\bauthor{\bsnm{Chen}, \binits{J.}},
\bauthor{\bsnm{Yu}, \binits{Y.}},
\bauthor{\bsnm{Mayer}, \binits{J.}},
\bauthor{\bsnm{Barreyro}, \binits{L.}}, \betal:
\batitle{Minimal pu. 1 reduction induces a preleukemic state and promotes development of acute myeloid leukemia}.
\bjtitle{Nature medicine}
\bvolume{21}(\bissue{10}),
\bfpage{1172}--\blpage{1181}
(\byear{2015})
\end{barticle}
\endbibitem

\bibitem[\protect\citeauthoryear{Mikkola et~al.}{2002}]{mikkola2002reversion}
\begin{barticle}
\bauthor{\bsnm{Mikkola}, \binits{I.}},
\bauthor{\bsnm{Heavey}, \binits{B.}},
\bauthor{\bsnm{Horcher}, \binits{M.}},
\bauthor{\bsnm{Busslinger}, \binits{M.}}:
\batitle{Reversion of b cell commitment upon loss of pax5 expression}.
\bjtitle{Science}
\bvolume{297}(\bissue{5578}),
\bfpage{110}--\blpage{113}
(\byear{2002})
\end{barticle}
\endbibitem

\bibitem[\protect\citeauthoryear{Cobaleda et~al.}{2007}]{cobaleda2007conversion}
\begin{barticle}
\bauthor{\bsnm{Cobaleda}, \binits{C.}},
\bauthor{\bsnm{Jochum}, \binits{W.}},
\bauthor{\bsnm{Busslinger}, \binits{M.}}:
\batitle{Conversion of mature b cells into t cells by dedifferentiation to uncommitted progenitors}.
\bjtitle{Nature}
\bvolume{449}(\bissue{7161}),
\bfpage{473}--\blpage{477}
(\byear{2007})
\end{barticle}
\endbibitem

\bibitem[\protect\citeauthoryear{McCormack et~al.}{2010}]{mccormack2010lmo2}
\begin{barticle}
\bauthor{\bsnm{McCormack}, \binits{M.P.}},
\bauthor{\bsnm{Young}, \binits{L.F.}},
\bauthor{\bsnm{Vasudevan}, \binits{S.}},
\bauthor{\bsnm{Graaf}, \binits{C.A.}},
\bauthor{\bsnm{Codrington}, \binits{R.}},
\bauthor{\bsnm{Rabbitts}, \binits{T.H.}},
\bauthor{\bsnm{Jane}, \binits{S.M.}},
\bauthor{\bsnm{Curtis}, \binits{D.J.}}:
\batitle{The lmo2 oncogene initiates leukemia in mice by inducing thymocyte self-renewal}.
\bjtitle{Science}
\bvolume{327}(\bissue{5967}),
\bfpage{879}--\blpage{883}
(\byear{2010})
\end{barticle}
\endbibitem

\bibitem[\protect\citeauthoryear{Cleveland et~al.}{2013}]{cleveland2013lmo2}
\begin{barticle}
\bauthor{\bsnm{Cleveland}, \binits{S.M.}},
\bauthor{\bsnm{Smith}, \binits{S.}},
\bauthor{\bsnm{Tripathi}, \binits{R.}},
\bauthor{\bsnm{Mathias}, \binits{E.M.}},
\bauthor{\bsnm{Goodings}, \binits{C.}},
\bauthor{\bsnm{Elliott}, \binits{N.}},
\bauthor{\bsnm{Peng}, \binits{D.}},
\bauthor{\bsnm{El-Rifai}, \binits{W.}},
\bauthor{\bsnm{Yi}, \binits{D.}},
\bauthor{\bsnm{Chen}, \binits{X.}}, \betal:
\batitle{Lmo2 induces hematopoietic stem cell-like features in t-cell progenitor cells prior to leukemia}.
\bjtitle{Stem Cells}
\bvolume{31}(\bissue{5}),
\bfpage{882}--\blpage{894}
(\byear{2013})
\end{barticle}
\endbibitem

\bibitem[\protect\citeauthoryear{Sanda and Leong}{2017}]{sanda2017tal1}
\begin{barticle}
\bauthor{\bsnm{Sanda}, \binits{T.}},
\bauthor{\bsnm{Leong}, \binits{W.Z.}}:
\batitle{Tal1 as a master oncogenic transcription factor in t-cell acute lymphoblastic leukemia}.
\bjtitle{Experimental hematology}
\bvolume{53},
\bfpage{7}--\blpage{15}
(\byear{2017})
\end{barticle}
\endbibitem

\bibitem[\protect\citeauthoryear{Carmichael et~al.}{2012}]{carmichael2012hematopoietic}
\begin{barticle}
\bauthor{\bsnm{Carmichael}, \binits{C.L.}},
\bauthor{\bsnm{Metcalf}, \binits{D.}},
\bauthor{\bsnm{Henley}, \binits{K.J.}},
\bauthor{\bsnm{Kruse}, \binits{E.A.}},
\bauthor{\bsnm{Di~Rago}, \binits{L.}},
\bauthor{\bsnm{Mifsud}, \binits{S.}},
\bauthor{\bsnm{Alexander}, \binits{W.S.}},
\bauthor{\bsnm{Kile}, \binits{B.T.}}:
\batitle{Hematopoietic overexpression of the transcription factor erg induces lymphoid and erythro-megakaryocytic leukemia}.
\bjtitle{Proceedings of the National Academy of Sciences}
\bvolume{109}(\bissue{38}),
\bfpage{15437}--\blpage{15442}
(\byear{2012})
\end{barticle}
\endbibitem

\bibitem[\protect\citeauthoryear{Thoms et~al.}{2011}]{thoms2011erg}
\begin{barticle}
\bauthor{\bsnm{Thoms}, \binits{J.A.}},
\bauthor{\bsnm{Birger}, \binits{Y.}},
\bauthor{\bsnm{Foster}, \binits{S.}},
\bauthor{\bsnm{Knezevic}, \binits{K.}},
\bauthor{\bsnm{Kirschenbaum}, \binits{Y.}},
\bauthor{\bsnm{Chandrakanthan}, \binits{V.}},
\bauthor{\bsnm{Jonquieres}, \binits{G.}},
\bauthor{\bsnm{Spensberger}, \binits{D.}},
\bauthor{\bsnm{Wong}, \binits{J.W.}},
\bauthor{\bsnm{Oram}, \binits{S.H.}}, \betal:
\batitle{Erg promotes t-acute lymphoblastic leukemia and is transcriptionally regulated in leukemic cells by a stem cell enhancer}.
\bjtitle{Blood, The Journal of the American Society of Hematology}
\bvolume{117}(\bissue{26}),
\bfpage{7079}--\blpage{7089}
(\byear{2011})
\end{barticle}
\endbibitem

\bibitem[\protect\citeauthoryear{Bots et~al.}{2014}]{bots2014differentiation}
\begin{barticle}
\bauthor{\bsnm{Bots}, \binits{M.}},
\bauthor{\bsnm{Verbrugge}, \binits{I.}},
\bauthor{\bsnm{Martin}, \binits{B.P.}},
\bauthor{\bsnm{Salmon}, \binits{J.M.}},
\bauthor{\bsnm{Ghisi}, \binits{M.}},
\bauthor{\bsnm{Baker}, \binits{A.}},
\bauthor{\bsnm{Stanley}, \binits{K.}},
\bauthor{\bsnm{Shortt}, \binits{J.}},
\bauthor{\bsnm{Ossenkoppele}, \binits{G.J.}},
\bauthor{\bsnm{Zuber}, \binits{J.}}, \betal:
\batitle{Differentiation therapy for the treatment of t (8; 21) acute myeloid leukemia using histone deacetylase inhibitors}.
\bjtitle{Blood, The Journal of the American Society of Hematology}
\bvolume{123}(\bissue{9}),
\bfpage{1341}--\blpage{1352}
(\byear{2014})
\end{barticle}
\endbibitem

\bibitem[\protect\citeauthoryear{Wang et~al.}{2020}]{wang2020role}
\begin{barticle}
\bauthor{\bsnm{Wang}, \binits{P.}},
\bauthor{\bsnm{Wang}, \binits{Z.}},
\bauthor{\bsnm{Liu}, \binits{J.}}:
\batitle{Role of hdacs in normal and malignant hematopoiesis}.
\bjtitle{Molecular cancer}
\bvolume{19},
\bfpage{1}--\blpage{21}
(\byear{2020})
\end{barticle}
\endbibitem

\bibitem[\protect\citeauthoryear{Lang et~al.}{2014}]{lang2014epigenetic}
\begin{barticle}
\bauthor{\bsnm{Lang}, \binits{A.H.}},
\bauthor{\bsnm{Li}, \binits{H.}},
\bauthor{\bsnm{Collins}, \binits{J.J.}},
\bauthor{\bsnm{Mehta}, \binits{P.}}:
\batitle{Epigenetic landscapes explain partially reprogrammed cells and identify key reprogramming genes}.
\bjtitle{PLoS computational biology}
\bvolume{10}(\bissue{8}),
\bfpage{1003734}
(\byear{2014})
\end{barticle}
\endbibitem

\bibitem[\protect\citeauthoryear{Boukacem et~al.}{2024}]{boukacem2024waddington}
\begin{barticle}
\bauthor{\bsnm{Boukacem}, \binits{N.E.}},
\bauthor{\bsnm{Leary}, \binits{A.}},
\bauthor{\bsnm{Th{\'e}riault}, \binits{R.}},
\bauthor{\bsnm{Gottlieb}, \binits{F.}},
\bauthor{\bsnm{Mani}, \binits{M.}},
\bauthor{\bsnm{Fran{\c{c}}ois}, \binits{P.}}:
\batitle{Waddington landscape for prototype learning in generalized hopfield networks}.
\bjtitle{Physical Review Research}
\bvolume{6}(\bissue{3}),
\bfpage{033098}
(\byear{2024})
\end{barticle}
\endbibitem

\bibitem[\protect\citeauthoryear{Pusuluri et~al.}{2017}]{pusuluri2017cellular}
\begin{barticle}
\bauthor{\bsnm{Pusuluri}, \binits{S.T.}},
\bauthor{\bsnm{Lang}, \binits{A.H.}},
\bauthor{\bsnm{Mehta}, \binits{P.}},
\bauthor{\bsnm{Castillo}, \binits{H.E.}}:
\batitle{Cellular reprogramming dynamics follow a simple 1d reaction coordinate}.
\bjtitle{Physical Biology}
\bvolume{15}(\bissue{1}),
\bfpage{016001}
(\byear{2017})
\end{barticle}
\endbibitem

\bibitem[\protect\citeauthoryear{Krotov and Hopfield}{2016}]{krotov2016dense}
\begin{botherref}
\oauthor{\bsnm{Krotov}, \binits{D.}},
\oauthor{\bsnm{Hopfield}, \binits{J.J.}}:
Dense associative memory for pattern recognition.
Advances in neural information processing systems
\textbf{29}
(2016)
\end{botherref}
\endbibitem

\bibitem[\protect\citeauthoryear{Demircigil et~al.}{2017}]{demircigil2017model}
\begin{barticle}
\bauthor{\bsnm{Demircigil}, \binits{M.}},
\bauthor{\bsnm{Heusel}, \binits{J.}},
\bauthor{\bsnm{L{\"o}we}, \binits{M.}},
\bauthor{\bsnm{Upgang}, \binits{S.}},
\bauthor{\bsnm{Vermet}, \binits{F.}}:
\batitle{On a model of associative memory with huge storage capacity}.
\bjtitle{Journal of Statistical Physics}
\bvolume{168},
\bfpage{288}--\blpage{299}
(\byear{2017})
\end{barticle}
\endbibitem

\bibitem[\protect\citeauthoryear{Krotov and Hopfield}{2020}]{krotov2020large}
\begin{botherref}
\oauthor{\bsnm{Krotov}, \binits{D.}},
\oauthor{\bsnm{Hopfield}, \binits{J.}}:
Large associative memory problem in neurobiology and machine learning.
arXiv preprint arXiv:2008.06996
(2020)
\end{botherref}
\endbibitem

\bibitem[\protect\citeauthoryear{Ramsauer et~al.}{2020}]{ramsauer2020hopfield}
\begin{botherref}
\oauthor{\bsnm{Ramsauer}, \binits{H.}},
\oauthor{\bsnm{Sch{\"a}fl}, \binits{B.}},
\oauthor{\bsnm{Lehner}, \binits{J.}},
\oauthor{\bsnm{Seidl}, \binits{P.}},
\oauthor{\bsnm{Widrich}, \binits{M.}},
\oauthor{\bsnm{Adler}, \binits{T.}},
\oauthor{\bsnm{Gruber}, \binits{L.}},
\oauthor{\bsnm{Holzleitner}, \binits{M.}},
\oauthor{\bsnm{Pavlovi{\'c}}, \binits{M.}},
\oauthor{\bsnm{Sandve}, \binits{G.K.}}, et al.:
Hopfield networks is all you need.
arXiv preprint arXiv:2008.02217
(2020)
\end{botherref}
\endbibitem

\bibitem[\protect\citeauthoryear{Lucibello and M{\'e}zard}{2024}]{lucibello2024exponential}
\begin{barticle}
\bauthor{\bsnm{Lucibello}, \binits{C.}},
\bauthor{\bsnm{M{\'e}zard}, \binits{M.}}:
\batitle{Exponential capacity of dense associative memories}.
\bjtitle{Physical Review Letters}
\bvolume{132}(\bissue{7}),
\bfpage{077301}
(\byear{2024})
\end{barticle}
\endbibitem

\bibitem[\protect\citeauthoryear{Ambrogioni}{2023}]{ambrogioni2023search}
\begin{botherref}
\oauthor{\bsnm{Ambrogioni}, \binits{L.}}:
In search of dispersed memories: Generative diffusion models are associative memory networks.
arXiv preprint arXiv:2309.17290
(2023)
\end{botherref}
\endbibitem

\bibitem[\protect\citeauthoryear{Spens and Burgess}{2024}]{spens2024generative}
\begin{barticle}
\bauthor{\bsnm{Spens}, \binits{E.}},
\bauthor{\bsnm{Burgess}, \binits{N.}}:
\batitle{A generative model of memory construction and consolidation}.
\bjtitle{Nature Human Behaviour}
\bvolume{8}(\bissue{3}),
\bfpage{526}--\blpage{543}
(\byear{2024})
\end{barticle}
\endbibitem

\bibitem[\protect\citeauthoryear{Pham et~al.}{}]{pham2024memorization}
\begin{botherref}
\oauthor{\bsnm{Pham}, \binits{B.}},
\oauthor{\bsnm{Raya}, \binits{G.}},
\oauthor{\bsnm{Negri}, \binits{M.}},
\oauthor{\bsnm{Zaki}, \binits{M.J.}},
\oauthor{\bsnm{Ambrogioni}, \binits{L.}},
\oauthor{\bsnm{Krotov}, \binits{D.}}:
Memorization to generalization: The emergence of diffusion models from associative memory.
In: NeurIPS 2024 Workshop on Scientific Methods for Understanding Deep Learning
\end{botherref}
\endbibitem

\bibitem[\protect\citeauthoryear{Ryan and Birnie}{1997}]{ryan1997cell}
\begin{barticle}
\bauthor{\bsnm{Ryan}, \binits{K.M.}},
\bauthor{\bsnm{Birnie}, \binits{G.D.}}:
\batitle{Cell-cycle progression is not essential for c-myc to block differentiation}.
\bjtitle{Oncogene}
\bvolume{14}(\bissue{23}),
\bfpage{2835}--\blpage{2843}
(\byear{1997})
\end{barticle}
\endbibitem

\bibitem[\protect\citeauthoryear{Leon et~al.}{2009}]{leon2009inhibition}
\begin{barticle}
\bauthor{\bsnm{Leon}, \binits{J.}},
\bauthor{\bsnm{Ferrandiz}, \binits{N.}},
\bauthor{\bsnm{Acosta}, \binits{J.C.}},
\bauthor{\bsnm{Delgado}, \binits{M.D.}}:
\batitle{Inhibition of cell differentiation: a critical mechanism for myc-mediated carcinogenesis?}
\bjtitle{Cell cycle}
\bvolume{8}(\bissue{8}),
\bfpage{1148}--\blpage{1157}
(\byear{2009})
\end{barticle}
\endbibitem

\bibitem[\protect\citeauthoryear{Bahr et~al.}{2018}]{bahr2018myc}
\begin{barticle}
\bauthor{\bsnm{Bahr}, \binits{C.}},
\bauthor{\bsnm{Von~Paleske}, \binits{L.}},
\bauthor{\bsnm{Uslu}, \binits{V.V.}},
\bauthor{\bsnm{Remeseiro}, \binits{S.}},
\bauthor{\bsnm{Takayama}, \binits{N.}},
\bauthor{\bsnm{Ng}, \binits{S.W.}},
\bauthor{\bsnm{Murison}, \binits{A.}},
\bauthor{\bsnm{Langenfeld}, \binits{K.}},
\bauthor{\bsnm{Petretich}, \binits{M.}},
\bauthor{\bsnm{Scognamiglio}, \binits{R.}}, \betal:
\batitle{A myc enhancer cluster regulates normal and leukaemic haematopoietic stem cell hierarchies}.
\bjtitle{Nature}
\bvolume{553}(\bissue{7689}),
\bfpage{515}--\blpage{520}
(\byear{2018})
\end{barticle}
\endbibitem

\bibitem[\protect\citeauthoryear{Sab{\`o} and Amati}{2014}]{sabo2014genome}
\begin{barticle}
\bauthor{\bsnm{Sab{\`o}}, \binits{A.}},
\bauthor{\bsnm{Amati}, \binits{B.}}:
\batitle{Genome recognition by myc}.
\bjtitle{Cold Spring Harbor perspectives in medicine}
\bvolume{4}(\bissue{2}),
\bfpage{014191}
(\byear{2014})
\end{barticle}
\endbibitem

\bibitem[\protect\citeauthoryear{Faiola et~al.}{2005}]{faiola2005dual}
\begin{barticle}
\bauthor{\bsnm{Faiola}, \binits{F.}},
\bauthor{\bsnm{Liu}, \binits{X.}},
\bauthor{\bsnm{Lo}, \binits{S.}},
\bauthor{\bsnm{Pan}, \binits{S.}},
\bauthor{\bsnm{Zhang}, \binits{K.}},
\bauthor{\bsnm{Lymar}, \binits{E.}},
\bauthor{\bsnm{Farina}, \binits{A.}},
\bauthor{\bsnm{Martinez}, \binits{E.}}:
\batitle{Dual regulation of c-myc by p300 via acetylation-dependent control of myc protein turnover and coactivation of myc-induced transcription}.
\bjtitle{Molecular and cellular biology}
\bvolume{25}(\bissue{23}),
\bfpage{10220}--\blpage{10234}
(\byear{2005})
\end{barticle}
\endbibitem

\bibitem[\protect\citeauthoryear{Nimmo et~al.}{2015}]{nimmo2015primed}
\begin{barticle}
\bauthor{\bsnm{Nimmo}, \binits{R.A.}},
\bauthor{\bsnm{May}, \binits{G.E.}},
\bauthor{\bsnm{Enver}, \binits{T.}}:
\batitle{Primed and ready: understanding lineage commitment through single cell analysis}.
\bjtitle{Trends in cell biology}
\bvolume{25}(\bissue{8}),
\bfpage{459}--\blpage{467}
(\byear{2015})
\end{barticle}
\endbibitem

\end{thebibliography}
\providecommand{\noopsort}[1]{}\providecommand{\singleletter}[1]{#1}%

\end{document}


\maketitle
\appendix
\tableofcontents
\newpage

\section{Derivation of model for cell identity regulation}

\subsection{Enhancer acetylation}

The first step in the model involves the acetylation of histone proteins at a specific enhancer site $j$. Each regulatory transcription factor (TF) is associated with an expression level $z_j$ ($j\in 1\dots N$), corresponding to its protein concentration in the cell. Enhancer activation begins with the recruitment of histone acetyltransferases (HATs), such as p300 and CBP, to specific genomic sites, followed by the acetylation of nearby histones by these HATs \cite{ferrie2024p300}. We denote the concentration of HATs ($H$) as $h$ and model their recruitment to a genomic site $i$ as a two-step process involving the formation of a complex.

\begin{equation*}
    Z_j + H \underset{k_{-1}}{\stackrel{k_1}{\rightleftharpoons}} C_j
\end{equation*}
For $j\in 1\dots N$. This corresponds to an ODE system, 
\begin{equation*}
    \begin{split}
        \dot{c_j} &= k_1 z_j h - k_{-1}c_j, \;\; j\in 1\dots N \\
        \dot{h} &= k_{-1}\sum_j {c_j } - k_1 h \sum_j {z_j }
    \end{split}
\end{equation*}

these equations imply the conservation law, $h=h_0-\sum_jc_j$. Thus, at steady-state, we have that, for all $j$,
\begin{equation*}
    c_j = \frac{k_1}{k_{-1}} z_j h = \frac{k_1}{k_{-1}} z_j \left(h_0-\sum_jc_j \right)
\end{equation*}
summing over all $j$, and denoting $c_{\text{tot}}=\sum_jc_j, z_{\text{tot}}= \sum_j z_j$, we have,
\begin{equation*}
    \frac{c_{\text{tot}}}{h_0-c_{\text{tot}}} =  \frac{k_1}{k_{-1}} z_{\text{tot}}
\end{equation*}
and therefore,
\begin{equation*}
    c_{\text{tot}}= h_0 \frac{k_1}{k_{-1}} \frac{ z_{\text{tot}}}{\frac{k_1}{k_{-1}} z_{\text{tot}}+1}
\end{equation*}
and
\begin{equation*}
    h=h_0-c_{\text{tot}}= h_0 \left(1-\frac{k_1}{k_{-1}} \frac{ z_{\text{tot}}}{\frac{k_1}{k_{-1}} z_{\text{tot}}+1}\right) = \frac{h_0}{\frac{k_1}{k_{-1}} z_{\text{tot}}+1}
\end{equation*}
and
\begin{equation*}
    c_j= \frac{k_1}{k_{-1}} z_j \left(h_0-h_0 \frac{k_1}{k_{-1}} \frac{ z_{\text{tot}}}{\frac{k_1}{k_{-1}} z_{\text{tot}}+1}\right) = \frac{ \frac{k_1}{k_{-1}} h_0}{\frac{k_1}{k_{-1}} z_{\text{tot}}+1} z_j \approx \frac{h_0  }{z_{\text{tot}}} z_j
\end{equation*}
where in the last approximation we used the observation that the levels of HATs are considered limiting relative to transcription factor abundance \cite{giordano1999p300,gillespie2020absolute}, and, as such, we expect to observe a relatively small amount of free $h$ (that is, we are at the regime where $\frac{k_1}{k_{-1}} z_{\text{tot}}\gg1$).
The acetylation rate of site $i$  will then depend on its transcription factor binding profile. Denoting $M_i$ as level of acetylation of site $i$, and denoting $\zeta_{i,j}$  as the relative binding affinity of TF $j$ for site $i$, we have that:
\begin{equation*}
    M_i\xrightarrow{\sum_j{\zeta_{i,j}c_j}} M_i+1
\end{equation*}
Histone deacetylation, on the other hand, is driven by histone deacetylases (HDACs) which, unlike HATs, are highly abundant and have broad genomic localization \cite{bornelov2018nucleosome,gillespie2020absolute}. Denoting by $\kappa_{-1}$ the activity of HDACs (which is proportional to their abundance),
\begin{equation*}
       M_i\xrightarrow{\kappa_{-1}} \varnothing
\end{equation*}
resulting in the dynamics
\begin{equation*}
    \begin{split}
        \dot{m_i}&= \sum_j{\zeta_{i,j}c_j} - \kappa_{-1} m_i =  \sum_j{\zeta_{i,j} \frac{h_0  }{z_{\text{tot}}} z_j} - \kappa_{-1} m_i \\
        &= h_0 \sum_j{\zeta_{i,j} \frac{z_j}{z_{\text{tot}}}} - \kappa_{-1} m_i
    \end{split}
\end{equation*}

\subsection{Recruitment of epigenetic readers}

Enhancer acetylation plays a crucial role in recruiting bromodomain-containing proteins, such as Brd4, which are important regulators of transcription. Brd4 recognizes acetylated histones through its bromodomains and facilitates transcription by promoting RNA polymerase II (RNAPII) pause-release and interacting with the Mediator complex. Brd4 recruits positive transcription elongation factor b (P-TEFb) to phosphorylate RNAPII, enabling productive elongation. These proteins possess acetylation-binding domains that confer sensitivity to hyperacetylated states, enabling them to respond dynamically to changes in chromatin acetylation levels \cite{filippakopoulos2012histone,slaughter2021hdac}. Let $r_i$ denote the number of bromodmain proteins (e.g. Brd4) recruited to site $i$. This is given by
\begin{equation*}
    r_i = r_0 \frac{e^{-f_i}}{K_r+ \sum_{k}e^{-f_k}}
\end{equation*}
where $r_0$ is the total number of bromodomain proteins, $f_i$ is the energy associated with the binding of a bromodomain protein to site $i$, and $K_r$ represents the fraction of unbound bromodomain proteins (which may be otherwise bound to other sites in genome, or unbound to the genome).  Taking a first order approximation for the energy function, we have $f_k=-\alpha m_k-b_k$, where $b_k$ corresponds to the baseline binding energy of a given site in the absence of acetylation, noting that $b_k$ may be affected by the presence of other chromatin or DNA modifications.  

We can also consider the relative bromodomain protein distribution among the relevant enhancer sites, given by
\begin{equation*}
    p_i = \frac{r_i}{\sum_k r_k} = \frac{e^{-f_i}}{\sum_{k}e^{-f_k}}
\end{equation*}
Note that the number of bound bromodomain proteins is given by  $r_{\text{bound}}=\sum_k r_k = r_0 \frac{\sum_{k}e^{-f_k}}{K_r+ \sum_{k}e^{-f_k}}$. 

\subsection{Enhancer-mediated transcription}

We now estimate the rate of transcription of an (arbitrary) gene indexed $j$ mediated by an enhancer site $i$. This is proportional to  $q_{i,j} r_i$ where $q_{i,j} $ is the coupling of the site to the gene, which is related to the contact frequency between the sites. The overall transcriptional dynamics of a gene $z_j$ are given by
\begin{equation*}
   \tau^{-1} \dot{z_j} = \sum_{i}{q_{i,j} r_i} -\gamma z_j  = \sum_{i}{r_{\text{bound}} q_{i,j} p_i } - z_j
\end{equation*}
where $\tau^{-1}$ is the timescale of dynamics of transcription. As overall transcription of enhancer-regulated genes increases in proportion to $r_{\text{bound}}$, we can denote $\hat{z}_j=z_j r_{\text{bound}}$, and mark $\hat{z}_{\text{tot}}= \sum_k \hat{z}_k$. Then
\begin{equation*}
    \tau^{-1} \dot{\hat{z_j}} =  \sum_{i}{q_{i,j} p_i } - \hat{z}_j
\end{equation*}
and
\begin{equation*}
     \dot{m_i} = \frac{h_0}{\hat{z}_{\text{tot}}} \sum_j{\zeta_{i,j} \hat{z}_j} - \kappa_{-1} m_i = \kappa_{1} \sum_j{\zeta_{i,j} \hat{z}_j} - \kappa_{-1} m_i
\end{equation*}
where we denote $\kappa_{1}=\frac{h_0}{\hat{z}_{\text{tot}}}$. We will set $\beta=\alpha\frac{\kappa_1}{\kappa_{-1}}$. It is important to note that $\kappa_{1}$ is a dynamical quantity that changes with the overall scaled transcription factor levels $\hat{z}_{\text{tot}}$. However, within the context of this manuscript, it will be more useful to consider it to be constant, as its main effects will be to rescale the effective $\beta$ at the vicinity of various attractor patterns.

\subsection{Dynamics of transcriptional network}

We will partition the transcription factors $\hat{z}_1,\hat{z}_2,\dots$ into two sets. The first are transcription factors with constitutive binding. These are transcription factors whose activity patterns are dominated by whether they are transcribed. We denote these as $x_j$ ($j\in1,\dots,N$) and their binding affinities as $\xi_{i,j}$. The second are signaling dependent transcription factors, whose activity patterns are dominated by the signalling environment. As such we will assume that the latter are constitutively expressed within a given context (ignoring their enhancer-dependent production) and denote them as $y_j$ ($j\in1,\dots,L$) and their binding affinities as $\upsilon_{i,j}$.
We therefore have the dynamics:
\begin{equation}
    \begin{split}
     \tau^{-1} \dot{x_j} &=  \sum_{i}{q_{i,j} p_i } - x_j \\
     \dot{m_i} &=   \kappa_{1} \sum_{j=1}^N{\xi_{i,j} x_j}+ \kappa_{1} \sum_{j=1}^L{\upsilon_{i,j} y_j} - \kappa_{-1} m_i
\end{split}
\end{equation}

Since the turnover of histone acetylation is rapid, with a timescale of minutes, we can take the quasi-steady-state of $m_i$ as 
\begin{equation*}
    m_i = \frac{\kappa_1}{\kappa_{-1}} \left(\sum_{j=1}^N{\xi_{i,j} x_j}+ \kappa_{1} \sum_{j=1}^L{\upsilon_{i,j} y_j} \right)
\end{equation*}
set
\begin{equation*}
    \begin{split}
        p_i &= \frac{e^{-f_i}}{\sum_{k}e^{-f_k}} = \frac{e^{\alpha m_i+b_i}}{\sum_{k}e^{\alpha m_k+b_k}} \\ &= \frac{e^{\beta  \left(\sum_{j=1}^N{\xi_{i,j} x_j}+  \sum_{j=1}^L{\upsilon_{i,j} y_j} \right)+b_i}}{\sum_{k}e^{\beta  \left(\sum_{j=1}^N{\xi_{k,j} x_j}+ \sum_{j=1}^L{\upsilon_{k,j} y_j} \right)+b_k}}        
    \end{split}
\end{equation*}
which, when substituted into the dynamical equation for $x_j$, and rescaling time $\tau=1$, yields the dynamical system
\begin{equation}
        \dot{\textbf{x}} = \textbf{Q}^T \text{softmax} \left(\beta \Xi \textbf{x} + \beta \textbf{U}\textbf{y}+\textbf{b} \right) - \textbf{x}
\end{equation}
where the matrices $\textbf{Q},\Xi,\textbf{U}$ are the corresponding rate matrices for $q_{i,j},\xi_{i,j},\upsilon_{i,j}$, the vectors $\textbf{x},\textbf{y}$ are the corresponding expression levels of the core and signalling transcription factors $x_j,y_j$, and $\textbf{b}$ is a vector whose entries correspond to $b_i$. This can further simplified by setting $\textbf{w}= \beta \textbf{U}\textbf{y}+\textbf{b}$,
\begin{equation}
 \dot{\textbf{x}} = g(\textbf{x},\beta,\textbf{Q},\Xi,\textbf{w}) = \textbf{Q}^T \text{softmax}\left(\beta \Xi \textbf{x} + \textbf{w} \right) - \textbf{x}
 \label{eq:enhancernet}
\end{equation}

\section{Coarse-graining}
\subsection{Coarse-graining transformation for patterns of identical magnitude}
\newtheorem{proposition}{Proposition}[section]
\newtheorem{lemma}[proposition]{Lemma}
\par In this section, we present the coarse graining transformation for patterns of identical magnitude and inspect its mathematical properties. Suppose we have $N$ nodes (TFs in the gene regulatory network) and $K$ memory patterns.  We consider a subset of indices $\mathcal{L}\subseteq\{1\dots K\}$ and define our coarse-graining transformation for patterns of identical magnitude by $\textbf{Q}\xrightarrow{\mathcal{L}}\textbf{Q}',\Xi\xrightarrow{\mathcal{L}}\Xi',\textbf{w}\xrightarrow{\mathcal{L}} \textbf{w}'$, where:
\begin{equation}
 \textbf{Q}'=\textbf{Q}_{i\notin\mathcal{L}}\cup\{Q^*_{\mathcal{L}}\}, \;\Xi'=\Xi_{i\notin\mathcal{L}}\cup\{\Xi^*_{\mathcal{L}}\},\;\textbf{w}'=\textbf{w}_{i\notin\mathcal{L}}\cup\{w^*_{\mathcal{L}}\}
\end{equation}
with:
\begin{equation} \label{eqn:coarse_graining_transformations}
\begin{split}
 Q^*_{\mathcal{L}}&=\frac{\sum_{i\in\mathcal{L}} Q_i e^{w_i }}{\sum_{i\in\mathcal{L}}{e^{w_i}}} = \text{softmax}\left(\textbf{w}_{i\in\mathcal{L}}\right) Q_{i\in\mathcal{L}}\\
 \Xi^*_{\mathcal{L}}&= \frac{\sum_{i\in\mathcal{L}} \Xi_i e^{w_i }}{\sum_{i\in\mathcal{L}}{e^{w_i}}} = \text{softmax}\left(\textbf{w}_{i\in\mathcal{L}}\right) \Xi_{i\in\mathcal{L}}\\\
 w^*_{\mathcal{L}}&=\log{\sum_{k\in\mathcal{L}}{e^{w_k}}}
\end{split}
\end{equation}
This transforms the dimension of the matrices from $N \times K$ to $N \times K'$ with $K'=K-|\mathcal{L}|+1$. We denote the transformed dynamics as $g_{\mathcal{L}}(\textbf{x},\beta,\textbf{Q},\Xi,\textbf{w})$.

We will begin by showing that the order in which we apply the coarse-graining transformation in Eq. \ref{eqn:coarse_graining_transformations} does not matter, i.e. our transformation is in essence commutative. More precisely, consider two subsets of indices $\mathcal{L}_{1},\mathcal{L}_{2} \subseteq \{1,...,K\}$, such that $\mathcal{L}_{1} \cap \mathcal{L}_{2}= \varnothing $ and let $\mathcal{L}= \mathcal{L}_{1} \cup \mathcal{L}_{2}$.
We let $L_{1,Q}$ be the transformation corresponding to the transformation in Eq.~\ref{eqn:coarse_graining_transformations} above with indices given by $\mathcal{L}_{1}$ acting on the $\textbf{Q}$ matrices, and similarly for $\Xi$ and $\textbf{w}$.
We define $L_{1,Q} \odot L_{2,Q}$ as the transformation that reduces all rows given by indices in $\mathcal{L}_{2}$ to one row (as above) and then reducing all the indices in $\mathcal{L}_{1}$ (with the labelling corresponding to the original labelling) to one row. 
\begin{proposition}
When $\mathcal{L}_{1} \cap \mathcal{L}_{2}= \varnothing $, we have $L_{1,Q} \odot L_{2,Q}=L_{2,Q} \odot L_{1,Q}$ and similarly for transformations acting on $\Xi$ and $\textbf{w}$.
\end{proposition}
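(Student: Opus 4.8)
The plan is to unwind the definitions and use the disjointness $\mathcal{L}_1\cap\mathcal{L}_2=\varnothing$ to observe that the two elementary reductions act on non-overlapping blocks of pattern-rows, so that they commute; the only thing needing care is the bookkeeping of pattern labels, which is exactly what the parenthetical convention preceding the proposition fixes.

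First I would pin down what equality of two such transformations means. The coarse-grained data is specified as a collection of rows, $\textbf{Q}'=\textbf{Q}_{i\notin\mathcal{L}}\cup\{Q^*_{\mathcal{L}}\}$ (and likewise for $\Xi$ and $\textbf{w}$), and the dynamics $g$ of Eq.~\ref{eq:enhancernet} are manifestly invariant under a simultaneous permutation of the $K$ pattern-rows of $\textbf{Q}$ and $\Xi$ and the $K$ entries of $\textbf{w}$ (such a permutation merely permutes the components of $\text{softmax}(\beta\Xi\textbf{x}+\textbf{w})$ together with the matching columns of $\textbf{Q}^{T}$). So ``$=$'' should be read as equality of these collections of rows, i.e.\ equality up to a relabelling of patterns.

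Next I would compute the composition $L_{1,Q}\odot L_{2,Q}$ explicitly. Applying $L_{2,Q}$ replaces the rows $\{Q_i : i\in\mathcal{L}_2\}$ by the single row $Q^*_{\mathcal{L}_2}=\big(\sum_{i\in\mathcal{L}_2}Q_i e^{w_i}\big)\big/\big(\sum_{i\in\mathcal{L}_2}e^{w_i}\big)$ and leaves every row with index $i\notin\mathcal{L}_2$ --- in particular every row with $i\in\mathcal{L}_1$ --- untouched; the companion maps likewise leave the weights $\{w_i : i\in\mathcal{L}_1\}$ untouched while setting $w^*_{\mathcal{L}_2}=\log\sum_{k\in\mathcal{L}_2}e^{w_k}$. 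Since, by the original-labelling convention, $\mathcal{L}_1$ still indexes surviving rows carrying their original weights, applying $L_{1,Q}$ next collapses exactly $\{Q_i : i\in\mathcal{L}_1\}$ to $Q^*_{\mathcal{L}_1}=\big(\sum_{i\in\mathcal{L}_1}Q_i e^{w_i}\big)\big/\big(\sum_{i\in\mathcal{L}_1}e^{w_i}\big)$ with the same original $w_i$. Hence $L_{1,Q}\odot L_{2,Q}$ sends $\textbf{Q}$ to the collection $\{Q_i : i\notin\mathcal{L}_1\cup\mathcal{L}_2\}\cup\{Q^*_{\mathcal{L}_1},Q^*_{\mathcal{L}_2}\}$, and similarly $\Xi\mapsto\{\Xi_i : i\notin\mathcal{L}_1\cup\mathcal{L}_2\}\cup\{\Xi^*_{\mathcal{L}_1},\Xi^*_{\mathcal{L}_2}\}$ and $\textbf{w}\mapsto\{w_i : i\notin\mathcal{L}_1\cup\mathcal{L}_2\}\cup\{w^*_{\mathcal{L}_1},w^*_{\mathcal{L}_2}\}$. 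Running the computation in the other order, $L_{2,Q}\odot L_{1,Q}$, reducing $\mathcal{L}_1$ first leaves the $\mathcal{L}_2$-rows and their weights intact, so the subsequent reduction of $\mathcal{L}_2$ produces the identical collections --- only the order in which the two new rows are listed can differ, which is immaterial by the previous paragraph. Thus $L_{1,Q}\odot L_{2,Q}=L_{2,Q}\odot L_{1,Q}$, and the argument for $\Xi$ and $\textbf{w}$ is word-for-word the same.

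The main obstacle here is purely notational rather than analytic: one must check that ``$L_{1,Q}$ after $L_{2,Q}$'' really acts on the rows bearing the original labels in $\mathcal{L}_1$, and that the weights feeding the softmax-average defining $Q^*_{\mathcal{L}_1}$ are still the original $w_i$, $i\in\mathcal{L}_1$. Both are guaranteed precisely by $\mathcal{L}_1\cap\mathcal{L}_2=\varnothing$: had the sets overlapped, the first reduction would have merged and re-weighted some of the rows the second expects to find, and commutativity would genuinely fail --- so disjointness is exactly the hypothesis that makes the statement true.
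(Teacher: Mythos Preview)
Your proof is correct and follows essentially the same approach as the paper's: both simply unwind the definition, use disjointness to see that each reduction leaves the other block of rows (and their weights) untouched, and conclude that the resulting collection of rows is the same in either order. Your write-up is in fact more careful than the paper's one-line argument, particularly in making explicit that the weights $w_i$ for $i\in\mathcal{L}_1$ are untouched by $L_{2}$ so that $Q^*_{\mathcal{L}_1}$ is computed with the original $w_i$, and in spelling out that equality is meant as equality of collections of rows.
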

\begin{proof}
Note that as $\mathcal{L}_{1} \cap \mathcal{L}_{2} = \varnothing$,
$L_{1,Q} \odot L_{2,Q}(\textbf{Q})=L_{1,Q}(\textbf{Q}_{i\notin\mathcal{L}_{2}}\cup\{Q^*_{\mathcal{L}_{2}}\})=\textbf{Q}_{i\notin\mathcal{L}_{1}} \cup \textbf{Q}_{i\notin\mathcal{L}_{2}}\cup\{Q^*_{\mathcal{L}_{1}}\} \cup \{Q^*_{\mathcal{L}_{2}}\}$ and so
$$L_{1,Q} \odot L_{2,Q}=L_{2,Q} \odot L_{1,Q}.$$
Exactly the same arguments work for $L_{1,\textbf{w}}$ and $L_{1,\Xi}.$

\end{proof}

\par Now suppose as above that $\mathcal{L}_{1},\mathcal{L}_{2} \subseteq \{1,...,K\}$, such that $\mathcal{L}_{1} \cap \mathcal{L}_{2}= \varnothing $ and let $\mathcal{L}= \mathcal{L}_{1} \cup \mathcal{L}_{2}$. And let $L_{1},L_{2},L$ be the corresponding maps (acting on the same variables, i.e. either $\textbf{Q},\textbf{w}$ or $\Xi$), further define the map $L^{*}$ as the map associated with the transformation given by the indices $\mathcal{L}^{*}=\{1^{*},2^{*}\}$, where $1^{*},2^{*}$ are the indices to which $\mathcal{L}_{1}$ and $\mathcal{L}_{2}$ are mapped respectively. Then we have the following:
\begin{proposition}
 Suppose $\mathcal{L}_{1},\mathcal{L}_{2} \subseteq \{1,...,K\}$ such that $\mathcal{L}_{1} \cap \mathcal{L}_{2}= \varnothing $ and let $\mathcal{L}= \mathcal{L}_{1} \cup \mathcal{L}_{2}$. Then $L=L^{*}\odot L_{1} \odot L_{2}$ for $\textbf{Q},\Xi,\textbf{w}$.
 
\end{proposition}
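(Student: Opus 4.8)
The plan is to reduce the claim to the elementary fact that the coarse-grained log-weights telescope additively under disjoint unions, since the exponentiated coarse-grained weight of a block is exactly the normalisation constant used to average $\mathbf{Q}$ and $\Xi$ over that block. I would treat $\mathbf{w}$ first, then read off the $\mathbf{Q}$ and $\Xi$ cases essentially for free from the same computation.

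First I would unwind $L^{*}\odot L_{1}\odot L_{2}$ acting on $\mathbf{w}$ (recalling that in the composite the rightmost map acts first). Applying $L_{2}$ replaces the rows indexed by $\mathcal{L}_{2}$ with the single entry $w_{2^{*}}:=\log\sum_{k\in\mathcal{L}_{2}}e^{w_{k}}$ and, because $\mathcal{L}_{1}\cap\mathcal{L}_{2}=\varnothing$, leaves every row indexed by $\mathcal{L}_{1}$ untouched; applying $L_{1}$ then replaces the rows indexed by $\mathcal{L}_{1}$ with $w_{1^{*}}:=\log\sum_{k\in\mathcal{L}_{1}}e^{w_{k}}$; finally $L^{*}$, which by construction acts on exactly the two fresh indices $\mathcal{L}^{*}=\{1^{*},2^{*}\}$, produces $\log\!\big(e^{w_{1^{*}}}+e^{w_{2^{*}}}\big)$. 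By disjointness this equals $\log\sum_{k\in\mathcal{L}}e^{w_{k}}=w^{*}_{\mathcal{L}}$, which is precisely the entry produced by $L$; and since none of $L_{2},L_{1},L^{*}$ modifies any row with index outside $\mathcal{L}$, all remaining coordinates of $L^{*}\odot L_{1}\odot L_{2}(\mathbf{w})$ agree with those of $L(\mathbf{w})$ as well.

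For $\mathbf{Q}$ (and verbatim for $\Xi$) I would run the same computation, using the identity $e^{w^{*}_{\mathcal{S}}}=\sum_{i\in\mathcal{S}}e^{w_{i}}$ implicit in Eq.~\ref{eqn:coarse_graining_transformations}: after $L_{2}$ the new row is $Q_{2^{*}}$ with associated weight satisfying $Q_{2^{*}}e^{w_{2^{*}}}=\sum_{i\in\mathcal{L}_{2}}Q_{i}e^{w_{i}}$, and likewise $Q_{1^{*}}e^{w_{1^{*}}}=\sum_{i\in\mathcal{L}_{1}}Q_{i}e^{w_{i}}$ after $L_{1}$. Then $L^{*}$ yields
\[
Q^{*}_{\mathcal{L}^{*}}=\frac{Q_{1^{*}}e^{w_{1^{*}}}+Q_{2^{*}}e^{w_{2^{*}}}}{e^{w_{1^{*}}}+e^{w_{2^{*}}}}=\frac{\sum_{i\in\mathcal{L}_{1}}Q_{i}e^{w_{i}}+\sum_{i\in\mathcal{L}_{2}}Q_{i}e^{w_{i}}}{\sum_{i\in\mathcal{L}_{1}}e^{w_{i}}+\sum_{i\in\mathcal{L}_{2}}e^{w_{i}}}=\frac{\sum_{i\in\mathcal{L}}Q_{i}e^{w_{i}}}{\sum_{i\in\mathcal{L}}e^{w_{i}}}=Q^{*}_{\mathcal{L}},
\]
the third equality being disjointness. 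Since the untouched rows again match, this gives $L=L^{*}\odot L_{1}\odot L_{2}$ on $\mathbf{Q}$ and, identically, on $\Xi$.

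The only genuine care needed — and the step I expect to be the main obstacle — is the index bookkeeping: one must check that $L_{2}$ leaves all of $\mathcal{L}_{1}$ intact (this is where disjointness first enters), that the relabelled indices $1^{*}$ and $2^{*}$ introduced by $L_{1}$ and $L_{2}$ form exactly the domain on which $L^{*}$ is defined, and that no coordinate indexed outside $\mathcal{L}$ is ever modified, so that the composite and $L$ agree there trivially. If desired, the preceding commutativity proposition shows the order of $L_{1}$ and $L_{2}$ inside the composite is immaterial, but it is not needed here; the substance of the argument is just the additivity of $e^{w}$ over disjoint blocks.
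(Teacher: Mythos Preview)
Your proof is correct and follows essentially the same route as the paper's: both hinge on the identity $e^{w^{*}_{\mathcal{S}}}=\sum_{i\in\mathcal{S}}e^{w_{i}}$, which makes the softmax-weighted averages of $\mathbf{Q}$ (and $\Xi$) telescope over the disjoint union, and both verify that rows outside $\mathcal{L}$ are untouched. The only cosmetic differences are that you treat $\mathbf{w}$ first and are slightly more explicit about the index bookkeeping, whereas the paper leads with $\mathbf{Q}$ and leaves $\mathbf{w}$ and $\Xi$ to ``similar arguments.''
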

\begin{proof}
 Consider $\textbf{Q}$, then from the previous proposition we know that $L_{1,Q} \odot L_{2,Q}(\textbf{Q})=\textbf{Q}_{i\notin\mathcal{L}_{1}} \cup \textbf{Q}_{i\notin\mathcal{L}_{2}}\cup\{Q^*_{\mathcal{L}_{1}}\} \cup \{Q^*_{\mathcal{L}_{2}}\}$. Therefore we have that
$$L=L^{*}\odot L_{1} \odot L_{2}(\textbf{Q})= \textbf{Q}_{i\notin\mathcal{L}}\cup\{Q^{**}_{\mathcal{L}_{\{1,2\}}}\},$$
where $Q^{**}_{\mathcal{L}_{\{1,2\}}}$ is given by
$$Q^{**}_{\mathcal{L}_{\{1,2\}}}=\dfrac{Q^{*}_{1}e^{w^{*}_{1}}+Q^{*}_{2}e^{w^{*}_{2}}}{e^{w^{*}_{1}}+e^{w^{*}_{2}}}.$$
As such to prove our claim we need to show that
$$\dfrac{Q^{*}_{1}e^{w^{*}_{1}}+Q^{*}_{2}e^{w^{*}_{2}}}{e^{w^{*}_{1}}+e^{w^{*}_{2}}}=\dfrac{\sum_{i\in\mathcal{L}} Q_i e^{w_i }}{\sum_{i\in\mathcal{L}}{e^{w_i}}}.$$
Now note that $\exp(w_{1}^{*})=\sum_{j\in \mathcal{L}_{1}}\exp(w_{j})$. Thus
$$Q^{*}_{\mathcal{L}_{1}}e^{w_{1}^{*}}=\sum_{i \in \mathcal{L}_{1}}Q_{i},$$
and as such we precisely have that 
$$\dfrac{Q^{*}_{1}e^{w^{*}_{1}}+Q^{*}_{2}e^{w^{*}_{2}}}{e^{w^{*}_{1}}+e^{w^{*}_{2}}}=\dfrac{\sum_{i\in\mathcal{L}} Q_i e^{w_i }}{\sum_{i\in\mathcal{L}}{e^{w_i}}},$$
which concludes our proof for $\textbf{Q}$. Similar arguments show the equivalence for $\Xi,\textbf{w}$.
\end{proof}
\par As such we see that our transformations are commutative and the sequential application of transformations corresponds to the application of one transformation on the entire set of indices.
\subsection{Invariance of dynamics to transformation for identical memory patterns}

We will show that the application of the transformation in Eq.~\ref{eqn:coarse_graining_transformations} on identical patterns (identical rows of $\Xi$) leaves the dynamics invariant. This has been previously demonstrated in Karin \cite{karin2024enhancernet} and is included here for completeness. 

\begin{proposition}
Consider a subset $\mathcal{L}$ such that for each $r,s\in\mathcal{L}$ we have that $\Xi_r=\Xi_s$. Then $g(\textbf{x},\beta,\textbf{Q},\Xi,\textbf{w})=g_{\mathcal{L}}(\textbf{x},\beta,\textbf{Q},\Xi,\textbf{w})$ .
\end{proposition}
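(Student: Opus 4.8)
The plan is to expand both $g$ and $g_{\mathcal{L}}$ in the softmax representation and to check that they agree term by term. Write the original dynamics as $g(\textbf{x}) = \sum_{i=1}^{K} Q_i\, s_i(\textbf{x}) - \textbf{x}$, where $Q_i$ is the row of $\textbf{Q}$ indexed by $i$, $s_i(\textbf{x}) = e^{\beta \Xi_i \textbf{x} + w_i}/Z(\textbf{x})$ is the $i$-th softmax weight, and $Z(\textbf{x}) = \sum_{j=1}^{K} e^{\beta\Xi_j \textbf{x} + w_j}$ is the partition function; write $g_{\mathcal{L}}(\textbf{x})$ analogously but over the reduced index set $\{i\notin\mathcal{L}\}\cup\{*\}$, with partition function $Z'(\textbf{x})$ and softmax weights $s_i'(\textbf{x})$, $s^*_{\mathcal{L}}(\textbf{x})$.

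First I would record the single structural consequence of the hypothesis. Since $\Xi_r=\Xi_s$ for all $r,s\in\mathcal{L}$, call this common row $\Xi_{\mathcal{L}}$; then the coarse-grained row $\Xi^*_{\mathcal{L}}=\text{softmax}(\textbf{w}_{i\in\mathcal{L}})\,\Xi_{i\in\mathcal{L}}$ from Eq.~\ref{eqn:coarse_graining_transformations} is a convex combination of copies of $\Xi_{\mathcal{L}}$, hence equals $\Xi_{\mathcal{L}}$ itself. Together with $w^*_{\mathcal{L}}=\log\sum_{k\in\mathcal{L}}e^{w_k}$ this yields the key identity
$$e^{\beta\Xi^*_{\mathcal{L}}\textbf{x}+w^*_{\mathcal{L}}}=e^{\beta\Xi_{\mathcal{L}}\textbf{x}}\sum_{k\in\mathcal{L}}e^{w_k}=\sum_{k\in\mathcal{L}}e^{\beta\Xi_k\textbf{x}+w_k}.$$
Adding the contributions of the untouched indices gives $Z'(\textbf{x})=\sum_{j\notin\mathcal{L}}e^{\beta\Xi_j\textbf{x}+w_j}+\sum_{k\in\mathcal{L}}e^{\beta\Xi_k\textbf{x}+w_k}=Z(\textbf{x})$, so the two normalizations coincide.

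Next I would match numerators. For $i\notin\mathcal{L}$ both the row $Q_i$ and the exponential $e^{\beta\Xi_i\textbf{x}+w_i}$ are untouched, so $Q_i s_i'(\textbf{x})=Q_i s_i(\textbf{x})$. For the merged index, using $\Xi_k=\Xi_{\mathcal{L}}$ on $\mathcal{L}$ gives $s^*_{\mathcal{L}}(\textbf{x})=\big(e^{\beta\Xi_{\mathcal{L}}\textbf{x}}\sum_{k\in\mathcal{L}}e^{w_k}\big)/Z(\textbf{x})$, and multiplying by $Q^*_{\mathcal{L}}=\big(\sum_{i\in\mathcal{L}}Q_i e^{w_i}\big)/\sum_{i\in\mathcal{L}}e^{w_i}$ the factor $\sum_{i\in\mathcal{L}}e^{w_i}$ cancels, leaving $Q^*_{\mathcal{L}}s^*_{\mathcal{L}}(\textbf{x})=\sum_{i\in\mathcal{L}}Q_i e^{\beta\Xi_i\textbf{x}+w_i}/Z(\textbf{x})=\sum_{i\in\mathcal{L}}Q_i s_i(\textbf{x})$. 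Summing the $i\notin\mathcal{L}$ terms with this merged term and subtracting $\textbf{x}$ recovers $\sum_{i=1}^K Q_i s_i(\textbf{x})-\textbf{x}=g(\textbf{x})$ exactly.

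I do not expect a genuine obstacle: the whole content is the observation in the second paragraph that averaging identical rows with softmax-of-$\textbf{w}$ weights returns the same row, which forces the partition function — and hence every softmax weight — to be preserved. The only care needed is notational bookkeeping: being explicit about which objects are rows versus columns of length $N$, and keeping $\beta$ inside the exponent consistent between $g$ and $g_{\mathcal{L}}$; neither is a mathematical difficulty.
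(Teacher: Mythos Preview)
Your proposal is correct and follows essentially the same approach as the paper: split the softmax sum into indices inside and outside $\mathcal{L}$, use $\Xi^*_{\mathcal{L}}=\Xi_{\mathcal{L}}$ together with $e^{w^*_{\mathcal{L}}}=\sum_{k\in\mathcal{L}}e^{w_k}$ to show the partition functions agree, and then verify that the merged numerator $Q^*_{\mathcal{L}}e^{\beta\Xi^*_{\mathcal{L}}\textbf{x}+w^*_{\mathcal{L}}}$ reproduces $\sum_{i\in\mathcal{L}}Q_i e^{\beta\Xi_i\textbf{x}+w_i}$. The only cosmetic difference is that the paper works componentwise in $x_j$ and $q_{i,j}$ while you work with the row vectors $Q_i$ directly.
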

\begin{proof}
Consider the original dynamics for $x_j$:
\begin{equation*}
 \begin{split}
 \dot{x_j} &= \sum_{i=1}^{K} {q_{i,j} \frac{e^{w_i +\beta \sum_l {\xi_{i,l}x_l} }}{\sum_{k}{e^{w_k +\beta \sum_l {\xi_{k,l}x_l} }}}} - x_j = \\
 &=\sum_{i\in \mathcal{L}} { \frac{q_{i,j}e^{w_i +\beta \sum_l {\xi_{i,l}x_l} }}{\sum_{k\in{\mathcal{L}}}{e^{w_k +\beta \sum_l {\xi_{k,l}x_l} }}+\sum_{k\notin{\mathcal{L}}}{e^{w_k +\beta \sum_l {\xi_{k,l}x_l} }}}} +\sum_{i\notin \mathcal{L}} { \frac{q_{i,j}e^{w_i +\beta \sum_l {\xi_{i,l}x_l} }}{\sum_{k\in{\mathcal{L}}}{e^{w_k +\beta \sum_l {\xi_{k,l}x_l} }}+\sum_{k\notin{\mathcal{L}}}{e^{w_k +\beta \sum_l {\xi_{k,l}x_l} }}}} - x_j 
 \end{split} 
\end{equation*}
Now, under the transformation, and noting that for $k\in\mathcal{L}$ we have that $\Xi_k=\Xi^{*}_{\mathcal{L}}$, we have:
\begin{equation*}
 \sum_{k\in{\mathcal{L}}}{e^{w_k +\beta \sum_l {\xi_{k,l}x_l} }}= \sum_{k\in{\mathcal{L}}}{e^{w_k +\beta \Xi^{*}_{\mathcal{L}} 
 \cdot \textbf{x}}} = e^{w^{*}_{\mathcal{L}} + \beta \Xi^{*}_{\mathcal{L}} 
 \cdot \textbf{x}}
\end{equation*}
And:

\begin{equation*}
 \sum_{k\in{\mathcal{L}}}q_{k,j}{e^{w_k +\beta \sum_l {\xi_{k,l}x_l} }}= e^{\beta \Xi^{*}_{\mathcal{L}}\cdot \textbf{x}}\sum_{k\in{\mathcal{L}}}{q_{k,j} e^{w_k} 
 }=( Q^*_{\mathcal{L}})_j e^{w^{*}_{\mathcal{L}}+\beta \Xi^{*}_{\mathcal{L}}\cdot \textbf{x}} 
\end{equation*}
And thus the transformed dynamics are given by:
\begin{equation*}
\begin{split}
 \dot{x_j}&=\frac{( Q^*_{\mathcal{L}})_j e^{w^{*}_{\mathcal{L}}+\beta \Xi^{*}_{\mathcal{L}}\cdot \textbf{x}} }{e^{w^{*}_{\mathcal{L}} + \beta \Xi^{*}_{\mathcal{L}} 
 \cdot \textbf{x}}+\sum_{k\notin{\mathcal{L}}}{e^{w_k +\beta \sum_l {\xi_{k,l}x_l} }}}+\sum_{i\notin \mathcal{L}} {q_{i,j} \frac{e^{w_i +\beta \sum_l {\xi_{i,l}x_l} }}{e^{w^{*}_{\mathcal{L}} + \beta \Xi^{*}_{\mathcal{L}} 
 \cdot \textbf{x}}+\sum_{k\notin{\mathcal{L}}}{e^{w_k +\beta \sum_l {\xi_{k,l}x_l} }}}} - x_j \\
 &= \sum_{i\notin{\mathcal{L}} \text{ or } i=*} {q_{i,j} \frac{e^{w_i +\beta \sum_l {\xi_{i,l}x_l} }}{\sum_{k}{e^{w_k +\beta \sum_l {\xi_{k,l}x_l} }}}} - x_j
 \end{split} 
\end{equation*}
And thus the dynamics are invariant under the transformation.
\end{proof}

\subsection{Symmetric dynamics derived from autoregulatory dynamics}

A recurring motif, across dozens of cell types, is autoregulation, where specific sets of TFs co-bind adjacent enhancers \cite{saint2016models}. Within our modelling framework, such a motif corresponds to a set $\mathcal{L}$ such that for $i\in\mathcal{L}$, we have that $\Xi_i=\textbf{z}$ where $\textbf{z}$ has large positive entries for all $i\in\mathcal{L}$ and zero entries for $i\notin{\mathcal{L}}$; and that $Q_i$ has a large positive entry at index $i$ and zero otherwise. Coarse-graining on $\mathcal{L}$, which is associated with identical memory patterns, leaves the dynamics invariant; and its associated coarse-grained patterns are $\Xi_{\mathcal{L}}^*=\textbf{z}$ and $Q_{\mathcal{L}}^*\approx\textbf{z}$ since it is averaged of all $Q_i$. Thus, the effective dynamics of the cell identity network approximates the symmetric dynamics of Eq.~\ref{eqn:energy_function}.

\subsection{Coarse-graining for general patterns of identical magnitude}

\subsubsection{Similarity of the coarse-grained energy function in the two-pattern case}
\par Throughout this section we suppose that the patterns in question have equal magnitude, which without loss of generality we set to unity. Our analysis will focus on the symmetric case, namely when $\textbf{Q}=\Xi$ so that we have an energy functional:
\begin{equation}\label{eqn:energy_function}
 V(\textbf{x},\beta,\Xi,\textbf{w})=-\beta^{-1}\log(Z)+||\textbf{x}||^{2}/2
\end{equation}
with $Z=\sum_i{e^{\beta\Xi_i\cdot \textbf{\textbf{x}}+w_i}}$. Notice that the values that $V(\textbf{x})$ takes are, in general, of order unity. The main idea is that now, while the transformed dynamics are not identical to the original dynamics, they are similar to the original dynamics up to a perturbative term of order $\beta \mu^2$ with $\mu$ being the angle between the patterns; and thus for a small enough $\beta$ the dynamics will be effectively identical.
To see this, consider a coarse-graining transformation on the indices $\mathcal{L}$ and let $V_{\mathcal{L}}(\textbf{x})$ be the coarse-grained energy functional. Then we have that 
$$V_{\mathcal{L}}(\textbf{x})-V(\textbf{x})=\dfrac{1}{\beta}\left( \log\left(\sum_{i=1}^{K}e^{(w_{i}+\beta \Xi_{i} \cdot \textbf{x})}\right)-\log\left(e^{\left(w^{*}_{\mathcal{L}}+\beta\Xi^{*}_{\mathcal{L}} \cdot \textbf{x}\right)}+\sum_{i \notin \mathcal{L}}e^{(w_{i}+\beta\Xi_{i} \cdot \textbf{x})}\right) \right).$$
And thus if we let $R=\sum_{i \notin \mathcal{L}}e^{w_{i}+\beta\Xi_{i}\cdot \textbf{x}}$, we have
$$V_{\mathcal{L}}(\textbf{x})-V(\textbf{x})=\dfrac{1}{\beta}\left( \log(R+\sum_{i \in \mathcal{L}}e^{(w_{i}+\beta\Xi_{i} \cdot \textbf{x})})-\log(R+e^{(w^{*}_{\mathcal{L}}+\beta\Xi^{*}_{\mathcal{L}} \cdot \textbf{x})}) \right).$$
If $\mathcal{L}$ is just two indices, $\mathcal{L}=\{r,s\}$, the above reads

$$V_{\mathcal{L}}(\textbf{x})-V(\textbf{x})=\dfrac{1}{\beta}\left( \log(R+[e^{(w_{r}+\beta\Xi_{r} \cdot \textbf{x})}+e^{(w_{s}+\beta\Xi_{s} \cdot \textbf{x})}])-\log \left (R+[e^{w_{s}}+e^{w_{r}}]e^{\beta\dfrac{e^{w_{s}}\Xi_{s}\cdot \textbf{x}+e^{w_{r}}\Xi_{r}\cdot \textbf{x}}{e^{w_{s}}+e^{w_{r}}}}\right) \right).$$

\par Let $\theta-\phi,\theta+\phi,\mu_{r,s}$ denote $\angle ({\Xi_r}, {\textbf{\textbf{x}}}),\angle ({\Xi_s},{\textbf{\textbf{x}}}),\angle ({\Xi_r}, {\Xi_s})$, noting that from the triangle inequality of angles we have that $\phi\leq\mu_{r,s}/2$. Recall that we assume $||\Xi_{r}|| = ||\Xi_{s}|| = 1$. Then we have that $\beta(V_{\mathcal{L}}(\textbf{x})-V(\textbf{x}))$ is equal to
$$ \log(R+[e^{(w_{r}+\beta||\textbf{x}||\cos(\theta-\phi))}+e^{(w_{s}+\beta||\textbf{x}||\cos(\theta+\phi))}])-\log(R+[e^{w_{s}}+e^{w_{r}}]e^{\beta||\textbf{x}||\dfrac{e^{w_{s}} \cos(\theta-\phi)+e^{w_{r}}\cos(\theta+\phi)}{e^{w_{s}}+e^{w_{r}}}}).$$

\par Expanding around a small angle $\phi \ll 1$, we get that 
\begin{equation*}
 V_{\mathcal{L}}(\textbf{x})-V(\textbf{x})=
 \dfrac{2\beta||\textbf{x}||^{2}\sin^{2}(\theta)\phi^{2}e^{\beta||\textbf{x}||\cos(\theta)+w_{r}+w_{s}}}{(e^{w_{r}}+e^{w_{s}})(e^{\beta ||\textbf{x}|| \cos(\theta)}(e^{w_{r}}+e^{w_{s}})+R)}+ \mathcal{O}(\phi^{3}).
\end{equation*}
By letting
$$\nu= \dfrac{2e^{\beta||\textbf{x}||\cos(\theta)+w_{r}+w_{s}}}{(e^{w_{r}}+e^{w_{s}})(e^{\beta||\textbf{x}||\cos(\theta)}(e^{w_{r}}+e^{w_{s}})+R)}, $$
we can find an upper bound in the following way.
$$\nu \leq \dfrac{2e^{w_{r}+w_{s}}}{(e^{w_{r}}+e^{w_{s}})^{2}} \leq \dfrac{1}{2}, $$
with the final inequality following from the AM-GM inequality. From this we in particular have that
$$ V_{\mathcal{L}}(\textbf{x})-V(\textbf{x})= 
 \beta\nu||\textbf{x}||^{2}\sin^{2}(\theta)\phi^{2}+ \mathcal{O}(\phi^{3}) \leq \frac{1}{8} \beta\mu^{2}$$
where in the last inequality we also used $||\textbf{x}||\leq 1$.

\subsubsection{Coarse-graining for subsets of many patterns}
Our previous results can be generalized in a straightforward manner to subsets of many patterns. Consider a set of unit vector patterns $\Xi_{1},\dots,\Xi_{n}$ and a vector $\textbf{x}$ such that the angle between $\textbf{x}$ and pattern $i$ is denoted $\angle (\Xi_i,\textbf{x})=\theta+k_i\phi$ (we may set $\theta$ as the minimal angle between $\textbf{x}$ and any of the patterns). We denote the set of indices of patterns we are coarse-graining by $\mathcal{L}$, and we set $\angle (\Xi_i,\Xi_j)=\mu_{ij}$ (with $\mu_{\mathcal{L}}$ as the maximal angle between any two patterns in $\mathcal{L}$). We have, from the triangle inequality of angles:
\begin{equation*}
 \mu_{ij}+\theta + k_i \phi \geq \theta + k_j \phi
\end{equation*}
And thus:
\begin{equation*}
 \mu_{ij}\geq (k_j-k_i) \phi
\end{equation*}
Consider the energy difference:
\begin{equation*}
 \begin{split}
 V_{\mathcal{L}}(\textbf{x})-V(\textbf{x})&= 
\dfrac{1}{\beta}\left( \log(R+\sum_{i\in\mathcal{L}}{e^{w_i + \beta\Xi_{i}\cdot\textbf{x}}}) -
\log \left (R+ e^{ w^*_{\mathcal{L}} + \beta \Xi^*_{\mathcal{L}}\cdot \textbf{x}}\right) \right)\\
&=\dfrac{1}{\beta}\left( \log(R+\sum_{i\in\mathcal{L}}{e^{w_i + \beta\Xi_{i}\cdot\textbf{x}}})-
\log \left (R+\sum_{i\in\mathcal{L}}{e^{w_i}} e^{\beta\dfrac{\sum_{i\in{\mathcal{L}}}e^{w_i}\Xi_i \cdot \textbf{x}}{\sum_{i\in\mathcal{L}}{e^{w_i}}}}\right) \right) \\
&=\dfrac{1}{\beta}\left( \log(R+\sum_{i\in\mathcal{L}}{e^{w_i + \beta ||\textbf{x}||\cos{\left(\theta+k_i\phi\right)}}})-
\log \left (R+\sum_{i\in\mathcal{L}}{e^{w_i}} e^{\dfrac{\sum_{i\in{\mathcal{L}}}e^{w_i} \beta ||\textbf{x}||\cos{\left(\theta+k_i\phi\right)}}{\sum_{i\in\mathcal{L}}{e^{w_i}}}}\right) \right)
 \end{split}
\end{equation*}
We can expand over $\phi$ to a second order:
\begin{equation*}
\begin{split}
 |V_{\mathcal{L}}(\textbf{x})-V(\textbf{x})| &\approx \frac{\sum _{i,j \in \mathcal{L},i<j}\left(k_i-k_j\right)^2 e^{w_i+w_j}}{\frac{2 \sum _{i\in \mathcal{L}} e^{w_i}}{e^{\beta \cos (\theta )}}R+2 \left(\sum _{i\in \mathcal{L}} e^{w_i}\right)^2}\beta ||\textbf{x}||^2 \phi ^2\sin ^2(\theta ) \\
 &\leq \frac{\sum_{i,j \in \mathcal{L}, i< j }\left(k_i-k_j\right)^2 e^{w_i+w_j}}{\left(\sum_{i\in \mathcal{L}} e^{w_i}\right)^2} \frac{\beta ||\textbf{x}||^2 \phi ^2\sin ^2(\theta )}{2} \\
 &\leq
 \frac{\sum_{i,j \in \mathcal{L}, i< j }{e^{w_i+w_j}}}{\left(\sum_{i\in \mathcal{L}} e^{w_i}\right)^2} \frac{ \beta ||\textbf{x}||^2 \mu_{\mathcal{L}} ^2\sin ^2(\theta )}{2} \\
 &=\frac{\sum_{i,j \in \mathcal{L}, i< j }{e^{w_i+w_j}}}{2\sum_{i,j \in \mathcal{L}, i< j }{e^{w_i+w_j}}+\sum_{i \in \mathcal{L}}{e^{2w_i}}} \frac{ \beta ||\textbf{x}||^2 \mu_{\mathcal{L}} ^2\sin ^2(\theta )}{2} \\ 
 &\leq \frac{\sin ^2(\theta )}{4} \beta \mu_{\mathcal{L}} ^2 \leq \frac{1}{4} \beta \mu_{\mathcal{L}} ^2
\end{split}
\end{equation*}
The coarse grained dynamics are thus similar to the fine-grained dynamics when $\beta \ll 1/\mu_{\mathcal{L}} ^2$.

\subsection{Coarse-graining transformation for patterns of arbitrary magnitude}

We will now generalize the coarse-graining transformation to be valid for patterns of any magnitude. The idea behind the generalized transformation is identical to the previous sections - we will look for a transformation for which the transformed dynamics are identical to the original dynamics up to a factor of magnitude $\beta \mu_{\mathcal{L}} ^2$. For a subset of indices $\mathcal{L}\subseteq\{1\dots K\}$ corresponding to patterns of magnitudes $u_i, i\in\mathcal{L}$, we define our coarse-graining transformation by $\textbf{Q}\xrightarrow{\mathcal{L}}\textbf{Q}',\Xi\xrightarrow{\mathcal{L}}\Xi',\textbf{w}\xrightarrow{\mathcal{L}} \textbf{w}'$, where:
\begin{equation*}
 \textbf{Q}'=\textbf{Q}_{i\notin\mathcal{L}}\cup\{Q^*_{\mathcal{L}}\}, \;\Xi'=\Xi_{i\notin\mathcal{L}}\cup\{\Xi^*_{\mathcal{L}}\},\;\textbf{w}'=\textbf{w}_{i\notin\mathcal{L}}\cup\{w^*_{\mathcal{L}}\}
\end{equation*}
with:
\begin{equation}
\begin{split}
 Q^*_{\mathcal{L}}&= \frac{\sum_{i\in\mathcal{L}} Q_i e^{w_i+\beta u_i }}{\sum_{i\in\mathcal{L}}{e^{w_i+\beta u_i}}} = \text{softmax}(\textbf{w}_{i\in\mathcal{L}}+\beta\textbf{u}_{i\in\mathcal{L}}) Q_{i\in\mathcal{L}}
 \\
 \Xi^*_{\mathcal{L}}&= \frac{\sum_{i\in\mathcal{L}} \Xi_i e^{w_i+\beta u_i }}{\sum_{i\in\mathcal{L}}{e^{w_i+\beta u_i}}} = \text{softmax}(\textbf{w}_{i\in\mathcal{L}}+\beta\textbf{u}_{i\in\mathcal{L}}) \Xi_{i\in\mathcal{L}}\\
 w^*_{\mathcal{L}}&=-\frac{\sum_{i\in\mathcal{L}}{u_i e^{w_i + u_i \beta}}}{\sum_{i\in\mathcal{L}}{e^{w_i + u_i \beta}}}\beta + \log{\sum_{k\in\mathcal{L}}{e^{w_k+u_k \beta}}} = -\text{softmax}(\textbf{w}_{i\in\mathcal{L}}+\beta\textbf{u}_{i\in\mathcal{L}}) \cdot \beta \textbf{ u} + \log{\sum_{k\in\mathcal{L}}{e^{w_k+u_k \beta}}}
\end{split}
\label{eq:uneqtrans}
\end{equation}
While the transformation appears similar to the equal magnitude case, a key difference is the $\beta$-dependence, where at large $\beta$ the largest pattern dominates the transformation. For an example of the application of the transformation, see Fig.~\ref{fig:unequal_patterns}.

Similar to the equal-magnitude case, this transformation is self-similar. It also maintains symmetry - if $Q_i=\Xi_i$ for all $i\in\mathcal{L}$ then $Q^*_{\mathcal{L}}=\Xi^*_{\mathcal{L}}$. The transformation coarse-grains the energy function in a similar manner to the equal-magnitude case, namely by:
 \begin{equation*}
 V_{\mathcal{L}}(\textbf{x},\beta,\Xi,\textbf{w})= V(\textbf{x},\beta,\Xi',\textbf{w'}) + \mathcal{O}(\beta \mu_{\mathcal{L}}^2)
 \end{equation*}
 where here the $\mathcal{O}(\beta \mu_{\mathcal{L}}^2)$ involves a more complex dependence on the magnitudes.

\begin{figure}[H]
 \centering
 \includegraphics[width=0.8\linewidth]{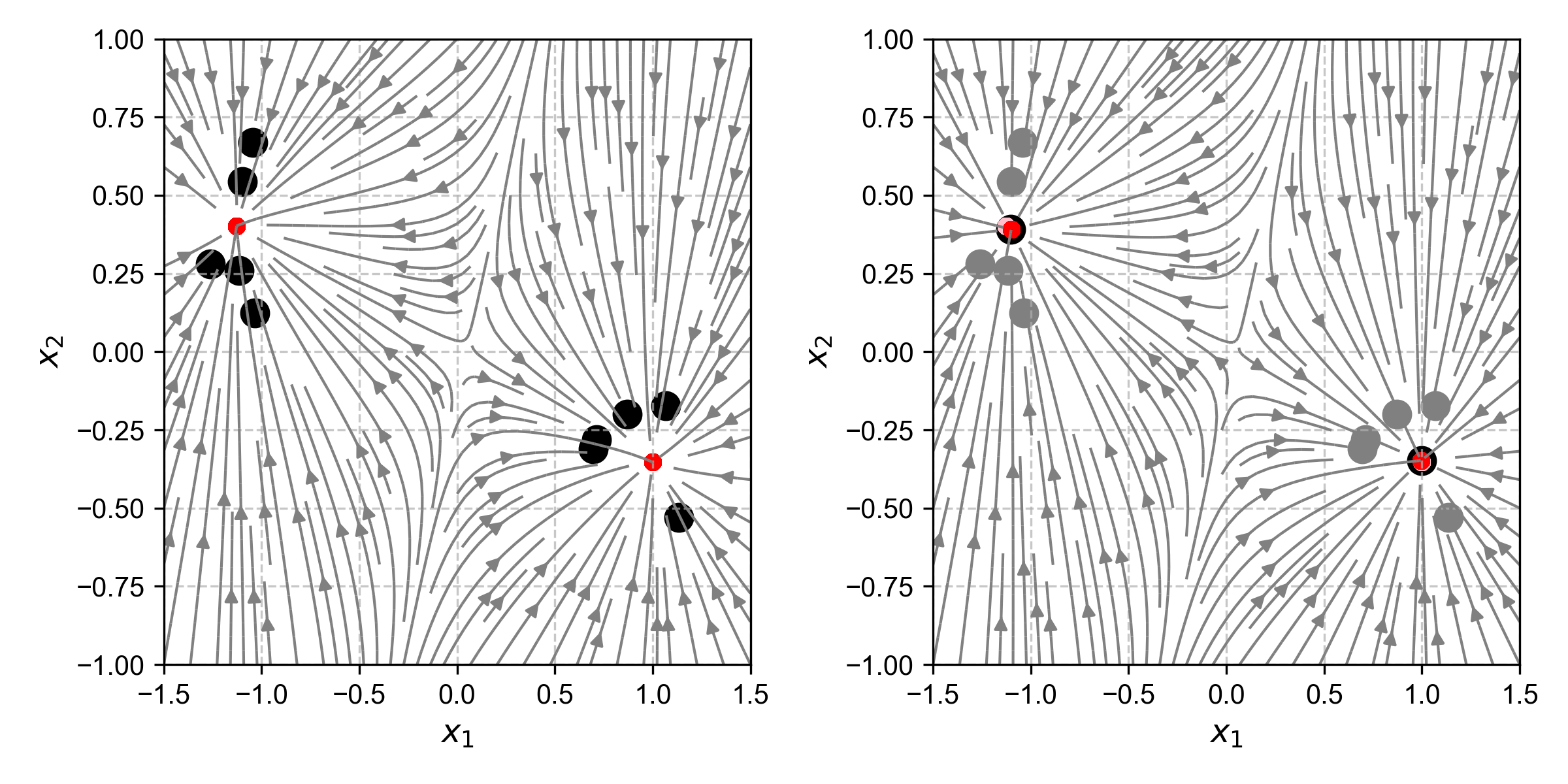}
 \caption{\textbf{Example of application of transformation for patterns of arbitrary magnitude.}} Left: Dynamics at $\beta=3$ for two clustered sets of patterns (black dots correspond to patterns, red dots correspond to stable fixed points). Right: Coarse-grained dynamics with the transformation~\ref{eq:uneqtrans} applied to pattern clusters. Black dots correspond to coarse-grained patterns, red dots correspond to stable fixed points, gray dots correspond to fine-grained patterns and pink dots to stable fixed points of the original dynamics.
 \label{fig:unequal_patterns}
\end{figure}

\section{Dimensionality reduction of the dynamics}

In this section we will demonstrate that it is possible to capture the dynamics of a system in a reduced form based on dimensionality reduction and rotation. Consider the dynamics given by Eq.~\ref{eq:enhancernet}, where the dimensions of $\textbf{Q},\Xi$ are $K\times N$. We can see that $\textbf{x}$ is confined to the convex hull whose vertices are the rows of $\textbf{Q}$ (since $\text{softmax}$ produces a probability vector). This convex hull is a subspace of dimension at most $K-1$. In addition, the dynamics of Eq.~\ref{eq:enhancernet} are invariant to orthogonal transformations. This is the subject of the following proposition:
\begin{proposition} \label{prop: orthogonal_transform}
 Let $\mathbf{M} \in \mathbb{R}^{N \times N}$ be an orthogonal matrix. Then if $\mathbf{x}$ satisfies Eq.~\ref{eq:enhancernet}, with $\mathbf{Q},\Xi$, then $\mathbf{y}=\mathbf{Mx}$, satisfies Eq.~\ref{eq:enhancernet} with $\tilde{\mathbf{Q}}=\mathbf{Q}\mathbf{M}^{T},\tilde{\Xi}=\Xi \mathbf{M}^{T}$.
\end{proposition}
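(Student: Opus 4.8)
The plan is to prove this by a direct substitution, exploiting only that $\mathbf{M}$ is constant in time and that $\mathbf{M}^{-1}=\mathbf{M}^{T}$. First I would differentiate $\mathbf{y}=\mathbf{Mx}$ to get $\dot{\mathbf{y}}=\mathbf{M}\dot{\mathbf{x}}$, and then substitute Eq.~\ref{eq:enhancernet} to obtain
\[
\dot{\mathbf{y}}=\mathbf{M}\mathbf{Q}^{T}\,\mathrm{softmax}\!\left(\beta\Xi\mathbf{x}+\mathbf{w}\right)-\mathbf{M}\mathbf{x}.
\]
The last term is simply $\mathbf{y}$ by definition, so everything reduces to rewriting the two occurrences of $\mathbf{x}$ in the softmax term consistently in terms of $\mathbf{y}$.

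For the argument of the softmax I would use orthogonality in the form $\mathbf{x}=\mathbf{M}^{T}\mathbf{y}$, which gives $\Xi\mathbf{x}=\Xi\mathbf{M}^{T}\mathbf{y}=\tilde{\Xi}\mathbf{y}$, so the argument becomes $\beta\tilde{\Xi}\mathbf{y}+\mathbf{w}$ with $\mathbf{w}$ unchanged. For the prefactor I would note that $\mathbf{M}\mathbf{Q}^{T}=(\mathbf{Q}\mathbf{M}^{T})^{T}=\tilde{\mathbf{Q}}^{T}$. Combining these,
\[
\dot{\mathbf{y}}=\tilde{\mathbf{Q}}^{T}\,\mathrm{softmax}\!\left(\beta\tilde{\Xi}\mathbf{y}+\mathbf{w}\right)-\mathbf{y},
\]
which is exactly Eq.~\ref{eq:enhancernet} with $\mathbf{Q}\mapsto\tilde{\mathbf{Q}}$ and $\Xi\mapsto\tilde{\Xi}$, completing the argument.

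There is essentially no obstacle here beyond careful bookkeeping of transposes; the only place the structure of $\mathbf{M}$ is used is the substitution $\mathbf{x}=\mathbf{M}^{T}\mathbf{y}$, which requires $\mathbf{M}^{-1}=\mathbf{M}^{T}$. It is worth recording as a remark that, more generally, any invertible $\mathbf{M}$ preserves the form of the dynamics with $\tilde{\mathbf{Q}}=\mathbf{Q}\mathbf{M}^{T}$ and $\tilde{\Xi}=\Xi\mathbf{M}^{-1}$; orthogonality is precisely what makes these two transformations coincide (both becoming right-multiplication by $\mathbf{M}^{T}$), which is the version needed for the dimensionality-reduction argument that follows. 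One may also note that fixed points transform covariantly, $\mathbf{x}^{*}\mapsto\mathbf{M}\mathbf{x}^{*}$, since the change of variables is a time-independent linear isomorphism, so the phase portrait is carried over intact.
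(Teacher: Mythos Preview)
Your proof is correct and follows essentially the same direct-substitution argument as the paper: differentiate $\mathbf{y}=\mathbf{Mx}$, substitute Eq.~\ref{eq:enhancernet}, and use $\mathbf{M}^{T}\mathbf{M}=I$ to rewrite both the softmax argument and the prefactor in terms of $\tilde{\Xi}$ and $\tilde{\mathbf{Q}}$. Your version is slightly more explicit about the transpose bookkeeping and the additional remarks on general invertible $\mathbf{M}$ and covariance of fixed points are correct and pertinent, but the core approach is identical to the paper's.
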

\begin{proof}
 We have that $\dot{\mathbf{y}}=\mathbf{M}\dot{\mathbf{x}}$ and so from Eq.~\ref{eq:enhancernet} 
 $$\dot{\mathbf{y}}=\mathbf{M}\mathbf{Q}^{T}\text{softmax}\left(\beta \Xi \textbf{x} + \textbf{w} \right) - \textbf{y}$$
By the fact that $\mathbf{M}$ is an orthogonal matrix we have $\mathbf{M}^{T}\mathbf{M}=I$ and so 
 $$\dot{\mathbf{y}}=\tilde{\mathbf{Q}}^{T}\text{softmax}\left(\beta \tilde{\Xi} \textbf{y} + \textbf{w} \right) - \textbf{y}.$$
\end{proof}

\par Now we will examine how the invariance of the dynamics of Eq.~\ref{eq:enhancernet} to orthogonal transformations, can lead to the decoupling of the dynamics of some of the components, and how in the symmetric case, where we have a potential given by Eq.~\ref{eqn:energy_function}, it leads to a dimensional reduction. Consider the case where $N > K$, and let $\mathbf{A}$ be a $K\times N$ real matrix. We may apply the singular value decomposition to get $\mathbf{A}=\mathbf{U}\mathbf{S}\mathbf{V}^{T},$ where $\mathbf{U}$ is a $K \times K$ orthogonal matrix, $\mathbf{V}$ is a $N \times N$ orthogonal matrix, and $\mathbf{S}$ is a $K \times N$ matrix, with the diagonal entries corresponding to the singular values of $\mathbf{A}$, and the rest of the entries being zeros. Crucially note that \emph{the number of non-zero entries of $\mathbf{S}$ is equal to the rank of $\mathbf{A}$}. Multiplying by $\mathbf{V}$, and calling this quantity, $\tilde{\mathbf{A}}$, we get that
$$\tilde{\mathbf{A}}:= \mathbf{A}\mathbf{V}=\mathbf{U}\mathbf{S}.$$
Now consider the form $\tilde{\mathbf{A}}$ takes. By choosing appropriate $\mathbf{V}, \mathbf{U}$ in the decomposition, we may always ensure that the diagonal entries of $\mathbf{S}$ appear in descending order, i.e. $d_{1} \geq d_{2} \geq d_{K} \geq 0$ (note that singular values will always be non-negative). I.e. $\mathbf{S}$ has the following block form

\[
\mathbf{S} = \begin{pNiceArray}{ccccc|c}
 d_{1} & 0 &0 & ... & 0 & \Block{5-1}<\Large>{\mathbf{0}} \\
 0 & d_{2} & 0 &... &0 \\
 \Vdots & & \Ddots & & \Vdots \\
 0& ... &0 & d_{K-1} & 0\\
 0 &\Cdots & & 0 & d_{K}
\end{pNiceArray}
\]
So if we look at the $ij^{\text{th}}$ component of $\tilde{\mathbf{A}}$, we get 
$$\tilde{a}_{ij}=\sum_{r=1}^{K}u_{ir}s_{rj}=\begin{cases}
 \sum_{r=1}^{K}u_{ir}\delta_{rj}d_{j} \text{ if } j \leq K\\
 0 \text{ otherwise }
\end{cases}.$$
And thus
$$\tilde{a}_{ij}=\begin{cases}
 u_{ij}d_{j} \text{ if } j \leq K\\
 0 \text{ otherwise }
\end{cases}.$$
Thus for any column $m$, of $\tilde{\mathbf{A}}$ such that $N\geq m >K$, all the entries of that column will be 0. 

Therefore returning to the dynamics of Eq.~\ref{eq:enhancernet}, and considering the singular value decomposition for $\mathbf{A}=\mathbf{Q}$, with the transformation of proposition \ref{prop: orthogonal_transform} given by $\mathbf{M}^{T}=\mathbf{V}$, we get that the dynamics of our equation will be given by 
$$\dot{\mathbf{y}}=\tilde{\mathbf{Q}}^{T}\text{softmax}\left(\beta \tilde{\Xi} \textbf{y} + \textbf{w} \right) - \textbf{y},$$
and from our particular choice of the transformation we know that for $K<j \leq N$, we have
\begin{equation} \label{eqn:decoupled components dynamics}
 \dot{y_{j}}=-y_{j},
\end{equation}
since 
$$\sum_{k=1}q_{kj}[\text{softmax}(\beta\tilde{\Xi}\mathbf{y}-\mathbf{w})]_{k}=0.$$

\par From this we deduce that our system decouples, with the dynamics of $y_{j}$ for $j \geq K$ being completely determined\footnote{In fact since the number of non-zero singular values is the same as the rank of the matrix, if $\mathbf{Q}$ has a smaller rank, $R<K$ then $y_{j}$ will decouple for all $j >R$ }. Now this means that in the general case of the dynamics being given by Eq.~\ref{eq:enhancernet}, provided our dynamics start at $\mathbf{0}$, we can reduce the dynamics of our system from $N$ to $K$ dimensions.

However, in the symmetric case when $\mathbf{Q}=\Xi$, i.e. our system is given by a potential function as in Eq.~\ref{eqn:energy_function}, we do get a full decoupling of the system, as now 
$$\sum_{r=1}^{N}\tilde{\xi}_{ir}y_{r}=\sum_{r=1}^{K}\tilde{\xi}_{ir}y_{r},$$
as $\mathbf{Q}=\Xi$ and so $\tilde{\xi}_{ir}=0$, for $r>K$. And as such, with our orthogonal transformation, we reduce the dimension of the system from $N$ to $K$.

\section{Dynamics in the vicinity of the coarse-grained pattern} \label{section:Dynamcis_vicinity}
\par We want to understand the local dynamics of the system. One would expect that when you are close to a coarse-grained pattern $\Xi^{*}$, the patterns that contribute to the dynamics, are precisely the ones that form the coarse-grained pattern. This is indeed the case as we will see below, for a parameter regime, which we will derive in this section.

\par As before suppose all patterns have a magnitude of unity and suppose we have two isolated patterns in our system, $\Xi_{r},\Xi_{s}$, i.e. we have that for $l \neq r,s$, $\mu_{r,s} \ll \mu_{l,r}$ and $\mu_{r,s} \ll \mu_{l,s}$, where $\mu_{ij}$ is the angle between patterns $i$ and $j$. Let us also define $\mu_{m}$ to be the value, such that $\mu_{m}=\min_{l\neq r,s}\{\mu_{l,s},\mu_{l,r}\}$, i.e. $\mu_{m}$ is the smallest angle of separation between the isolated patterns and the rest of the patterns. We are interested in the dynamics in the vicinity of the progenitor pattern of the two isolated patterns, that is, we consider $\textbf{x}$ such that 
$$\textbf{x}=\Xi^{*}+\textbf{y},$$
where $\textbf{y}$ is some perturbation away from the coarse-grained pattern. The only restriction we place on $\textbf{y}$ is that it must lie in the region "between" the two patterns $\Xi_{r},\Xi_{s}$, i.e. for any non-isolated pattern $\Xi_{i}$, we must have $\angle(\Xi_{i},\textbf{y})\geq \mu_{m}$, while for the isolated patterns we have $\angle(\textbf{y},\Xi_{r}) \leq \angle(\Xi_{r},\Xi_{s})$ and $\angle(\textbf{y},\Xi_{s}) \leq \angle(\Xi_{r},\Xi_{s})$. Then our dynamical system can be written as
$$\dot{\textbf{y}} = \textbf{Q}^T \text{softmax}\left(\beta \Xi (\Xi^{*}+\textbf{y}) + \textbf{w} \right) - \textbf{y}-\Xi^{*}.$$
Writing out the $i^{\text{th}}$ component of the soft-max, we have 
\begin{align} 
\left[\text{softmax}\left(\beta \Xi (\Xi^{*}+\textbf{y})+\textbf{w}\right)\right]_{i} &=\dfrac{1}{Z}e^{\left(\beta (\Xi_{i} \cdot \Xi^{*}+\Xi_{i} \cdot \textbf{y})+w_{i}\right)} \\ \label{eqn:softmax_gen_component}
&=\dfrac{1}{Z}e^{\beta \left(\dfrac{\cos(\mu_{i,r})e^{w_{r}}+\cos(\mu_{i,s})e^{w_{s}}}{e^{w_{r}}+e^{w_{s}}}+\Xi_{i} \cdot \textbf{y}\right)+w_{i}}
\end{align}
 Now consider in particular a pattern, $\Xi_{i}$, which is not one of the two isolated ones, so that $i \neq r,s$. Now $\mu_{m}\leq \mu_{r,s}$ and as such $\cos(\mu_{m}) \geq\cos(\mu_{r,s})$. Therefore we can bound the $i^{\text{th}}$ soft-max component, as follows
 \begin{equation*}
\left[\text{softmax}\left(\beta \Xi (\Xi^{*}+\textbf{y})+\textbf{w}\right)\right]_{i} \leq \dfrac{1}{Z}e^{\beta(\cos(\mu_{m})+\Xi_{i} \cdot \textbf{y})+w_{i}}
\end{equation*}
Now recall we consider $\textbf{y}$, such that the angle between $\textbf{y}$ satisfies $\angle(\Xi_{i},\textbf{y})\geq \mu_{m}$, and therefore $\cos(\angle(\Xi_{i},\textbf{y})) \leq \cos(\mu_{m})$. From this we get the further bound
 \begin{equation} \label{eqn:softmax_non_isolated}
\left[\text{softmax}\left(\beta \Xi (\Xi^{*}+\textbf{y})+w\right)\right]_{i} \leq \dfrac{1}{Z}e^{\beta\cos(\mu_{m})(1+||\textbf{y}||)+w_{i}}
\end{equation}
\par If we consider (without loss of generality) the $r^{\text{th}}$ component of the soft-max (i.e. one corresponding to one of the isolated patterns), we get from Eq.~\ref{eqn:softmax_gen_component} that
\begin{equation*}
 \left[\text{softmax}\left(\beta \Xi (\Xi^{*}+\textbf{y})\right)\right]_{r}=\dfrac{1}{Z}e^{\beta\left(\dfrac{e^{w_{r}}+\cos(\mu_{r,s})e^{w_{s}}}{e^{w_{r}}+e^{w_{s}}}+\Xi_{r} \cdot \textbf{y}\right)+w_{i}}.
\end{equation*}
Recall that we chose $\textbf{y}$ above, such that we have $\angle(\textbf{y},\Xi_{r}) \leq \angle(\Xi_{r},\Xi_{s})$ and as such $\cos(\angle(\Xi_{r},\Xi_{s})) \geq \cos(\mu_{r,s})$. By further setting $p=e^{w_{r}}/(e^{w_{r}}+e^{w_{s}})$, we get 

\begin{equation} \label{eqn:softmax_isolated}
 \left[\text{softmax}\left(\beta \Xi (\Xi^{*}+\textbf{y})\right)\right]_{r} \geq\dfrac{1}{Z}e^{\beta (p+(1-p)\cos(\mu_{r,s})+||\textbf{y}||\cos(\mu_{r,s}))+w_{i}}
\end{equation}

\par With this preliminary calculation, we now look at the ratio of the soft-max components of a non-isolated pattern, compared to an isolated pattern. Let's call this ratio $\mathcal{S}$, i.e.
\begin{equation*}
 \mathcal{S}= \dfrac{\left[\text{softmax}\left(\beta \Xi (\Xi^{*}+\textbf{y})+w\right)\right]_{i}}{\left[\text{softmax}\left(\beta \Xi (\Xi^{*}+\textbf{y})+w\right)\right]_{r}}. 
\end{equation*}
The reason we are looking at this ratio is as follows. If $\mathcal{S} \ll 1$, then that would mean that the isolated pattern dominates the non-isolated one, and as such we are safe to ignore the contribution of the non-isolated patterns. While if $ \mathcal{S} \cancel{\ll} 1 $, then the non-isolated patterns do contribute to an extent, which we cannot ignore. With this we look at $\mathcal{S}$, and by combining Eq.~\ref{eqn:softmax_non_isolated} and Eq.~\ref{eqn:softmax_isolated} we get
\begin{equation*}
 \mathcal{S} \leq \dfrac{e^{\beta\cos(\mu_{m})(1+||\textbf{y}||)+w_{i}}}{ e^{\beta (p+(1-p)\cos(\mu_{r,s})+||\textbf{y}||\cos(\mu_{r,s}))+w_{i}}}
\end{equation*}
And thus simplifying we have
\begin{equation}\label{eqn:s_r_big_bound}
 \mathcal{S} \leq e^{\beta[\{\cos(\mu_{m})-\cos(\mu_{r,s})\}(1+||\textbf{y}||)-p+p\cos(\mu_{r,s})]+(w_{i}-w_{r})}.
\end{equation}
Note when $\log(\mathcal{S})\ll 0$, we in particular have $\mathcal{S} \ll 1$. We will look at the logarithmic inequality corresponding to Eq.~\ref{eqn:s_r_big_bound}, 
$$\log(\mathcal{S}) \leq \beta\{\cos(\mu_{m})-\cos(\mu_{r,s})\}(1+||\textbf{y}||)-p+p\cos(\mu_{r,s})+w_{i}-w_{r}.$$
To have a more explicit picture, let us consider an approximation for the angle between the isolated patterns, $\mu_{r,s}$, and for $\mu_m$. Proceeding via a Taylor expansion, we have
\begin{equation*}
 \begin{split}
 \log(\mathcal{S}) &\leq \beta\left(\dfrac{\mu_{r,s}^{2}}{2}-\dfrac{\mu_{m}^{2}}{2}\right)(1+||\textbf{y}||)-\dfrac{p\mu_{r,s}^{2}}{2}+w_{i}-w_{r}+\mathcal{O} (\mu_{r,s}^{4},\mu_{m}^{4}) \\
 &\leq \beta\left(\dfrac{\mu_{r,s}^{2}}{2}-\dfrac{\mu_{m}^{2}}{2}\right)(1+||\textbf{y}||)+w_{i}-w_{r}+\mathcal{O} (\mu_{r,s}^{4},\mu_{m}^{4}) 
 \end{split}
\end{equation*}
Now, recall that we assumed that $\mu_{r,s}\ll \mu_m$. As such, for sufficiently large $\beta$ (when $\beta \mu_{r,s}^2$ is of order unity, compared with $w_i,w_r$ that are of order unity), we will expect to have $ \log(\mathcal{S})\ll 0$ and the non-isolated patterns do not contribute to the dynamics in the vicinity of $\Xi^{*}$.

\section{$\textbf{w}$-Modulation of the energy function and basins of attraction}
In this section we explore how $\textbf{w}$ changes the energy landscape of our system. We aim to show that an increase in $w_{i}$ leads to a decrease of the energy in the neighbourhood of $\Xi_{i}$, while leaving the energy around the other patterns virtually unchanged. As such we consider the symmetric case of $\textbf{Q}=\Xi$, so that we have an energy function $V(\textbf{x})$ given by Eq.~\ref{eqn:energy_function}. We focus on how the modulation of one component of $\textbf{w}$, $w_{i}$, affects the energy landscape in the vicinity of our stable patterns.

\par To do this let us consider, how a small increase $\delta$ in $w_{i}$, affects the energy landscape. Let $\tilde{V}(\textbf{x})$ be the modified energy landscape, i.e. one where $\tilde{w_{i}}=w_{i}+\delta$, then 
$$\tilde{V}(\textbf{x})-V(\textbf{x})=\dfrac{1}{\beta}(\log(Z)-\log(\tilde{Z}))=\dfrac{1}{\beta}\left(\log\left(\dfrac{R+e^{\beta\Xi_{i}\cdot \textbf{x}+w_{i}}}{R+e^{\beta\Xi_{i}\cdot \textbf{x}+w_{i}+\delta}}\right)\right),$$
where $R=\sum_{j\neq i} e^{\beta \Xi_{j}\cdot \textbf{x}+w_{j}}$. Considering the case where $\delta$ is small and doing a Taylor expansion we get
$$\tilde{V}(\textbf{x})-V(\textbf{x})= -\dfrac{e^{\beta\Xi_{i}\cdot \textbf{x}+w_{i}}}{R+e^{\beta\Xi_{i}\cdot \textbf{x}+w_{i}}}\delta+ \mathcal{O}(\delta^{2})= -s_{i}\delta +\mathcal{O}(\delta^{2}),$$
where $s_{i}$ is the $i^{\text{th}}$ component of the soft-max function. From this form, it is easy to see what a small increase in $w_{i}$ will do to the energy landscape. Namely in the vicinity of $\Xi_{i}$, where $1 \approx s_{i}\gg s_{j}$, for all $j \neq i$, we get a significant decrease in the energy, while close to other patterns, where $s_{i} \approx 0$, the change in the energy will be negligible.

\subsection{Estimation of the basins of attractions}
\par In this section, we aim to derive a heuristic estimate for the basins of attractions. We aim to find the equations for the separatrices. Once again we assume that the dynamics arise from a potential function $V(\textbf{x})$ given by Eqn.~\ref{eqn:energy_function}, so that $\textbf{Q}=\Xi$.
\subsubsection{Theoretical considerations}

\par Assume we have a fixed point, $p$ close to one of our patterns, $\Xi_{i}$ say, i.e. it's the fixed point associated with $\Xi_{i}$.
Suppose then we have a closed and convex region $\Omega \subseteq \mathbb{R}^{N}$,
around $p$, and for all $\mathbf{x} \in \Omega$, $D^{2}V(\textbf{x})>0$. Here $D^{2}V(\textbf{x})$ is the Hessian of $V$ at a point $\textbf{x}$ and by $D^{2}V(\textbf{x})>0$ we mean that the Hessian is positive definite. Then the restriction of our potential to $\Omega$, $V: \Omega \rightarrow \mathbb{R}$ is a strictly convex function. Therefore $V$ has at most 1 minimum in $\Omega$. So since our fixed point $p$ is in $\Omega$, $p$, will be a stable fixed point (as $D^{2}V(p)>0$ by the assumption on $\Omega$) and by the strict convexity it will be the only stable fixed point in $\Omega$. 
\par Now let us define the local sublevel sets of $V$ on $\Omega$ to be $\Lambda(h)$, i.e. $\Lambda(h):=\{\textbf{x}\in \Omega| V(\textbf{x}) \leq h\}$. Note that this set is closed as $V$ is continuous. Also considering $V$ as a function from $\Omega$ to $\mathbb{R}$, which is convex by definition of $\Omega$, we get that $\Lambda_{h}$ is convex, as sublevel sets of a convex function are convex. Therefore as in our dynamical system, $V$ is decreasing ($\dot{V} \leq 0$), $\Lambda_{h}$ is an invariant-set, provided we chose $h$ so that $\Lambda_{h}$ is non-empty and $\Lambda_{h} \cap \partial \Omega= \emptyset$, where $\partial \Omega$ is the boundary of $\Omega$. This implies that for $\Lambda_{h}$ satisfying these properties, $\Lambda_{h}$ is a subset of the basin of attraction of $p$, i.e. all trajectories starting in $\Lambda_{h}$ will converge to $p$.
Thus it will be imperative for us to be able to identify, when $D^{2}V>0$.
\subsubsection{Eigenvalues of the Hessian}

\par A matrix is positive-definite if and only if all it's eigenvalues are real and positive. As such our first task will be to compute the Hessian and figure out its eigenvalues. Note that we already know (from the dynamics of the system), that
$$\dfrac{\partial V(\textbf{x})}{\partial x_{i}}=-(\Xi^{T}[\text{softmax}(\textbf{w}+\beta\Xi\textbf{x})])_{i}+x_{i},$$
and as such by differentiating w.r.t $x_{j}$, we have
$$\dfrac{\partial^{2} V(\textbf{x})}{\partial x_{i} \partial x_{j}}=-\sum_{k=1}^{K} \xi_{ki}[\dfrac{\partial}{\partial x_{j}}\text{softmax}(\textbf{w}+\beta\Xi\textbf{x})]_{k}+\delta_{ij}=-\sum_{k=1}^{K} \xi_{ki}[\dfrac{\partial}{\partial x_{j}}\dfrac{e^{w_{k}+\beta \sum_{m=1}^{N}\xi_{km}x_{m}}}{Z}]+\delta_{ij},$$
where $\delta_{ij}$ is the Kronecker delta and $Z$ is our partition function as before. Thus
\begin{align} 
\dfrac{\partial^{2} V(\textbf{x})}{\partial x_{i} \partial x_{j}}&= -\sum_{k=1}^{K} \xi_{ki}[-\dfrac{e^{w_{k}+\beta \sum_{m=1}^{N}\xi_{km}x_{m}}}{Z^{2}} \dfrac{\partial Z}{\partial x_{j}}+\dfrac{e^{w_{k}+\beta \sum_{m=1}^{N}\xi_{km}x_{m}}}{Z}\beta\xi_{kj}]+\delta_{ij} \\ \label{eqn:Hessian_working_1} &=-\sum_{k=1}^{K} \xi_{ki}[-\dfrac{s_{k}}{Z} \dfrac{\partial Z}{\partial x_{j}}+s_{k}\beta\xi_{kj}]+\delta_{ij}.
\end{align}
where $s_{k}$ is the $k^{\text{th}}$ component of our softmax function.
Looking at the partition function we have
$$\dfrac{\partial Z}{\partial x_{j}}= \sum_{r=1}^{K} \dfrac{\partial}{\partial x_{j}}[e^{w_{r}+\beta\sum_{m}^{N}\xi_{rm}x_{m}}]=\sum_{r=1}^{K} [e^{w_{r}+\beta\sum_{m}^{N}\xi_{rm}x_{m}}]\beta\xi_{rj}.$$
Thus putting this back into Eq.~\ref{eqn:Hessian_working_1}, we have
$$\dfrac{\partial^{2} V(\textbf{x})}{\partial x_{i} \partial x_{j}} = -\sum_{k=1}^{K} \xi_{ki}[-s_{k}\beta \sum_{r=1}^{K}s_{r}\xi_{rj}+s_{k}\beta\xi_{kj}]+\delta_{ij}= -\beta[\sum_{k=1}^{K} \xi_{ki}s_{k}\xi_{kj}-\sum_{k=1}^{K} \sum_{r=1}^{K}\xi_{ki}s_{k}s_{r}\xi_{rj}]+\delta_{ij}. $$
The first term inside the square bracket corresponds to the $ij^{\text{th}}$ component of $\Xi^{T}D_{s}\Xi$, where $D_{s}$ is the diagonal matrix formed from the soft-max vector. While the second term corresponds to the $ij^{\text{th}}$ entry of $\Xi^{T}(s \otimes s)\Xi$, where $s \otimes s$ refers to the outer product of the softmax with itself. Putting it all together, we thus have that 
\begin{equation}\label{eqn:hessian_full}
 D^{2}V(\textbf{x})=I-\beta\Xi^{T}(D_{s}-s \otimes s)\Xi=I-\beta\Xi^{T}J_{s}(\textbf{w}+\beta\Xi\textbf{x})\Xi,
\end{equation}
 where $J_{s}$ is the Jacobian of the softmax, i.e. $[J_{s}]_{ij}=\partial/\partial z_{j}[\text{softmax}(\textbf{z})]$, which we evaluate at $\textbf{z}=\textbf{w}+\beta\Xi\textbf{x}$. (Note that $J_{s}=D_{s}-s \otimes s$), and $s$ is our softmax vector $s=\text{softmax}(\textbf{w}+\beta \Xi \textbf{x})$.

\par Note that as our potential $V(\mathbf{x})$ is a $\mathcal{C}^{2}$ function, it's Hessian matrix is symmetric, and as such all eigenvalues are real. Recall that if we have a positive-definite Hessian ($D^{2}V>0$) at a steady state, the steady state will be stable (as $\dot{\textbf{x}}=-\nabla V(\textbf{x})$) and if the Hessian is indefinite at a steady state, the steady state will be a saddle-point. 

\par Now an equivalent condition for a symmetric matrix to be positive definite is that the minimal eigenvalue, $\lambda_{\min}$ of the matrix is positive, i.e. $\lambda_{\min}>0$. Now in general we cannot find the $\lambda_{\min}$ explicitly for our Hessian, however we can find a subset of the regions, where the Hessian is positive-definite. Namely we have
\begin{proposition} \label{prop:inner_eigen_bound}
If $\lambda_{\max}(J_{s})[\textbf{x}]<||\Xi||^{-2}_{op}\beta^{-1}$ then $D^{2}V(x)>0$, where $||.||_{op}$ is the operator norm\footnote{The operator norm of a linear operator, $A:V \rightarrow W$, where $V,W$ are normed vector spaces, is defined as $||A||_{op}=\inf \{ c: ||Av||_{W}\leq c||v||_{V}\}$} and $\lambda_{\max}(J_{s})[\textbf{x}]$, refers to the maximal eigenvalue of the Jacobian of the softmax evaluated at $\textbf{z}=\textbf{w}+\beta \Xi \textbf{x}$.
\end{proposition}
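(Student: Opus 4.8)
The plan is to bound the minimal eigenvalue of the Hessian $D^2V(\textbf{x}) = I - \beta\,\Xi^T J_s \Xi$ from below and show the hypothesis forces it to be positive. First I would recall that $D^2V(\textbf{x})$ is symmetric (since $V \in \mathcal{C}^2$), so it is positive definite iff $\lambda_{\min}(D^2V(\textbf{x})) > 0$. Writing $D^2V = I - \beta\,\Xi^T J_s \Xi$, the minimal eigenvalue satisfies $\lambda_{\min}(D^2V) = 1 - \beta\,\lambda_{\max}(\Xi^T J_s \Xi)$, so the claim reduces to showing $\beta\,\lambda_{\max}(\Xi^T J_s \Xi) < 1$ under the hypothesis $\lambda_{\max}(J_s)[\textbf{x}] < \|\Xi\|_{op}^{-2}\beta^{-1}$.

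The key step is to control $\lambda_{\max}(\Xi^T J_s \Xi)$ in terms of $\lambda_{\max}(J_s)$ and $\|\Xi\|_{op}$. The two facts I would use are: (i) $J_s = D_s - s\otimes s$ is symmetric and positive semidefinite — this is the standard fact that the softmax Jacobian is a covariance-like matrix, which can be checked directly via $v^T J_s v = \sum_k s_k v_k^2 - (\sum_k s_k v_k)^2 = \mathrm{Var}_{s}(v) \geq 0$ by Jensen/Cauchy–Schwarz; and (ii) for any matrix $\Xi$ and symmetric PSD matrix $J_s$, one has $\lambda_{\max}(\Xi^T J_s \Xi) \leq \|\Xi\|_{op}^2\,\lambda_{\max}(J_s)$. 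Point (ii) follows because for any unit vector $v$, $v^T \Xi^T J_s \Xi v = (\Xi v)^T J_s (\Xi v) \leq \lambda_{\max}(J_s)\|\Xi v\|^2 \leq \lambda_{\max}(J_s)\|\Xi\|_{op}^2$, using that $J_s$ is symmetric PSD so its Rayleigh quotient is bounded by $\lambda_{\max}(J_s)$.

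Combining these: $\beta\,\lambda_{\max}(\Xi^T J_s \Xi) \leq \beta\,\|\Xi\|_{op}^2\,\lambda_{\max}(J_s)[\textbf{x}] < \beta\,\|\Xi\|_{op}^2\cdot\|\Xi\|_{op}^{-2}\beta^{-1} = 1$, so $\lambda_{\min}(D^2V(\textbf{x})) = 1 - \beta\,\lambda_{\max}(\Xi^T J_s \Xi) > 0$, giving $D^2V(\textbf{x}) > 0$ as claimed.

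The main obstacle, such as it is, lies in cleanly establishing that $J_s$ is positive semidefinite and then correctly chaining the operator-norm inequality — one must be careful that $\Xi$ here maps $\mathbb{R}^N \to \mathbb{R}^K$ (it has dimensions $K \times N$ in the convention set by Eq.~\ref{eq:enhancernet}), so $\|\Xi\|_{op}$ is the $\ell^2 \to \ell^2$ operator norm and the submultiplicativity step $\|\Xi v\| \leq \|\Xi\|_{op}\|v\|$ is exactly what is needed. No spectral subtlety arises beyond this because all matrices involved are symmetric; the only genuine input is the PSD-ness of the softmax Jacobian, which is classical.
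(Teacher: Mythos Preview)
Your proposal is correct and follows essentially the same route as the paper: both reduce to showing $\lambda_{\max}(\Xi^T J_s \Xi) < 1/\beta$, then bound this via the Rayleigh quotient and the operator-norm inequality $\|\Xi v\| \leq \|\Xi\|_{op}\|v\|$ to obtain $\lambda_{\max}(\Xi^T J_s \Xi) \leq \|\Xi\|_{op}^2\,\lambda_{\max}(J_s)$. Your explicit verification that $J_s$ is positive semidefinite via the variance identity $v^T J_s v = \mathrm{Var}_s(v) \geq 0$ is a nice touch --- the paper uses this implicitly (the step $\frac{\|z\|^2}{\|v\|^2}\frac{z^T J_s z}{\|z\|^2} \leq \|\Xi\|_{op}^2 \frac{z^T J_s z}{\|z\|^2}$ silently assumes the Rayleigh quotient is nonnegative) --- but otherwise the arguments are the same.
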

\begin{proof}
Now the Hessian is positive-definite iff $\lambda_{\min}(I-\beta\Xi^{T}J_{s}\Xi)>0$. As $I$ is an identity matrix, we can rewrite this condition as $1+\lambda_{\min}(-\beta\Xi^{T}J_{s}\Xi)>0$. From this we see that
\begin{equation} \label{eqn:primary_eigen_condition}
 \lambda_{\max}(\Xi^{T}J_{s}\Xi)<1/\beta,
\end{equation}
where we've employed the scaling of an eigenvalue and the fact that when we take out the minus sign, we change from a maximal to a minimal eigenvalue.

Recall that our matrix $\Xi^{T}J_{s}\Xi$ is real and symmetric, and so we may use the variational characterization of the maximal eigenvalue, namely for a real and symmetric matrix $M$, the maximal eigenvalue is given by the maximum over all vectors $v$ of the quadratic form $(v^{T}Mv)/(v^{T}v)$ As such we may rewrite Eq.~\ref{eqn:primary_eigen_condition} as 
\begin{equation} \label{eqn:bound_on_quad_var}
 \max_{v \in \mathbb{R}^{N}}[\dfrac{v^{T}\Xi^{T}J_{s}\Xi v}{v^{T}v}]<\dfrac{1}{\beta}.
\end{equation}
Now consider an arbitrary vector $v$, and let $z=\Xi v$,
note that $||z|| \leq ||\Xi||_{op}||v||$, where $||.||_{op}$, is the operator norm\footnotemark[\value{footnote}] of $\Xi$. Then we have
\begin{equation}\label{eqn:bound_on_quad_var_2}
 \dfrac{v^{T}\Xi^{T}J_{s}\Xi v}{v^{T}v}=\dfrac{z^{T}J_{s}z}{v^{T}v}=\dfrac{||z||^{2}}{||v||^{2}}\dfrac{z^{T}J_{s}z}{||z||^{2}} \leq ||\Xi||_{op}^{2}\dfrac{z^{T}J_{s}z}{||z||^{2}} \leq ||\Xi||_{op}^{2} \max_{z \in \mathbb{R}^{K}}[\dfrac{z^{T}J_{s}z}{||z||^{2}}]
\end{equation}
Recall that $J_{s}=D_{s}-s \otimes s$ and thus in particular $J_{s}$ is a real symmetric matrix, so that the variational description of the maximal eigenvalue is applicable to it. Thus by combining Eq.~\ref{eqn:bound_on_quad_var}, Eq.~\ref{eqn:bound_on_quad_var_2} and the variational definition of the maximal eigenvalue we have
$$\dfrac{v^{T}\Xi^{T}J_{s}\Xi v}{v^{T}v} \leq ||\Xi||_{op}^{2} \lambda_{\max}(J_{s}).$$
Taking the maximum over $v \in \mathbb{R}^{N}$ we have that
$$\lambda_{\max}(\Xi^{T}J_{s}\Xi)<||\Xi||_{op}^{2} \lambda_{\max}(J_{s}),$$
and as such we arrive at a sufficient condition for $D^{2}V>0$, namely $\lambda_{\max}(J_{s})<||\Xi||^{-2}_{op}\beta^{-1}$.
\end{proof}
\par From this proposition we see that when the maximal eigenvalue of the Jacobian of the softmax, $J_{s}$ is smaller than $||\Xi||^{-2}_{op}\beta^{-1}$ then the Hessian of the energy functional $D^{2}V$ is positive-definite. As such we proceeded to investigate the form of the eigenvalues of $J_{s}$. We do this by breaking it down into several propositions, for simplicity. The result for the maximal eigenvalue, which we are looking for is then presented in Proposition \ref{prop:maximal_eigenvalue_final}.

\begin{proposition}
 $ \forall \textbf{x} \in \mathbb{R}^{N}, 0$ is always an eigenvalue of $J_{s}.$
\end{proposition}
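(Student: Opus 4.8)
The plan is to exhibit an explicit nonzero vector in the kernel of $J_{s}$, from which $0 \in \operatorname{spec}(J_{s})$ follows immediately. The natural candidate is the all-ones vector $\mathbf{1} \in \mathbb{R}^{K}$, and the key fact I would exploit is that $s = \text{softmax}(\textbf{w}+\beta\Xi\textbf{x})$ is a probability vector, so its components are nonnegative and sum to one.

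First I would recall from Eq.~\ref{eqn:hessian_full} that $J_{s} = D_{s} - s \otimes s$, where $D_{s}$ is the diagonal matrix with the entries of $s$ on the diagonal and $s \otimes s$ is the rank-one outer product with $(s \otimes s)_{ij} = s_{i}s_{j}$. Then I would compute $J_{s}\mathbf{1}$ componentwise: the $i$-th component of $D_{s}\mathbf{1}$ is $s_{i}$, while the $i$-th component of $(s\otimes s)\mathbf{1}$ is $s_{i}\sum_{j=1}^{K}s_{j} = s_{i}$ since $\sum_{j}s_{j}=1$. Hence $(J_{s}\mathbf{1})_{i} = s_{i} - s_{i} = 0$ for every $i$, i.e. $J_{s}\mathbf{1} = \mathbf{0}$.

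Since $\mathbf{1}\neq \mathbf{0}$, this shows $\mathbf{1}$ is an eigenvector of $J_{s}$ with eigenvalue $0$, and as the above holds for every $\textbf{x}\in\mathbb{R}^{N}$ (the argument used only that $s$ is a probability vector, which is true for all $\textbf{x}$), the claim follows. There is essentially no obstacle here: the only thing to be careful about is getting the direction of the outer-product contraction right, i.e. that $(s\otimes s)\mathbf{1}$ contracts the second factor against $\mathbf{1}$ to produce $s(\mathbf{1}^{T}s) = s$; everything else is a one-line verification.
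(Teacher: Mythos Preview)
Your proposal is correct and is essentially identical to the paper's own proof: both exhibit the all-ones vector $\mathbf{1}\in\mathbb{R}^{K}$, use $J_{s}=D_{s}-s\otimes s$, and compute componentwise that $[J_{s}\mathbf{1}]_{i}=s_{i}-s_{i}\sum_{k}s_{k}=0$ because the softmax entries sum to one. There is nothing to add.
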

\begin{proof}
 To see this take the vector of $1$'s, in $\mathbb{R}^{K}$, which we denote by $\mathbf{1}$ i.e. $\mathbf{1}=(1,1,...,1)$. Then as $J_{s}=D_{s}-s \otimes s$, the $i^{th}$ component of $J_{s}\mathbf{1}$ is
 $$[J_{s}\mathbf{1}]_{i}=[D_{s}\mathbf{1}]_{i}-[(s\otimes s)\mathbf{1}]_{i}=s_{i}-\sum_{k=1}^{K}s_{i}s_{k}.$$
Now note that as the softmax is a probability vector, we have $\sum_{k}s_{k}=1$ and thus
$$[J_{s}\mathbf{1}]_{i}=s_{i}-s_{i}=0.$$
Hence $J_{s}\mathbf{1}=\mathbf{0}$ and $0$ is an eigenvalue of $J_{s}$. 
\end{proof}
Now we proceed to derive an implicit equation for the eigenvalues in the specific case when the entries of the softmax are all different.
\begin{proposition} \label{prop:non-equal_eigen}
 The non-zero eigenvalues of $J_{s}[w+ \beta \Xi\textbf{x}]$ are given by the solutions to the equation on the softmax entries 
 \begin{equation}\label{eqn:eigen_fraction_cond}
 \sum_{i=1}^{K}\dfrac{s_{i}}{s_{i}-\lambda}=0, 
 \end{equation}
where $s=(s_{1},...,s_{K})$, $s_{i}=[\text{softmax}(\textbf{w}+\beta\Xi \textbf{x})]_{i}$, provided for all $i,j$, such that $ i \neq j$ we have $s_{i} \neq s_{j}$.(I.e. all the entries of the softmax are unique).
\end{proposition}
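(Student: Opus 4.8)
The plan is to treat $J_s = D_s - s\otimes s$ as a rank-one perturbation of the diagonal matrix $D_s = \mathrm{diag}(s_1,\dots,s_K)$ and to read off its spectrum from the characteristic polynomial. Since the entries of a softmax are strictly positive, $s_i>0$ for every $i$, and by hypothesis the $s_i$ are pairwise distinct; hence for any $\lambda\notin\{s_1,\dots,s_K\}$ the matrix $D_s-\lambda I$ is invertible with $\det(D_s-\lambda I)=\prod_{i=1}^K(s_i-\lambda)\neq 0$. Writing $J_s-\lambda I=(D_s-\lambda I)-ss^T$ and applying the matrix determinant lemma $\det(A-ss^T)=\det(A)\bigl(1-s^T A^{-1}s\bigr)$ with $A=D_s-\lambda I$, together with $s^T(D_s-\lambda I)^{-1}s=\sum_i s_i^2/(s_i-\lambda)$, gives
\[
 \det(J_s-\lambda I)=\Bigl(\prod_{i=1}^K(s_i-\lambda)\Bigr)\Bigl(1-\sum_{i=1}^K\frac{s_i^2}{s_i-\lambda}\Bigr).
\]

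Next I would simplify the second factor using the fact that the softmax is a probability vector, so $\sum_i s_i=1$: from the identity $s_i^2/(s_i-\lambda)=s_i+\lambda s_i/(s_i-\lambda)$ one obtains $1-\sum_i s_i^2/(s_i-\lambda)=-\lambda\sum_i s_i/(s_i-\lambda)$. Thus, for $\lambda\notin\{s_1,\dots,s_K\}$, the equation $\det(J_s-\lambda I)=0$ is equivalent to $\lambda\sum_i s_i/(s_i-\lambda)=0$, which for $\lambda\neq 0$ is exactly Eq.~\ref{eqn:eigen_fraction_cond}. For the converse direction, any root $\lambda$ of $\sum_i s_i/(s_i-\lambda)=0$ is automatically nonzero (at $\lambda=0$ the left-hand side equals $K$) and lies outside $\{s_1,\dots,s_K\}$ (the expression is singular there), so running the computation backwards shows every such root is an eigenvalue of $J_s$.

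The step that needs real care --- and the main obstacle --- is that the determinant identity above is only valid for $\lambda\notin\{s_1,\dots,s_K\}$, so I must separately rule out that some $s_j$ is itself a nonzero eigenvalue. I would argue this directly: if $(D_s-ss^T)v=s_j v$, then the $j$-th component reads $0=(s^T v)s_j$, forcing $s^T v=0$ since $s_j>0$; substituting back gives $(D_s-s_j I)v=0$, and distinctness of the $s_i$ forces $v$ to be a multiple of $e_j$, contradicting $s^T v=0$ because $s^T e_j=s_j\neq 0$. Hence no $s_j$ is an eigenvalue, the two directions match up, and the nonzero eigenvalues of $J_s$ are precisely the solutions of Eq.~\ref{eqn:eigen_fraction_cond}. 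As a consistency check, the function $f(\lambda)=\sum_i s_i/(s_i-\lambda)$ is strictly increasing on each interval between consecutive $s_i$ and runs from $-\infty$ to $+\infty$ there, so it has exactly one root in each of the $K-1$ gaps; together with the already-established eigenvalue $0$ this accounts for all $K$ eigenvalues of the symmetric matrix $J_s$.
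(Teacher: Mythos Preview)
Your proof is correct and complete, but it takes a genuinely different route from the paper's argument. The paper works directly at the level of eigenvectors: it writes the eigenvalue equation componentwise as $(s_i-\lambda)z_i=s_i\bar z$ with $\bar z=\sum_k s_k z_k$, sums to force $\sum_i z_i=0$ when $\lambda\neq 0$, argues case-by-case that $\lambda=s_i$ is impossible under the distinctness hypothesis, and then sums $z_i=s_i\bar z/(s_i-\lambda)$ to obtain Eq.~\eqref{eqn:eigen_fraction_cond}; the converse is checked by plugging in the candidate eigenvector. You instead treat $J_s$ as a rank-one perturbation of $D_s$ and apply the matrix determinant lemma, which yields the full factorisation $\det(J_s-\lambda I)=-\lambda\bigl(\prod_i(s_i-\lambda)\bigr)\sum_i s_i/(s_i-\lambda)$ in one stroke; the only residual case analysis is ruling out $\lambda\in\{s_1,\dots,s_K\}$, which you do with essentially the same contradiction as the paper. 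Your approach is shorter and more structural --- it simultaneously recovers the previous proposition that $0$ is always an eigenvalue, and your intermediate-value-theorem remark pinpoints the eigenvalues as interlacing the $s_i$, information the paper does not extract. The paper's argument, by contrast, is more self-contained (no appeal to the determinant lemma) and produces the eigenvectors explicitly along the way.
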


\begin{proof}
Take $z$ an eigenvector and $\lambda$ an eigenvalue of $J_{s}$. By definition we have that $D_{s}z-(s\otimes s )z=\lambda z$. Writing this in component form we have that,
 $$ s_{i}z_{i}-\sum_{k=1}^{K}s_{i}s_{k}z_{k}=\lambda z_{i} \hspace{0.2cm} \forall i \in \{1,...,K\}.$$
Now set $\bar{z} :=\sum_{k} s_{k}z_{k}$, to get 
\begin{equation}\label{eqn:main_eigen_equation}
(s_{i}-\lambda)z_{i}=s_{i} \bar{z} \hspace{0.2cm} \forall i \in \{1,...,K\}.
\end{equation}

Summing over the $i$'s in Eq.~\ref{eqn:main_eigen_equation}, and recalling that as softmax is a probability vector, $\sum s_{i}=1$, we get 
$$\bar{z}-\lambda\sum z_{i}=\bar{z},$$
and as such either $\lambda=0$, which we already know is an eigenvalue, or 
\begin{equation} \label{eqn:z_zero_sum}
 \sum z_{i}=0.
\end{equation}
As we are in the non-zero eigenvalue case, we know that Eq.~\ref{eqn:z_zero_sum} holds.
Now from Eq.~\ref{eqn:main_eigen_equation} we also have that for a given $i \in \{1,...,K\}$ either $s_{i}= \lambda$ or $z_{i}=s_{i}\bar{z}/(s_{i}-\lambda)$. 
\par Let us examine the first case with the $i$, for which $s_{i}=\lambda$, fixed. Now recall that in this proposition we are considering the case that $s_{i} \neq s_{j}$ for any $i \neq j$. As such for all $j \in \{1,...,K\}\backslash \{i\}$, $s_{j}\neq \lambda$. Now returning back to Eq.~\ref{eqn:main_eigen_equation} for the $i^{\text{th}}$ case and utilizing $s_{i}=\lambda$, we get $\bar{z}=0$. But then for any other $j \neq i$, Eq.~\ref{eqn:main_eigen_equation} gives 
$$(s_{j}-\lambda)z_{j}=0,$$ 
and as $s_{j} \neq \lambda$, $z_{j}=0$. But then from Eq.~\ref{eqn:z_zero_sum}, we must have that $z_{i}=0$ and as such $z=0$ and so cannot be an eigenvector. 
\par As such we conclude that under the conditions of our propositions the second case must hold, and in fact it must hold for every $i \in \{1,...,K\}$ i.e. 

$$z_{i}=\dfrac{s_{i}\bar{z}}{s_{i}-\lambda} \hspace{0.2cm} \forall i \in \{1,...,K\}.$$
Summing this equation over the $i$'s and by Eq.~\ref{eqn:z_zero_sum}, we have that any eigenvalue must satisfy
$$\sum_{i=1}^{K} \dfrac{s_{i}\bar{z}}{s_{i}-\lambda}=0,$$
and thus
$$\sum_{i=1}^{K} \dfrac{s_{i}}{s_{i}-\lambda}=0.$$
\par Now to finish our proof, we only need to check that any solution of Eq.~\ref{eqn:eigen_fraction_cond} is an eigenvalue. Hence suppose $\lambda$ satisfies our equation and take a vector $z$ to be given by $z_{i}=s_{i}/(s_{i}-\lambda)$. Then 
$$[J_{s}z- \lambda z]_{i}=[D_{s}z-(s\otimes s)z-\lambda z]_{i}=\dfrac{s_{i}^{2}}{s_{i}-\lambda}- s_{i}\sum_{k=1}^{K}\dfrac{s_{k}^{2}}{s_{k}-\lambda}-\dfrac{\lambda s_{i}}{s_{i}-\lambda}=s_{i}\left(1-\sum_{k=1}^{K}\dfrac{s_{k}^{2}}{s_{k}-\lambda} \right)$$
Therefore
$$[J_{s}z- \lambda z]_{i}=s_{i}\left(\sum_{k=1}^{K}s_{k}-\sum_{k=1}^{K}\dfrac{s_{k}^{2}}{s_{k}-\lambda} \right)= s_{i}\left(-\sum_{k=1}^{K}\dfrac{\lambda s_{k}}{s_{k}-\lambda} \right)=0$$ 
and so $\lambda$ is indeed an eigenvalue. 
\end{proof}
With this proposition, the only case left unexamined, is the case where some of the softmax entries are equal. This is precisely the subject of the following proposition.
\begin{proposition}
 Suppose we have equal entries in the softmax. Denote them by $\sigma_{1},..,\sigma_{r}$, i.e. for any $m$, $\sigma_{m}$ is such that there exists some $i,j$ ($i \neq j$) for which $s_{i}=s_{j}=\sigma_{m}$. (Note it can be the case that when more than 2 entries are equal, e.g. $s_{1}=s_{2}=s_{3}$, and we will still assign the same sigma to them, i.e. $\sigma_{1}=s_{1}=s_{2}=s_{3}$ and so on for more identical entries.) Then the non-zero eigenvalues are given by the solutions of Eq~\ref{eqn:eigen_fraction_cond} in Proposition \ref{prop:non-equal_eigen} and $\lambda_{1}=\sigma_{1},...,\lambda_{r}=\sigma_{r}$.
\end{proposition}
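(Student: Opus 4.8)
The plan is to split the eigenvalues of $J_{s}=D_{s}-s\otimes s$ into three families --- the eigenvalue $0$, the repeated softmax values $\sigma_{1},\dots,\sigma_{r}$, and the roots of the secular equation Eq.~\ref{eqn:eigen_fraction_cond} --- and then to show, by a dimension count (equivalently, by factoring the characteristic polynomial), that these exhaust the spectrum with the correct multiplicities.

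First I would exhibit eigenvectors for the repeated values. For a repeated value $\sigma_{m}$ let $I_{m}=\{i:s_{i}=\sigma_{m}\}$, so $|I_{m}|\ge 2$; if $z$ is supported on $I_{m}$ with $\sum_{i\in I_{m}}z_{i}=0$, then $\sum_{k}s_{k}z_{k}=\sigma_{m}\sum_{i\in I_{m}}z_{i}=0$, so $(s\otimes s)z=0$ and hence $J_{s}z=D_{s}z=\sigma_{m}z$. This gives a $(|I_{m}|-1)$-dimensional eigenspace for the nonzero eigenvalue $\sigma_{m}$; over all repeated values these eigenvectors are independent (disjoint coordinate blocks) and lie in $\{\sum_{i}z_{i}=0\}$, hence are also independent of the eigenvector $\mathbf{1}$ for $0$. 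For the secular roots, note that any solution $\lambda$ of Eq.~\ref{eqn:eigen_fraction_cond} has $\lambda\ne s_{i}$ for all $i$ (the sum is only defined there), and $z_{i}=s_{i}/(s_{i}-\lambda)$ is then a nonzero vector with $J_{s}z=\lambda z$ --- this is verbatim the final computation in the proof of Proposition~\ref{prop:non-equal_eigen}, which used no distinctness. Conversely, the opening argument of that proof, applied to an eigenvalue lying outside $\{0\}\cup\{s_{1},\dots,s_{K}\}$, simplifies (the branch $s_{i}=\lambda$ cannot occur) and forces such an eigenvalue to satisfy Eq.~\ref{eqn:eigen_fraction_cond}. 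At this stage every eigenvalue is $0$, some $s_{i}$, or a solution of Eq.~\ref{eqn:eigen_fraction_cond}.

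The remaining work --- and the only real obstacle --- is completeness: ruling out a non-repeated $s_{i}$ from being an eigenvalue, and pinning down the multiplicities. I would do this via the characteristic polynomial. Write $v_{1}>\dots>v_{p}$ for the distinct softmax values with multiplicities $m_{1},\dots,m_{p}$ (so the $\sigma_{m}$ are the $v_{a}$ with $m_{a}\ge 2$). The matrix determinant lemma gives, for $\lambda\notin\{v_{a}\}$, $\det(J_{s}-\lambda I)=\prod_{i}(s_{i}-\lambda)\bigl(1-\sum_{i}\frac{s_{i}^{2}}{s_{i}-\lambda}\bigr)$, and $\sum_{i}s_{i}=1$ turns the bracket into $-\lambda\sum_{i}\frac{s_{i}}{s_{i}-\lambda}$, so as a polynomial identity $\det(J_{s}-\lambda I)=-\lambda\sum_{i}s_{i}\prod_{j\ne i}(s_{j}-\lambda)$. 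Grouping the $s_{i}$ by common value, this factors as $-\lambda\bigl(\prod_{a=1}^{p}(v_{a}-\lambda)^{m_{a}-1}\bigr)Q(\lambda)$ with $Q(\lambda)=\sum_{a=1}^{p}m_{a}v_{a}\prod_{b\ne a}(v_{b}-\lambda)$ of degree $p-1$, satisfying $Q(0)=\sum_{a}m_{a}v_{a}\prod_{b\ne a}v_{b}\ne 0$ and $Q(v_{c})=m_{c}v_{c}\prod_{b\ne c}(v_{b}-v_{c})\ne 0$ (all $v_{a}>0$ and distinct), and with the same zero set as Eq.~\ref{eqn:eigen_fraction_cond} after clearing denominators. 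Reading off this factorization: $0$ is a simple eigenvalue; each $v_{a}$ has multiplicity exactly $m_{a}-1$, i.e.\ exactly the $\sigma_{m}$ with their correct multiplicities and no non-repeated value at all; and the remaining $p-1$ eigenvalues are precisely the solutions of Eq.~\ref{eqn:eigen_fraction_cond}, which is the claim. (Alternatively one avoids the determinant computation: $f(\lambda)=\sum_{a}\frac{m_{a}v_{a}}{v_{a}-\lambda}$ has $f'>0$ and runs from $-\infty$ to $+\infty$ on each interval $(v_{a+1},v_{a})$, so Eq.~\ref{eqn:eigen_fraction_cond} has exactly $p-1$ simple real roots, one strictly between consecutive $v_{a}$; combined with $\mathbf{1}$ and the block eigenvectors above this yields $1+\sum_{a}(m_{a}-1)+(p-1)=K$ independent eigenvectors, a full eigenbasis, which forces completeness.) One should also check the degenerate extreme $p=1$, where all entries coincide, $Q$ is a nonzero constant, and only $0$ and $\sigma_{1}$ (multiplicity $K-1$) survive.
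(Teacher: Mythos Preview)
Your argument is correct and follows the same skeleton as the paper --- exhibit eigenvectors for the repeated values $\sigma_m$ via zero-sum vectors supported on the corresponding index blocks, and reuse the computation from Proposition~\ref{prop:non-equal_eigen} to show solutions of Eq.~\ref{eqn:eigen_fraction_cond} are eigenvalues. The difference lies in the completeness step. The paper simply asserts that ``summing over all the eigenspaces dimensions'' gives a complete characterization, without counting the solutions of Eq.~\ref{eqn:eigen_fraction_cond} or explicitly ruling out non-repeated softmax values as eigenvalues. You close both gaps: your factorization $\det(J_s-\lambda I)=-\lambda\bigl(\prod_a(v_a-\lambda)^{m_a-1}\bigr)Q(\lambda)$ via the matrix determinant lemma, together with $Q(v_c)\ne 0$ and $Q(0)\ne 0$, pins down the multiplicities exactly and shows non-repeated values cannot be eigenvalues; the alternative interlacing argument for $f(\lambda)=\sum_a m_a v_a/(v_a-\lambda)$ gives the same conclusion by counting. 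Either route is more than the paper provides and makes the proposition fully rigorous.
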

\begin{proof}
\par We start by showing that any repeated entry is indeed an eigenvalue, $s_{i_{1}}=s_{i_{2}}=...=s_{i_{M}}=:\sigma$, for some collection of indices $\{i_{1},...,i_{M}\}$, which we denote by $\mathcal{I}$. For our proof we will split our indices $\{1,...,K\}$ into those in $\mathcal{I}$ and those not in $\mathcal{I}$.

\par We first look at the case when $\{1,...,K\}\backslash \mathcal{I}= \emptyset$, to illustrate the idea. In this scenario all entries of the softmax are equal and take the value $\sigma$. In this case take our candidate for eigenvector to be, $e_{l} -e_{i}$, where $i \neq l$ and $e_{i}$ is one of the canonical basis vectors, i.e. the value of the $j^{\text{th}}$ component is $[e_{i}]_{j}=\delta_{ij}$. By looking at the $k^{\text{th}}$ component of the vector$[J_{s}(e_{l}-e_{i})]_{k}=[(D_{s}-(s \otimes s))(e_{l}-e_{i})]_{k}$, we have
$$[(D_{s}-(s \otimes s))(e_{l}-e_{i})]_{k}=s_{l}\delta_{lk}-s_{k}\delta_{ik}-\sum_{m=1}^{K}s_{k}s_{m}\delta_{lm} +\sum_{m=1}^{K}s_{k}s_{m}\delta_{im}=\sigma \delta_{lk}-\sigma\delta_{ik}-\sigma^{2}+\sigma^{2}.$$
As such we have
$$[(D_{s}-(s \otimes s))(e_{l}-e_{i})]_{k}=\sigma \delta_{lk}-\sigma\delta_{ik}=\sigma[e_{l}-e_{i}]_{k},$$
and we have that 
$J_{s}(e_{l}-e_{i})=\sigma(e_{l}-e_{i}).$
Thus we have shown that $\sigma$ is an eigenvalue of $J_{s}$ (and in fact we've shown that it is the only non-zero eigenvalue, as $e_{l}-e_{i}$ span a $(K-1)$-dimensional subspace).

\par Now we move onto the general case, suppose $\{1,...,K\}\backslash\mathcal{I} \neq \emptyset$. We denote by $z$, our candidate for our eigenvector and we take it to have the following form. We pick two arbitrary indices $a,b\in \mathcal{I}$. (Recall that by definition of $\mathcal{I}$, we must have $|\mathcal{I}| \geq 2$). Set 
$$z_{a}=1, z_{b}=-1 \text{ and } \forall( r\neq a,b ) \text{ set }z_{r}=0$$
Let us examine the vector $(D_{s}-(s\otimes)s)z$ for our choice of $z$. Consider a component $k$. We have 
$$[(D_{s}-(s\otimes s))z]_{k}=s_{k}z_{k}-\sum_{m=1}^{K}s_{r}s_{m}z_{m}= s_{k}(\delta_{ka}-\delta_{kb})-s_{k}(s_{a}-s_{b}).$$
Proceeding further with the calculation we get
$$[(D_{s}-(s\otimes s))z]_{k}=s_{k}(\delta_{ka}-\delta_{kb})-s_{k}(\sigma-\sigma).$$
where we have utilised the definition of $\sigma$. Now this gives us that for our choice of $z$, we get
$$J_{s}z=[D_{s}-(s\otimes s)]z=\sigma z,$$
where the last equality, crucially, follows from the fact that when $k \neq a,b , (\delta_{ka}=0=\delta_{kb})$ and when $k=a$ or $k=b$, $s_{k}=\sigma$. Thus $z$ is indeed an eigenvector, with the eigenvalue $\sigma$. Note also that this eigenspace will be of dimension at least $|\mathcal{I}|-1$, due to the free choices of $a,b$.
\par Finally, we need to show that the rest of the eigenvalues are given as before by the solutions to Eq.~\ref{eqn:eigen_fraction_cond}. We do this by the same calculation as in Proposition \ref{prop:non-equal_eigen}. Let $\lambda $ be a solution \footnote{Note that the only case when there are no solutions to the Eq.~\ref{eqn:eigen_fraction_cond} is when all entries are equal} of Eq.~\ref{eqn:eigen_fraction_cond}. Take $z$ to be a vector given by $z_{i}=s_{i}/(s_{i}-\lambda)$. By doing the exact same calculation as in Proposition \ref{prop:non-equal_eigen}, we get
$$[J_{s}z- \lambda z]_{i}=[D_{s}z-(s\otimes s)z-\lambda z]_{i}=\dfrac{s_{i}^{2}}{s_{i}-\lambda}- s_{i}\sum_{k=1}^{K}\dfrac{s_{k}^{2}}{s_{k}-\lambda}-\dfrac{\lambda s_{i}}{s_{i}-\lambda}=s_{i}\left(1-\sum_{k=1}^{K}\dfrac{s_{k}^{2}}{s_{k}-\lambda} \right)$$
Therefore
$$[J_{s}z- \lambda z]_{i}=s_{i}\left(\sum_{k=1}^{K}s_{k}-\sum_{k=1}^{K}\dfrac{s_{k}^{2}}{s_{k}-\lambda} \right)= s_{i}\left(-\sum_{k=1}^{K}\dfrac{\lambda s_{k}}{s_{k}-\lambda} \right)=0$$ 
and so $\lambda$ is indeed an eigenvalue. (Note also that $\lambda$ has an eigenspace of dimension at least $1$, and now summing over all the eigenspaces dimensions, we get that we have a complete characterization of the eigenvalues of $J_{s}$).
\end{proof}

As such we can state the following about the maximal eigenvalue:
\begin{proposition}\label{prop:maximal_eigenvalue_final}
 The maximal eigenvalue of $J(s)$ is given by the maximal component of the soft-max, $s_{i}$, if there is another $s_{j}$ such that $s_{j}=s_{i}$ and otherwise, it is given by the maximal solution of the equation
 \begin{equation}\tag{\ref{eqn:eigen_fraction_cond}}
 \sum_{i=1}^{K}\dfrac{s_{i}}{s_{i}-\lambda}=0.
 \end{equation}
 $$$$
\end{proposition}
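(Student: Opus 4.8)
The plan is to read off the answer from the two preceding propositions, which between them already list \emph{all} eigenvalues of $J_{s}$, and then merely decide which entry of that list is largest. Concretely, the eigenvalues of $J_{s}[\textbf{w}+\beta\Xi\textbf{x}]$ are: the value $0$; each softmax value $\sigma$ that is attained by at least two coordinates (a repeated entry); and the solutions of the secular equation $f(\lambda):=\sum_{i=1}^{K}s_{i}/(s_{i}-\lambda)=0$. Since every softmax coordinate is strictly positive, $0$ is dominated by $s_{\max}$ and cannot be the maximum, so the task reduces to (i) locating the roots of $f$ and (ii) comparing the largest such root with the repeated entries.

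For step (i) I would order the \emph{distinct} softmax values as $v_{1}<v_{2}<\dots<v_{m}$ (so $v_{m}=s_{\max}$) and examine $f$ on each interval of $\mathbb{R}\setminus\{v_{1},\dots,v_{m}\}$. The derivative $f'(\lambda)=\sum_{i}s_{i}/(s_{i}-\lambda)^{2}$ is strictly positive wherever defined, so $f$ is strictly increasing on each such interval. Together with the elementary limits $f(\lambda)\to 0^{+}$ as $\lambda\to-\infty$, $f(\lambda)\to 0^{-}$ as $\lambda\to+\infty$, $f(\lambda)\to+\infty$ as $\lambda\uparrow v_{k}$, and $f(\lambda)\to-\infty$ as $\lambda\downarrow v_{k}$, this forces: $f$ has no root in $(-\infty,v_{1})$, no root in $(v_{m},+\infty)$, and exactly one root in each bounded gap $(v_{k},v_{k+1})$. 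In particular, when $m\ge 2$ the largest root $\lambda^{*}$ of $f$ satisfies $v_{m-1}<\lambda^{*}<v_{m}=s_{\max}$.

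For step (ii) there are two cases, matching the dichotomy in the statement. If $s_{\max}$ is a repeated entry, then $s_{\max}$ is itself an eigenvalue of $J_{s}$ (by the previous proposition), and it exceeds $0$, every root of $f$ (all strictly $<s_{\max}$ by the bound above, or none when $m=1$), and every other repeated entry (all $\le s_{\max}$); hence $s_{\max}$ is the maximal eigenvalue. If instead $s_{\max}$ is attained by a single coordinate, then every repeated entry lies among $\{v_{1},\dots,v_{m-1}\}$, hence is $\le v_{m-1}<\lambda^{*}$, while also $0<v_{m-1}<\lambda^{*}$; so the maximal root $\lambda^{*}$ of $f$ is the maximal eigenvalue. (The degenerate cases are harmless: $m=1$, i.e.\ all entries equal, falls under the first case for $K\ge 2$, and $K=1$ gives $J_{s}=0$.)

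I do not expect a serious obstacle here; the only genuine content is the monotonicity-and-limits analysis of $f$ that pins the largest root strictly between the two largest distinct softmax values, and the remainder is bookkeeping against $0$ and the repeated entries. If one prefers to avoid studying $f$ explicitly, the same conclusion follows by viewing $J_{s}=D_{s}-s\,s^{T}$ with $D_{s}=\text{diag}(s)$ as a rank-one downdate of a diagonal matrix and invoking the interlacing theorem for rank-one perturbations: this gives $v_{m-1}\le\lambda_{\max}(J_{s})\le s_{\max}$ directly, with equality on the right precisely when $s_{\max}$ is a repeated coordinate of $s$, since only then can an eigenvector of eigenvalue $s_{\max}$ survive the rank-one correction.
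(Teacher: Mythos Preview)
Your proposal is correct. In fact it supplies more than the paper does: in the paper this proposition is simply \emph{stated} as a summary (``As such we can state the following about the maximal eigenvalue:'') immediately after the two propositions that enumerate all eigenvalues of $J_s$, with no separate proof. The paper implicitly takes for granted precisely the interlacing fact that you establish---that the largest root of $f(\lambda)=\sum_i s_i/(s_i-\lambda)$ lies strictly in $(v_{m-1},v_m)$ and hence is dominated by $s_{\max}$ whenever $s_{\max}$ is repeated, and dominates every repeated entry otherwise.

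Your monotonicity-and-limits argument for $f$ is exactly the right way to close this gap, and your case split matches the dichotomy in the statement. The alternative you sketch at the end (viewing $J_s=D_s-s\,s^T$ as a rank-one perturbation of a diagonal matrix and invoking eigenvalue interlacing) is equivalent and equally clean; either route is fine. There is nothing to correct.
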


\subsubsection{Calculation of the estimated separatrix}
We are interested in finding the separating surfaces, which form the boundaries between the basins of attractions and by above we expect that $\lambda_{\max}(J)$ can be used as a good estimate for this. To find the separatrix, we employ the following observation. 

\par Consider the case when we are in a region where a single pattern $\Xi_{m}$ is dominating the dynamics, i.e. $s_{m} \approx 1$. Then we can write for $j \neq m$, $s_{j}=\epsilon\rho_{j}$, where $\epsilon>0$ is small and $\rho_{j}$ is a rescaled variable, so that $\rho_{j}=\mathcal{O}(1)$. This then means $s_{m}=1-\epsilon(\sum_{j \neq m} \rho_{j})$, and so Eq.~\ref{eqn:eigen_fraction_cond} becomes
$$\dfrac{1-\epsilon\sum_{j\neq m}\rho_{j}}{1-\epsilon\sum_{j \neq m}\rho_{j}- \lambda(\epsilon)}+\sum_{j\neq m}\dfrac{\epsilon\rho_{j}}{\epsilon\rho_{j}-\lambda(\epsilon)}=0,$$
where we have written $\lambda(\epsilon)$ to signify that $\lambda$ is dependent on $\epsilon$. Multiplying through to eliminate the fractions we get
$$(1-\epsilon\sum_{j\neq m}\rho_{j})\prod_{j\neq m}(\epsilon\rho_{j}-\lambda(\epsilon))+(1-\epsilon\sum_{j \neq m}\rho_{j}-\lambda(\epsilon))\sum_{j\neq m}\epsilon\rho_{j}\prod_{k \neq j,m}(\epsilon\rho_{m}-\lambda(\epsilon))=0.$$

Employing an asymptotic expansion for $\lambda(\epsilon)$ of the form $\lambda(\epsilon)=\lambda_{0}+\epsilon\lambda_{1}+\mathcal{O}(\epsilon^{2})$, we have
$$(1-\epsilon\sum_{j\neq m}\rho_{j})\prod_{j\neq m}(\epsilon\rho_{j}-\lambda_{0}-\epsilon\lambda_{1})+(1-\epsilon\sum_{j \neq m}\rho_{j}-\lambda_{0}-\epsilon\lambda_{1})\sum_{j\neq m}\epsilon\rho_{j}\prod_{k \neq j,m}(\epsilon\rho_{m}-\lambda_{0}-\epsilon\lambda_{1})+\mathcal{O}(\epsilon^{2})=0.$$
Now focusing on the $\mathcal{O}(1)$ we get,
$$(-\lambda_{0})^{K-1}=0.$$
Therefore $\lambda_{0}=0$. So at a steady state we have $\lambda_{\max} \approx 0$.
\par Now imagine following a path $\pi(r)$ between two points $p$ and $q$, parametrized by $r$ so that $\pi(0)=p$ and $\pi(1)=q$. If $p$ and $q$ are two stable steady states, then considering $\lambda_{max}(r)$ as a function of $r$, we have $\lambda_{max}(0)\approx 0 \approx \lambda_{max}(1)$. As between two stable steady states, there will necessarily be a separating surface, on that surface $D^{2}V$ is not positive definite and hence in particular, $\lambda_{max}>1/\beta$. Moreover this means that $\lambda_{max}(r)$ must have at least one turning point, and we will utilise it as our indicator of the position of the separatrix. As such we are interested in the points where
$$\dfrac{\partial \lambda_{max}(r)}{\partial r}=0$$
by the chain rule we get,

$$\dfrac{\partial \lambda_{max}(r)}{\partial r}=\sum_{i=1}^{K}\dfrac{\partial\lambda_{max}}{\partial s_{i}}\dfrac{\partial s_{i}}{\partial r}.$$
By differentiating Eq.~\ref{eqn:eigen_fraction_cond}, we have that $\lambda_{\max}$ has to satisfy,
$$\dfrac{1}{s_{i}-\lambda_{\max}}-\dfrac{s_{i}}{(s_{i}-\lambda_{\max})^{2}}(1-\dfrac{\partial\lambda_{\max}}{\partial s_{i}})=0,$$
and so 
$\dfrac{\lambda_{\max}}{s_{i}}=\dfrac{\partial\lambda_{\max}}{\partial s_{i}}.$ Plugging this back in we have
$\dfrac{\partial \lambda_{max}(r)}{\partial r}=\lambda_{\max}\sum_{i=1}^{K}\dfrac{1}{s_{i}}\dfrac{\partial s_{i}}{\partial r}$. As such we will have a turning point when $\lambda_{\max}=0$, which is the case when we are at a steady state or 
$$0=\sum_{i=1}^{K}\dfrac{1}{s_{i}}\dfrac{\partial s_{i}}{\partial r}.$$

\par Now recall that our position on the path $\textbf{x}$ is given by $\textbf{x}=\pi(r)$. By differentiating the soft-max function we have

$$\dfrac{\partial s_{i}}{\partial r}= -\dfrac{e^{w_{i}+\sum_{m}\xi_{im}x_{m}}}{Z^{2}}\dfrac{\partial Z}{\partial r}+\dfrac{e^{w_{i}+\sum_{m}\xi_{im}x_{m}}}{Z}\beta\sum_{m}\xi_{im}\dfrac{\partial x_{m}}{\partial r},$$
and thus, by denoting $\pi_{m}(r)$ as the $m^{\text{th}}$ co-ordinate of the path $\pi(r)$, we get
$$\dfrac{\partial s_{i}}{\partial r}= -\dfrac{s_{i}}{Z}\sum_{j}e^{w_{j}+\beta\sum_{m}\xi_{jm}x_{m}}(\beta\sum_{m}\xi_{jm}\dfrac{\partial\pi_{m}}{\partial r})+s_{i}\beta\sum_{m}\xi_{im}\dfrac{\partial \pi_{m}}{\partial r},$$
and therefore we have
$$\dfrac{\partial s_{i}}{\partial r}= -s_{i}\sum_{j}s_{j}(\beta\sum_{m}\xi_{jm}\dfrac{\partial\pi_{m}}{\partial r})+s_{i}\beta\sum_{m}\xi_{im}\dfrac{\partial \pi_{m}}{\partial r},$$

$$\dfrac{\partial s_{i}}{\partial r}=\beta s_{i} \left[ \sum_{m=1}^{N} \xi_{im}\dfrac{\partial \pi_{m}}{\partial r}-\sum_{j=1}^{K}\sum_{m=1}^{N}s_{j}\xi_{jm}\dfrac{\partial \pi_{m}}{\partial r} \right].$$
Rewriting this in terms of patterns $\Xi_{k}$, we get
$$\dfrac{\partial s_{i}}{\partial r}=\beta s_{i} \left[ \Xi_{i}-\sum_{k=1}^{K}s_{k}\Xi_{k} \right] \cdot \dfrac{\partial \pi}{\partial r}.$$
Therefore the derivative of our maximal eigenvalue is given by 
$$\dfrac{\partial \lambda_{max}(r)}{\partial r}=\lambda_{\max}\beta\sum_{i=1}^{K}\left[ (1-Ks_{i})\Xi_{i} \right] \cdot \dfrac{\partial \pi}{\partial r}.$$
And so we have a turning point, and thus we predict a separatrix when
\begin{equation} \label{eqn:separatrices_arbitr_path}
 0=\sum_{i=1}^{K}\left[ (1-Ks_{i})\Xi_{i} \right] \cdot \dfrac{\partial \pi}{\partial r}.
\end{equation}
We note that when $\pi(r)$ is just a straight line between $p$ and $q$, we have
\begin{equation} \label{eqn:separatrices_straight_line}
 0=\sum_{i=1}^{K}\left[ (1-Ks_{i})\Xi_{i} \right] \cdot (p-q).
\end{equation}

\subsection{Separatrix between two patterns}

\par In this subsection we aim to derive the equation approximating the separatrix, between two patterns. Suppose we have two patterns $\Xi_{i},\Xi_{k}$, with the angle between them being $\angle(\Xi_{i},\Xi_{k})=\mu$ (note that this will also be the relevant effective dynamics when the softmax entries of the other patterns are small). Further let us assume that $\beta$ is such that the bifurcation separating $\Xi_{i}$ from $\Xi_{k}$ occurred. As before we assume that the patterns have norm unity.
From Eq.~\ref{eqn:separatrices_straight_line}, we predict that a point on our separatrix must satisfy
$$0=\left[ (1-2s_{i})\Xi_{i}+ (1-2s_{k})\Xi_{k}\right] \cdot (p-q).$$
And so $0=\left[ (s_{k}-s_{i})\Xi_{i}+ (s_{i}-s_{k})\Xi_{k}\right] \cdot (p-q)=(s_{k}-s_{i})(\Xi_{i}-\Xi_{k}) \cdot (p-q)$.
Therefore provided $\Xi_{i}$ and $\Xi_{k}$ are linearly independent,
we have that $s_{i}=s_{k}$ and so our estimate for the separatrix equation is 
$$w_{i}+ \beta \Xi_{i} \cdot \textbf{x} = w_{k} + \beta \Xi_{k} \cdot \textbf{x}$$

\par As such we get that any point on the separatrix must satisfy $\textbf{x} \cdot (\Xi_{i}-\Xi_{k})=-(w_{i}-w_{k})/\beta$, which we recognize as an equation for a hyperplane. 
\par To see how far from the mid-point between two patterns the separatrix, we look at $\textbf{x}=(\Xi_{i}+\Xi_{k})/2+r(\Xi_{i}-\Xi_{k})$, i.e. the straight line connecting the two patterns, we find that:
$$r||(\Xi_{i}-\Xi_{k})||^{2}=-\dfrac{w_{i}-w_{k}}{\beta}$$
and so 
$$r=-\dfrac{w_{i}-w_{k}}{\beta(2-2\cos(\mu))}$$
Therefore the offset from the middle point between two patterns ($(\Xi_{i}+\Xi_{k})/2$) will be given by
$$\Delta=\dfrac{(w_{i}-w_{k})||(\Xi_{i}-\Xi_{k})||}{\beta(2-2\cos(\mu))}.$$
Finally note that when we only have two patterns we may write $\Xi_{i}=(\cos(\mu/2),\sin(\mu/2)), \Xi_{r}=(\cos(\mu/2),-\sin(\mu/2))$ and thus
\begin{equation} \label{eqn:offset_of_separatrix}
 \Delta=\dfrac{(w_{i}-w_{k})\sin(\mu/2)}{\beta(1-\cos(\mu))},
\end{equation}
where the offset moves the separatrix $\Delta$ closer to $\Xi_{k}$. A comparison between our analytical prediction and numerical results, and an illustration of the w-modulated offset can be seen in Figure \ref{fig:w_moves_separatrix}.

\begin{figure}[H]
 \centering
 \includegraphics[width=0.98\linewidth]{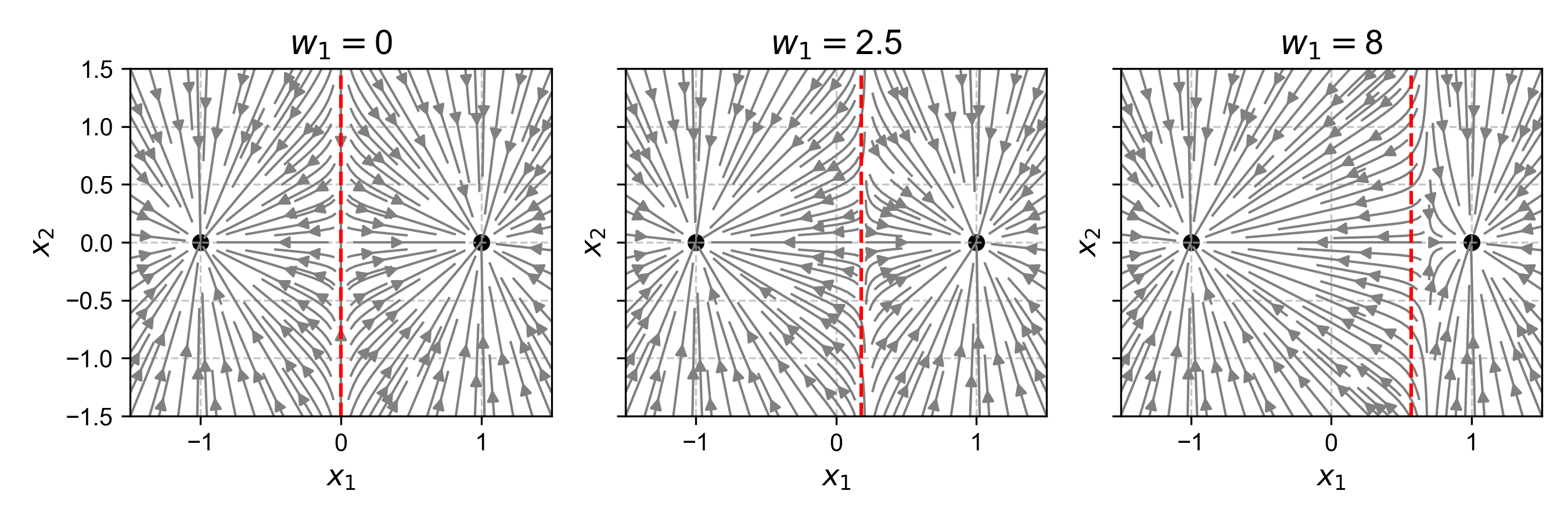}
\caption{\textbf{An example of how modulating $w_{1}$ moves the separatrix between two patterns} In this example we have 2 patterns, represented by black dots. $\Xi_{1}$ is the pattern on the left. The red dashed line represents the analytic estimate for the separatrix. In this simulation $\beta=7$, $\textbf{w}=(w_{1},0)$.}
 \label{fig:w_moves_separatrix}
\end{figure}
\subsubsection{Effect of a far away pattern on the separatrix}
\par In this section, we will show that an addition of a pattern that is far away from the two patterns doesn't affect the position of the separatrix between the two close patterns. In this section we suppose that all patterns are of unit magnitude. Consider then two isolated patterns, $\Xi_{i},\Xi_{k}$, with an angle $\mu$ between them and a pattern far away, $\Xi_{r}$, say. As in section \ref{section:Dynamcis_vicinity}, we suppose that $\mu \ll \angle(\Xi_{i},\Xi_{r}),\angle(\Xi_{k},\Xi_{r}))$. We further suppose that $\beta$ is large enough so that the bifurcation separating $\Xi_{i}$ and $\Xi_{k}$ occurred. 
\par To see this let us consider our separatrix equation given by Eq.~\ref{eqn:separatrices_straight_line}, when $\pi$ is just a straight line between $p=\Xi_{i}$ and $q=\Xi_{k}$, i.e. we are looking at the straight line between the two patterns. We have
$$ 0=\sum_{m=1}^{K}\left[ (1-Ks_{m})\Xi_{m} \right] \cdot (\Xi_{i}-\Xi_{k}).$$
As $K=3$, in our case, expanding we get
$$ 0=\left[(1-3s_{i})\Xi_{i}+(1-3s_{k})\Xi_{k}+(1-3s_{r})\Xi_{r} \right] \cdot (\Xi_{i}-\Xi_{k}).$$
Now as our patterns have a unity magnitude, we get
$$0=(1-3s_{i})(1-\cos(\mu))+(1-3s_{k})(\cos(\mu)-1)+(1-3s_{r})(\cos(\angle(\Xi_{i},\Xi_{r}))-\cos(\angle(\Xi_{k},\Xi_{r}))),$$
and so simplifying we have
\begin{equation} \label{eqn:calc_of_sep_angles_three}
    0=3(s_{k}-s_{i})(1-\cos(\mu))+(1-3s_{r})(\cos(\angle(\Xi_{i},\Xi_{r}))-\cos(\angle(\Xi_{k},\Xi_{r}))).
\end{equation}
Now as $\Xi_{r}$ is the isolated pattern, we will have that $\angle(\Xi_{i},\Xi_{r}) \approx \angle(\Xi_{k},\Xi_{r})$ and thus Eq.~\ref{eqn:calc_of_sep_angles_three} becomes
$$ 0\approx3(s_{k}-s_{i})(1-\cos(\mu)),$$
and so $s_{k}=s_{i}$, and the position of the separatrix is thus the same as in the case where we only considered two patterns.
\subsection{Control of progenitor basins by modulation of $\textbf{w}$}

\par Consider now modulating one component of $\textbf{w}$, $w_{i}$. And suppose we coarse-grain the patterns with the indices given by $\mathcal{L}$ such that, $i \in \mathcal{L}$. Then comparing the separatrix between the coarse-grained pattern, $\Xi^{*}$ and another pattern $\Xi_{r}$, we have from Eq.~\ref{eqn:offset_of_separatrix} that the offset from the separatrix is given by 
$$\Delta=\dfrac{(w^{*}-w_{k})\sin(\mu/2)}{\beta(1-\cos(\mu))}=\dfrac{(\log(\sum_{j \in \mathcal{L}}e^{w_{j}})-w_{k})\sin(\mu/2)}{\beta(1-\cos(\mu))},$$
where $\mu$ is the angle between $\Xi^{*}$ and $\Xi_{k}$. Note that we have the following bounds on $\log(\sum_{j \in \mathcal{L}}e^{w_{j}})$, namely 
$$ \max_{j \in \mathcal{L}}(w_{j})\leq \log(\sum_{j \in \mathcal{L}}e^{w_{j}}) \leq \max_{j \in \mathcal{L}}(w_{j})+\log(|\mathcal{L}|)$$

Now consider increasing $w_{i}$, starting from $0$, we see that while $w_{i}< \max_{j \in \mathcal{L}}(w_{j})$, it makes negligible contribution to the position of the separatrix of the progenitor pattern. However once $w_{i}=\max(w_{j})$, then it is the main contributor to the position of the separatrix and we have that
$$\Delta \geq \dfrac{(w_{i}-w_{k})\sin(\mu/2)}{\beta(1-\cos(\mu))}.$$
This is precisely what we see in our simulations. We present an example of the modulation of the separatrix of the coarse-grained pattern by $\textbf{w}$ in Figure \ref{fig:w_modul_annealing}.
\begin{figure}[H]
 \centering
 \includegraphics[width=0.6\linewidth]{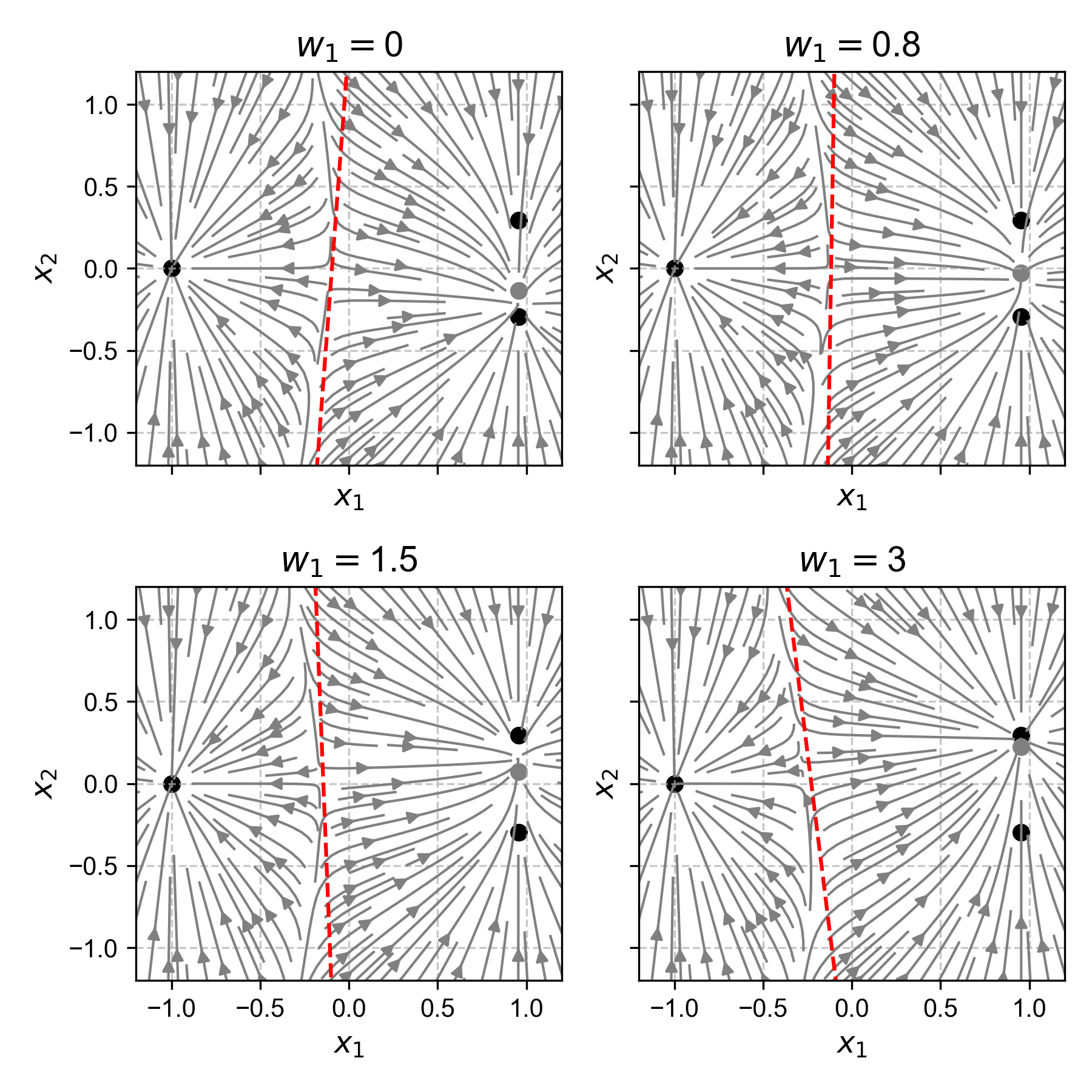}
\caption{\textbf{An example of how modulating $w_{1}$ moves the separatrix of the coarse-grained pattern} In this example we have 3 patterns, represented by black dots, with a coarse-grained pattern (gray dot), stemming from $\Xi_{1}$ and $\Xi_{2}$. $\Xi_{1}$ is the upper right pattern. The red dashed line represents the analytic estimate for the separatrix. In this simulation $\beta=7$, $\textbf{w}=(w_{1},1,0)$. As seen $w_{1}$ has little effect on the position of the separatrix until $w_{1}>w_{2}=1$.}
 \label{fig:w_modul_annealing}
\end{figure}

\section{Local instability and the calculation of $\beta_{\text{crit}}$ for equal-angle patterns}

Consider a set of $K$ patterns $\Xi$ of dimension $N$ for which each pattern is of magnitude unity and at angle $\mu$ relative to all other patterns. Furthermore we assume that $\textbf{w}=0$. As per Section B, since the dynamics are rotationally invariant, we can assume, without loss of generality, that these patterns are given by \emph{any} $K$ dimensional set of vectors for which:
\begin{equation}\label{eqn:equal_pattern_dot_product}
 \Xi_i \cdot \Xi_j = \begin{cases}
 \cos{\mu} & \text{if}\ i\neq j \\
 1 & i=j
 \end{cases}
\end{equation}
as an example, in two dimensions, we have that:
\begin{equation}
 \Xi_1 = (1,0),\; \Xi_2=(\cos{\mu},\sin{\mu})
\end{equation}
while in three dimensions, we have that 
\begin{equation}
 \Xi_1 = (1,0,0),\; \Xi_2=(\cos{\mu},\sin{\mu},0),\; \Xi_3=\left(\cos{\mu},\cos{\mu}\tan{\left(\frac{\mu}{2}\right)},\sqrt{(2 \cos{\mu}+1) \tan ^2\left(\frac{\mu}{2}\right)}\right)
\end{equation}
we further assume that the weights are all identical, and thus the system is entirely symmetric. In this case, our "progenitor" state corresponds to:
\begin{equation}
 \Xi^* = \frac{1}{K}\sum_i \Xi_i
\end{equation}
What is the stability of $\Xi^*$? We can address this by linear stability analysis, by estimating the Jacobian of Eq.~\ref{eq:enhancernet} around $\textbf{x}=\Xi^*$. Since the dynamics are associated with a potential function (symmetric Jacobian) the eigenvalues are real at any evaluated point, and $\Xi^*$ may be a stable fixed point only if all its associated eigenvalues are negative. For two equidistant patterns, these eigenvalues are given by $\lambda_{1,2}=-1,\frac{1}{2} (\beta(1-\cos{\mu } ) -2)$, and there is thus a bifurcation at $\beta(1-\cos{\mu})=2$ (that is, around $\beta \mu^2 = 4)$. For three patterns, the eigenvalues are $\lambda_{1,2,3}=-1,\frac{1}{3} (\beta(1-\cos{\mu } ) -3),\frac{1}{3} (\beta(1-\cos{\mu } ) -3)$, and the bifurcation occurs at around $\beta (1-\cos{\mu}) = 3$. Similarly, for $K$ patterns, we expect the bifurcation to occur at the critical $\beta (1-\cos\mu)=K$, which we indeed verify numerically. Thus, a transition from progenitor to terminal pattern, for an isolated subset of $K$ patterns, occurs at a critical $\beta_{\text{crit}}\approx 2K \mu^{-2}$.

\par In fact we can analytically derive the critical value of $\beta$, $\beta_{\text{crit}}$ for which the bifurcation occurs. The rest of this section is dedicated to showing this result.
\begin{proposition} \label{prop:bifur_equal_angle_patterns}
 For $K$ patterns, with all having an angle $\mu$ between each other, as above, all the eigenvalues of the Jacobian of Eq.~\ref{eq:enhancernet} evaluated at $\textbf{x}=\Xi^{*}$ are negative, provided 
 $$\beta(1-\cos(\mu))<K,$$
and $\cos(\mu) \neq -1/(K-1)$.
\end{proposition}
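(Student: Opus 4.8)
The plan is to reduce the statement to positive-definiteness of the Hessian of the energy function at the progenitor. In the symmetric case at hand ($\mathbf{Q}=\Xi$, $\mathbf{w}=\mathbf{0}$) the right-hand side of Eq.~\ref{eq:enhancernet} equals $-\nabla V$, so the Jacobian of the dynamics at $\Xi^{*}$ is $-D^{2}V(\Xi^{*})$, whose eigenvalues are all negative exactly when $D^{2}V(\Xi^{*})\succ 0$. I would work from the closed form $D^{2}V(\mathbf{x})=I-\beta\,\Xi^{T}J_{s}(\beta\Xi\mathbf{x})\,\Xi$ of Eq.~\ref{eqn:hessian_full}.

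The decisive first step is to note that the softmax is \emph{uniform} at $\Xi^{*}$: since $[\Xi\Xi^{*}]_{i}=\Xi_{i}\cdot\Xi^{*}=\tfrac{1}{K}\sum_{j}\Xi_{i}\cdot\Xi_{j}=\tfrac{1}{K}\bigl(1+(K-1)\cos\mu\bigr)$ is independent of $i$, the vector $\beta\Xi\Xi^{*}$ has all entries equal, hence $s:=\text{softmax}(\beta\Xi\Xi^{*})=\tfrac{1}{K}\mathbf{1}$. Then $J_{s}=D_{s}-s\otimes s=\tfrac{1}{K}\bigl(I_{K}-\tfrac{1}{K}\mathbf{1}\mathbf{1}^{T}\bigr)$, and using $\Xi^{T}\mathbf{1}=\sum_{i}\Xi_{i}=K\Xi^{*}$ we get
\[
 D^{2}V(\Xi^{*})=I-\frac{\beta}{K}\Xi^{T}\Xi+\beta\,\Xi^{*}(\Xi^{*})^{T}.
\]

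Next I would diagonalize this matrix explicitly using the splitting $\mathbb{R}^{N}=\mathbb{R}u\oplus(W\cap u^{\perp})\oplus W^{\perp}$, with $W=\text{span}(\Xi_{1},\dots,\Xi_{K})$ and $u=\Xi^{*}/\lVert\Xi^{*}\rVert$. On $W^{\perp}$ both $\Xi^{T}\Xi$ and $\Xi^{*}(\Xi^{*})^{T}$ vanish, giving eigenvalue $1$. On $W$ the spectrum of $\Xi^{T}\Xi$ coincides (via $v\mapsto\Xi^{T}v$) with that of the Gram matrix $G=\Xi\Xi^{T}=(1-\cos\mu)I_{K}+\cos\mu\,\mathbf{1}\mathbf{1}^{T}$: its eigenvalue $1+(K-1)\cos\mu$ lies along $\Xi^{T}\mathbf{1}=K\Xi^{*}$, i.e.\ along $u$, and its eigenvalue $1-\cos\mu$ fills the remaining $(K-1)$-dimensional space $W\cap u^{\perp}$; meanwhile $\Xi^{*}(\Xi^{*})^{T}$ contributes only $\lVert\Xi^{*}\rVert^{2}=\tfrac{1}{K}\bigl(1+(K-1)\cos\mu\bigr)$ along $u$. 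Substituting, the two $\beta$-terms cancel exactly along $u$ (eigenvalue $1$), while on $W\cap u^{\perp}$ the eigenvalue is $1-\tfrac{\beta(1-\cos\mu)}{K}$. So $D^{2}V(\Xi^{*})$ has eigenvalue $1$ (multiplicity $N-K+1$) and $1-\tfrac{\beta(1-\cos\mu)}{K}$ (multiplicity $K-1$); it is positive definite iff $\beta(1-\cos\mu)<K$, equivalently the Jacobian of the dynamics has eigenvalues $-1$ and $\tfrac{\beta(1-\cos\mu)}{K}-1$, all negative under the stated bound, matching the $K=2,3$ eigenvalues quoted above.

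The only place the hypothesis $\cos\mu\neq -1/(K-1)$ is used — and the main point to handle carefully — is the claim $\dim W=K$, needed so that the splitting above has the stated dimensions and $u$ is well defined; equivalently, that the patterns are linearly independent and $\Xi^{*}\neq\mathbf{0}$. Since a real family of unit vectors with the prescribed pairwise inner products exists only when $G\succeq 0$, i.e.\ $\cos\mu\geq -1/(K-1)$, the strict inequality forces $G\succ 0$ and hence linear independence; one may also first reduce to $N=K$ with $\Xi$ invertible via the rotational invariance of Eq.~\ref{eq:enhancernet}. Everything else is routine; the substance is the uniform-softmax observation and the exact cancellation along $u$.
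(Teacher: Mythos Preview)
Your argument is correct and in fact cleaner than the paper's. The paper establishes the same uniform-softmax fact $s=\tfrac{1}{K}\mathbf{1}$ at $\Xi^{*}$, but then proceeds via the variational characterisation of the top eigenvalue of $\Xi^{T}J_{s}\Xi$: writing an arbitrary unit vector $v=\sum_{m}\alpha_{m}\Xi_{m}$ in the pattern basis (using the linear-independence lemma), expanding $v^{T}\Xi^{T}J_{s}\Xi v$ as a function of $A=\sum_{m}\alpha_{m}$ and $S=\sum_{m}\alpha_{m}^{2}$, eliminating $S$ via the constraint $\lVert v\rVert=1$, and maximising over $A$ to obtain $\max=(1-\cos\mu)/K$. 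Your route instead plugs $J_{s}=\tfrac{1}{K}\bigl(I_{K}-\tfrac{1}{K}\mathbf{1}\mathbf{1}^{T}\bigr)$ directly into Eq.~\ref{eqn:hessian_full}, reads off $D^{2}V(\Xi^{*})=I-\tfrac{\beta}{K}\Xi^{T}\Xi+\beta\,\Xi^{*}(\Xi^{*})^{T}$, and diagonalises this via the Gram-matrix spectrum and the orthogonal splitting $\mathbb{R}u\oplus(W\cap u^{\perp})\oplus W^{\perp}$; the exact cancellation along $u$ and the Gram eigenvalue $1-\cos\mu$ on $W\cap u^{\perp}$ then give the full eigenvalue list $\{1,\,1-\tfrac{\beta(1-\cos\mu)}{K}\}$ with multiplicities $N-K+1$ and $K-1$. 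Both proofs invoke $\cos\mu\neq-1/(K-1)$ only to secure linear independence of the patterns (the paper via an explicit lemma, you via positivity of the Gram matrix). What your approach buys is transparency and the explicit multiplicities without any algebra in $A,S$; what the paper's buys is that it never appeals to the $\Xi\Xi^{T}\leftrightarrow\Xi^{T}\Xi$ spectral correspondence, trading that for a longer but entirely elementary computation.
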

To prove this we will utilise the following lemma.
\begin{lemma} \label{lemma:lin_indep_pattern}
If a collection of patterns $\Xi_{1},...,\Xi_{K},$ satisfy Eq.~\ref{eqn:equal_pattern_dot_product}, then provided 
$$\cos(\mu) \neq -\dfrac{1}{(K-1)}$$
holds, the collection of patterns is linearly independent.
\end{lemma}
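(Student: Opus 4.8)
The plan is to show linear independence by computing the Gram matrix $G$ of the patterns and showing it is nonsingular under the stated hypothesis. By Eq.~\ref{eqn:equal_pattern_dot_product}, the $K\times K$ Gram matrix has the explicit form $G = (1-\cos\mu) I + \cos\mu\, \mathbf{1}\mathbf{1}^{T}$, where $\mathbf{1}$ is the all-ones vector in $\mathbb{R}^{K}$. Since a collection of vectors is linearly independent if and only if its Gram matrix is invertible (equivalently, has nonzero determinant), it suffices to compute $\det G$ and check when it vanishes.

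First I would diagonalize $G$ using the fact that it is a rank-one perturbation of a multiple of the identity. The vector $\mathbf{1}$ is an eigenvector of $\mathbf{1}\mathbf{1}^{T}$ with eigenvalue $K$, and any vector orthogonal to $\mathbf{1}$ is in the kernel of $\mathbf{1}\mathbf{1}^{T}$. Hence $G$ has eigenvalue $(1-\cos\mu) + K\cos\mu = 1 + (K-1)\cos\mu$ with multiplicity $1$ (eigenvector $\mathbf{1}$), and eigenvalue $1-\cos\mu$ with multiplicity $K-1$ (the orthogonal complement). Therefore
$$\det G = \bigl(1 + (K-1)\cos\mu\bigr)\,(1-\cos\mu)^{K-1}.$$
This determinant is nonzero precisely when $1 + (K-1)\cos\mu \neq 0$ and $1 - \cos\mu \neq 0$. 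The first condition is exactly the hypothesis $\cos\mu \neq -1/(K-1)$. The second holds automatically as long as $\mu \neq 0$, i.e. the patterns are genuinely at a positive angle (if $\mu = 0$ all patterns coincide and the statement is vacuous or false, so we may assume $\mu \in (0,\pi]$).

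There is no substantive obstacle here — the only thing to be slightly careful about is the degenerate case $\cos\mu = 1$ (all patterns identical), which should be excluded from the outset since it trivially gives a dependent set, and to note that the eigenvalue computation for the rank-one update is standard. Assembling these: under $\cos\mu \neq -1/(K-1)$ and $\mu \neq 0$, we have $\det G \neq 0$, so $G$ is invertible, so $\Xi_{1},\dots,\Xi_{K}$ are linearly independent, which is the claim.
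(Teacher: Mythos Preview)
Your proof is correct. Both your argument and the paper's hinge on the Gram matrix $G_{ij}=\Xi_i\cdot\Xi_j$, but the routes diverge after that. The paper proceeds elementarily: it writes out the system $Gc=0$ coming from a putative dependence relation, sums the equations to obtain $c(1+(K-1)\cos\mu)=0$ where $c=\sum_j c_j$, uses the hypothesis to force $c=0$, and then reads off $c_i(1-\cos\mu)=0$ for each $i$. You instead recognize $G=(1-\cos\mu)I+\cos\mu\,\mathbf{1}\mathbf{1}^{T}$ as a rank-one perturbation of a scalar matrix, compute its spectrum directly, and conclude via $\det G\neq 0$. Your approach is slightly more structural and immediately yields the full eigenvalue list (which is useful information, and in fact reappears later in the paper's analysis of the Hessian at $\Xi^{*}$); the paper's approach is more hands-on and avoids invoking any spectral theory. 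Both correctly flag the degenerate case $\cos\mu=1$ as needing separate exclusion.
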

\begin{proof}
Consider the equation $c_{1}\Xi_{1}+...+c_{K}\Xi_{K}=0$, by looking at the inner product with $\Xi_{i}$, we have from Eq.~\ref{eqn:equal_pattern_dot_product}
$$1+\sum_{ j \neq i } c_{j}\cos(\mu)=0.$$
Now let $c=\sum_{j=1}^{K}c_{j},$ then our equation becomes
\begin{equation} \label{eqn:lin_indep_calc}
 c_{i}+(c-c_{i})\cos(\mu)=0.
\end{equation}
Note that this is true for all $i \in \{1,...,K\}$, so summing over the indices we have $c+(Kc-c)\cos(\mu)=0,$ and so
$$c(1+(K-1)\cos(\mu))=0.$$
Thus we must have either $c=0$ or $\cos(\mu)=-1/(K-1)$ and the second case is excluded due to our assumption. So we have $c=0$ and so plugging this into Eq.~\ref{eqn:lin_indep_calc}, we get
$$c_{i}(1-\cos(\mu))=0.$$
As $\cos(\mu)=1$ implies $\mu=0$ (i.e. there is no angle between patterns and so there is only one pattern), and we have $\mu >0$, we conclude that $c_{i}=0$ and so our patterns are linearly independent. 
\end{proof}
We remark that the case that is excluded from the lemma above, namely $\cos(\mu)=-1/(K-1)$ cannot give linear independence. In that case $\Xi^{*}=0$ and so we have a linear dependence of patterns \footnote{However as mentioned previously, as we're in the symmetric case of $\Xi=\mathbf{Q}$, our singular value decomposition would give $0$ as one of the singular values, as we've just seen that in this case the rank of $\Xi$ is $K-1$, and we could further reduce the dimension of our problem by one}. With the lemma in mind we return back to the proof of Proposition \ref{prop:bifur_equal_angle_patterns}.
\begin{proof}[Proof of Proposition \ref{prop:bifur_equal_angle_patterns}]
We are interested in the stability of the progenitor pattern 
$$\Xi^{*}=\dfrac{1}{K}\sum_{i=1}^{K}\Xi_{i}.$$
Recall that we are working in the case, when $\mathbf{Q}=\Xi$, and so we have an energy functional given by Eq.~\ref{eqn:energy_function}. In this case, we know that the Hessian of the energy function (which is the negative of the Jacobian of the dynamics) is given by Eq.~\ref{eqn:hessian_full}, i.e.
$$D^{2}V(\textbf{x})=I-\beta\Xi^{T}J_{s}(\textbf{w}+\beta\Xi \textbf{x})\Xi.$$

\par Now recall that we are assuming $\textbf{w}=\textbf{0}$ and we are interested at $\textbf{x}=\Xi^{*}$. Looking then at the $i^{\text{th}}$ component of the softmax we have
$$[\text{softmax}(\textbf{w}+\beta\Xi \textbf{x})]_{i}=\dfrac{e^{\beta\Xi_{i}\cdot \Xi^{*}}}{Z}=\dfrac{e^{\beta(1+\cos(\mu)(K-1)/K)}}{Z}.$$
In particular this shows that all the components of the softmax are equal at $\textbf{x}=\Xi^{*}$ and thus as it is a probability vector
\begin{equation} \label{eqn:softmax_entries_are_equal}
 [\text{softmax}(\textbf{w}+\beta\Xi \textbf{x})]_{i}=\dfrac{1}{K}.
\end{equation}
Note that the eigenvalues of a real symmetric matrix are all positive if and only if the minimal eigenvalue is positive. Thus, as in Section D, we employ a variational characterisation of the maximal eigenvalue, i.e. the eigenvalues of the Jacobian of our dynamics at $\Xi^{*}$ are all negative (meaning the Hessian of the energy is positive definite) if anf only if Eq.~\ref{eqn:bound_on_quad_var} holds at $\textbf{x}=\Xi^{*}$, namely
$$\max_{v \in \mathbb{R}^{K}}[\dfrac{v^{T}\Xi^{T}J_{s}\Xi v}{v^{T}v}]<\dfrac{1}{\beta},$$
which is equivalent to the condition
\begin{equation} \label{eqn:condition_with_v_unity}
 \max_{v \in \mathbb{R}^{K}, ||v||=1}[v^{T}\Xi^{T}J_{s}\Xi v]<\dfrac{1}{\beta},
\end{equation}
so now throughout we will assume $||v||=1$. If we set $u=\Xi v,$ we have 
$$v^{T}\Xi^{T}J_{s}\Xi v=u^{T}J_{s}u=u^{T}(D_{s}-s\otimes s)u=\sum_{r=1}^{K}u_{r}^{2}s_{r}-\left(\sum_{r=1}^{K}u_{r}s_{r}\right)^{2},$$
where $s_{i}$ is the $i^{\text{th}}$ component of the softmax at $\textbf{x}=\Xi^{*}$. Now by Eq.~\ref{eqn:softmax_entries_are_equal} we get
\begin{equation} \label{eqn:variance_of_u}
 v^{T}\Xi^{T}J_{s}\Xi v=\dfrac{1}{K}\sum_{r=1}^{K}u_{r}^{2}-\dfrac{1}{K^{2}}\left(\sum_{r=1}^{K}u_{r}\right)^{2}.
\end{equation}
$$$$
Now utilising the assumption that $\cos(\mu)\neq -1/(K-1)$, we have by Lemma \ref{lemma:lin_indep_pattern} that $\Xi_{1},...,\Xi_{K}$ are linear independent, and thus form a basis (as we have $K$ dimensional vector space) so we may write
$$v=\sum_{m=1}^{K}\alpha_{m}\Xi_{m},$$
and so 
$$u_{i}=[\Xi v]_{i}=\Xi_{i} \cdot v = \sum_{m=1}^{K} \alpha_{m} \Xi_{i} \cdot \Xi_{m}=\alpha_{i}+\cos(\mu)\sum_{m\neq i} \alpha_{m}=\alpha_{i}+\cos(\mu)(A-\alpha_{i}),$$
where $A=\sum_{m=1}^{K}\alpha_{i}$. Plugging this back into Eq.~\ref{eqn:variance_of_u}, we have
$$v^{T}\Xi^{T}J_{s}\Xi v=\dfrac{1}{K}\sum_{r=1}^{K}(\alpha_{r}+\cos(\mu)(A-\alpha_{r}))^{2}-\dfrac{1}{K^{2}}\left(\sum_{r=1}^{K}(\alpha_{r}+\cos(\mu)(A-\alpha_{r}))\right)^{2}.$$
So setting $S=\sum_{m=1}^{K}\alpha_{m}^{2}$, and expanding we have
$$v^{T}\Xi^{T}J_{s}\Xi v=\dfrac{1}{K}\sum_{r=1}^{K}\left[\alpha_{r}^{2}+2\alpha_{r}\cos(\mu)(A-\alpha_{r}) +\cos^{2}(\mu)(A-\alpha_{r})^{2}\right]-\dfrac{1}{K^{2}}\left[A+\cos(\mu)(AK-A)\right]^{2},$$
and thus
$$ v^{T}\Xi^{T}J_{s}\Xi v=\dfrac{1}{K}[S+2\cos(\mu)(A^{2}-S)+\cos^{2}(\mu)(A^{2}K-2A^{2}+S)]-\dfrac{1}{K^{2}}\left[A+\cos(\mu)A(K-1)\right]^{2}.$$
Expanding further and simplifying we get
$$ v^{T}\Xi^{T}J_{s}\Xi v=\dfrac{1}{K}[S(1-2\cos(\mu)+\cos(\mu)^{2})+A^{2}K\cos^{2}(\mu)]-\dfrac{A^{2}}{K^{2}}\left[1+(K-1)\cos(\mu)\right]^{2},$$
giving
\begin{equation} \label{eqn: calculation of variational eigenvalue both A and S}
 v^{T}\Xi^{T}J_{s}\Xi v=\dfrac{1}{K}[S(1-\cos(\mu))^{2}+A^{2}K\cos^{2}(\mu)]-\dfrac{A^{2}}{K^{2}}\left[1+(K-1)\cos(\mu)\right]^{2}.
\end{equation}
Before proceeding further with this calculation, recall that we are interested in the vectors such that $||v||=1$ and thus
$$1=||v||^{2}=v \cdot v=\left( \sum_{l=1}^{K} \alpha_{l} \Xi_{l}\right)\cdot \left(\sum_{m=1}^{K}\alpha_{m}\Xi_{m}\right)=\sum_{l=1}^{K}\sum_{m=1}^{K} \alpha_{l}\alpha_{m} \Xi_{l} \cdot \Xi_{m}.$$
Thus
$$1=\sum_{l=1}^{K}\left(\alpha_{l}^{2}+\alpha_{l}\sum_{m\neq l}^{K}\alpha_{m} \cos(\mu) \right)=S+\sum_{l=1}^{K}\alpha_{l}\cos(\mu)(A-\alpha_{l})=S+\cos(\mu)(A^{2}-S).$$
And as such we have
$$S=\dfrac{1-A^{2}\cos(\mu)}{1-\cos(\mu)}.$$
Plugging this back into Eq.~\ref{eqn: calculation of variational eigenvalue both A and S},
$$v^{T}\Xi^{T}J_{s}\Xi v=\dfrac{1}{K}[(1-A^{2}\cos(\mu))(1-\cos(\mu))+A^{2}K\cos^{2}(\mu)]-\dfrac{A^{2}}{K^{2}}\left[1+(K-1)\cos(\mu)\right]^{2}.$$
Simplifying this we get
\begin{align*}
v^{T}\Xi^{T}J_{s}\Xi v &= \dfrac{1}{K}[(1-A^{2}\cos(\mu))(1-\cos(\mu))]+A^{2}\cos^{2}(\mu)-\dfrac{A^{2}}{K^{2}}\left[1+(K-1)\cos(\mu)\right]^{2}, \\
&=\dfrac{1}{K}[(1-A^{2}\cos(\mu))(1-\cos(\mu))]+\dfrac{A^{2}}{K^{2}}\left[K^{2}\cos(\mu)^{2}-1+2(K-1)\cos(\mu)-(K-1)^{2}\cos(\mu)^{2}\right], \\
&=\dfrac{(1-\cos(\mu))}{K}-\dfrac{(A^{2}\cos(\mu))(1-\cos(\mu))}{K}+\dfrac{A^{2}}{K^{2}}\left[-1+2(K-1)\cos(\mu)+(2K-1)\cos(\mu)^{2}\right], \\
&=\dfrac{(1-\cos(\mu))}{K}+\dfrac{A^{2}}{K^{2}}\left[-1+(K-2)\cos(\mu)+(3K-1)\cos(\mu)^{2}\right].
\end{align*}
Recall that by Eq.~\ref{eqn:condition_with_v_unity} we want to maximise $v^{T}\Xi^{T}J_{s}\Xi v$ and as such, since $A$ is the only variable at play, this is achieved when $\partial (v^{T}\Xi^{T}J_{s}\Xi v)/\partial A =0,$ i.e. precisely when $A=0$, and thus condition(\ref{eqn:condition_with_v_unity}) now gives
$$\dfrac{1-\cos(\mu)}{K}<\dfrac{1}{\beta}.$$
Therefore whenever this condition holds, the maximal eigenvalue of the Jacobian of our dynamics is negative, and hence all eigenvalues are negative. 
\end{proof}
From this proposition we deuce that when $K=\beta(1-\cos(\mu))$, one of the eigenvalues of the Jacobian of the dynamics, becomes zero. Hence we get a change in the dynamical behaviour and thus a bifurcation.

\section{Simulation}

\subsection{Simulation of annealing in blood production}

Data preparation of gene expression profiles of haematopoietic lineages was based on mouse RNA sequencing data from Haemopedia \cite{choi2019haemopedia} and preprocessed as in Karin \cite{karin2024enhancernet}. Annealing was executed as follows. The dynamics were given by the noisy version of Eq.~\ref{eq:enhancernet}:

\begin{equation}
 d\textbf{x} = \left(\Xi^T \text{softmax}\left(\beta \Xi \textbf{x} + \textbf{w} \right) - \textbf{x} \right)dt + \sigma dW,
 \label{eq:enhancernetnoise}
\end{equation}
where $W$ is a Wiener process and $\sigma$ corresponds to the noise magnitude (set to $\sigma=0.05$). Simulation was performed using the Euler-Maruyama procedure. Annealing was performed with initial conditions as $\textbf{x}(0)=\text{softmax}(\textbf{w}) + \zeta$ where $\zeta$ is a random vector of length $N$ with entries drawn from a uniform distribution over $[0,0.1]$. Annealing itself was executed by a gradual increase of $\beta$ from $\beta=0$ to $\beta=50$ over $t=20$ time units. 

To estimate $\textbf{w}$ for balanced differentiation, we used proportional feedback, namely, after each annealing iteration (performed over $n=1000$ cellular instances), we calculated the error vector:
\begin{equation}
 \text{err} = \textbf{1} - K \rho
 \label{eq:eqerr}
\end{equation}
where $\rho$ is a $K$ dimensional vector (with an entry for each cell type) with its $i$-th entry corresponding to the relative prevalence of cell type $i$ in the annealing output. Note that a perfectly balanced annealing process would produce $\text{err}=\textbf{0}$; and that the control process can be adjusted to produce any output range by replacing $\textbf{1}$ in Eq.~\ref{eq:eqerr} by the desired output profile. The error was then fed to $\textbf{w}$ by the feedback procedure:
$\textbf{w}(t+1)\leftarrow \textbf{w}(t) + G_p \text{err}$
with $G_p$ denoting the feedback gain. In our simulations, the procedure was executed by running $100$ iterations at gain $G_p=0.1$, followed by 100 iterations at $G_p=0.01$ and 100 iterations at $G_p=0.001$.

\subsection{Simulation code}
The code for the simulation, as well as for generating all other figures in the manuscript, is available at \url{www.github.com/karin-lab/hierarchical_control_state_transitions}.

\newpage